\newtheorem{theorem}{Theorem}[section]
\newtheorem{lemma}[theorem]{Lemma}
\newtheorem{corollary}[theorem]{Corollary}
\newtheorem{proposition}[theorem]{Proposition}
\newtheorem{remark}[theorem]{Remark}
\newtheorem{definition}[theorem]{Definition}
\numberwithin{equation}{section}
\begin{document}

\title{\Large\bf Matrix Recovery from Rank-One Projection Measurements via Nonconvex Minimization \footnotetext{\hspace{-0.35cm}
\endgraf $^\ast$\,Corresponding author.
\endgraf {1.P. Li is with  Graduate School, China Academy of Engineering Physics, Beijing 100088, China (E-mail: lipeng16@gscaep.ac.cn)}
\endgraf{2.W. Chen is with Institute of Applied Physics and Computational Mathematics, Beijing 100088, China (E-mail: chenwg@iapcm.ac.cn)}
}}
\author{Peng Li$^{1*}$ and Wengu Chen$^2$}
\date{ }

\maketitle
\textbf{Abstract.} In this paper, we consider the matrix recovery from rank-one projection measurements proposed in [Cai and Zhang,  Ann. Statist., 43(2015), 102-138], via nonconvex minimization. We establish a sufficient identifiability condition, which can guarantee the exact recovery of low-rank matrix via Schatten-$p$ minimization $\min_{X}\|X\|_{S_p}^p$ for $0<p<1$ under affine constraint, and stable recovery of low-rank matrix under $\ell_q$ constraint and Dantzig selector constraint. Our condition is also sufficient to guarantee low-rank matrix recovery via least $q$ minimization $\min_{X}\|\mathcal{A}(X)-b\|_{q}^q$ for $0<q\leq1$. And we also extend our result to Gaussian design distribution, and show that any matrix can be stably recovered for rank-one projection from Gaussian distributions via least $1$ minimization with high probability.

\textbf{Key Words and Phrases.} Low-rank matrix recovery, Rank-one projection, $\ell_q$-Restricted uniform boundedness, Schatten-$p$ minimization, Least $q$ minimization.

\textbf{MSC 2010}. {62H10, 62H12,90C26, 94A08}






\section{Introduction}\label{s1}

\hskip\parindent

As is well known to us, a closely related problem to compressed sensing, which was initiated by Cand\`{e}s, Romberg and Tao's seminal works \cite{CRT2006,CRT2006-1} and Donoho's ground-breaking work \cite{D2006}, is the low-rank matrix recovery.
It aims to recover an unknown low-rank matrix based on
its affine transformation
\begin{align}\label{Matrixsystemequationsnoise}
b=\mathcal{A}(X)+z,
\end{align}
where $X\in\mathbb{R}^{m\times n}$ is the decision variable and the linear map $\mathcal{A}: \mathbb{R}^{m\times n}\rightarrow \mathbb{R}^L$, $z\in\mathbb{R}^L$ is a measurement error and $b\in\mathbb{R}^L$ is measurements. The linear map $\mathcal{A}$ can be equivalently specified by $L$ $m\times n$ measurement matrices $A_1,\ldots,A_L$ with
\begin{align}\label{Matrixsystemequations}
[\mathcal{A}(X)]_j=\langle A_j,X\rangle,
\end{align}
where the inner product of two matrices of the same dimensions is defined as
$\langle X, Y\rangle=\sum_{i,j}X_{ij}Y_{ij}=\text{trace}(X^{T}Y)$.
Low-rank matrices arise in an incredibly wide range of settings throughout science and applied mathematics. To name just a few examples, we commonly encounter low-rank matrices in contexts as varied as: ensembles
of signals \cite{DE2012,AR2015}, system identification \cite{LV2009}, adjacency matrices \cite{LLR1995}, distance matrices \cite{BLWY2006,BG2010,SY2007}, machine learning \cite{AEP2008,OTJ2010,D2004}, and quantum state tomography \cite{GLFBE2010,ABHMM2013}.

Let $\|X\|_{S_p}:=\|\sigma(X)\|_p=\big(\sum_{j=1}^{\min\{m,n\}}|\sigma_j|^p\big)^{1/p}$ denote the Schatten $p$-norm of the matrix $X$, where $\sigma(X)=(\sigma_1,\ldots,\sigma_{\min\{m,n\}})$ is the vector of singular values of the matrix $X$ and  $\sigma_j$ denote the $j$-th largest singular value of $X$. We should point out that $\|X\|_{S_1}=\|X\|_*$ the nuclear norm,
$\|X\|_{S_2}=\|X\|_F$ the Frobenius norm and $\|X\|_{S_{\infty}}=\|X\|$ the operator norm.
In 2010, Recht,  Fazel and Parrilo \cite{RFP2010} generalized the restricted isometry property (RIP) in \cite{CT2005,D2006} from vectors to matrices and showed that
if certain restricted isometry property holds for the linear transformation $\mathcal{A}$, then the low-rank solution can be recovered by solving the the nuclear norm minimization problem
\begin{align*}
\min_{X}\|X\|_{*}~~\text{subject~to~}\mathcal{A}(X)=b.
\end{align*}
Later, the low-rank matrix recovery problem has been studied by many scholars, readers can refer to \cite{CP2011,CZ2014,CZ2015,ZL2017} under matrix restricted isometry property, and \cite{FR2013,KKRT2016} under rank null space property.

In this paper, we consider one matrix recovery model with additional structural assumption-the rank-one projection (ROP), which introduced by Cai and Zhang \cite{CZ2015}. Under the ROP model, we observe
\begin{align}\label{ROP1}
b_{j}=(\beta^j)^{T}X\gamma^j+z_j,~~j=1,\ldots,L,
\end{align}
where $\beta^j$ and $\gamma^j$ are random vectors with entries independently drawn from some distribution $\mathcal{P}$, $z_j$ are random errors. In view of the linear map in (\ref{Matrixsystemequations}), it can be rewrite as
\begin{align}\label{ROP2}
b_{j}=[\mathcal{A}(X)]_j+z_j=\langle \beta^j(\gamma^j)^{T},X\rangle +z_j,~~j=1,\ldots,L,
\end{align}
i.e. $A_j=\beta^j(\gamma^j)^{T}$ for $j=1,\ldots,L$. For this model, Cai and Zhang proposed a constrained nuclear norm minimization method, which can be stated as follows
\begin{align}\label{MatrixS1}
\min_{X}~\|X\|_{S_1} ~~\text{subject~ to}~ ~b-\mathcal{A}(X)\in\mathcal{B}.
\end{align}
In the noiseless case, they took $\mathcal{B}=\{0\}$, i.e.,
\begin{align}\label{MatrixS1-Noiseless}
\min_{X}~\|X\|_{S_1} ~~\text{subject~ to}~ ~\mathcal{A}(X)=b.
\end{align}
And in the noisy case, they took $\mathcal{B}=\mathcal{B}^{\ell_1}(\eta_1)\cap\mathcal{B}^{DS}(\eta_2)$, i.e.,
\begin{align}\label{MatrixS1-Noise}
\min_{X}~\|X\|_{S_1} ~~\text{subject~ to}~ ~\|b-\mathcal{A}(X)\|_1/L\leq\eta_1,~~ \|\mathcal{A}^*(b-\mathcal{A}(X))\|_{S_\infty}\leq\eta_2.
\end{align}

However, the matrix RIP framework is not well suited for the ROP model and would lead to suboptimal results.
Cai and Zhang \cite{CZ2015} introduced the restricted uniform boundedness (RUB) condition (see Definition \ref{lqRUB}),
which can guarantee the exact recovery of low-rank matrices in the noiseless case and stable recovery in the noisy case through the constrained nuclear norm minimization. They also showed that the RUB condition are satisfied by
sub-Gaussian random linear maps with high probability.

The ROP model can be further simplified by taking $\beta^j=\gamma^j$ if the low-rank matrix $X$ is known to be symmetric. This can be found in many problem, for example, low-dimensional Euclidean embedding \cite{T2000,RFP2010} , phase retrieval \cite{CESV2013-2015, CSV2013, CL2013, LV2013} and  covariance matrix estimation \cite{CMW2013,CMW2015,CCG2015}. In this case, the ROP design can be simplified to symmetric rank-one projections (SROP)
\begin{align}\label{SROP}
b_{j}=\langle \beta^j(\beta^j)^{T}, X\rangle+z_j,~~j=1,\ldots,L.
\end{align}

Unfortunately, the original sampling operator $\mathcal{A}$ does not satisfy RUB. This occurs primarily because each measurement
matrix $A_i$ has non-zero mean, which biases the output measurements. In order to get rid of this undesired bias
effect, Cai and Zhang \cite{CZ2015} (see also \cite{CCG2015}) introduced a set of ``debiased" auxiliary measurement matrices as follows
\begin{align*}
\tilde{A}_j=A_{2j-1}-A_{2j}, ~j=1,\ldots,\bigg\lfloor\frac{L}{2}\bigg\rfloor.
\end{align*}
By this notation, we can define a linear map $\tilde{\mathcal{A}}:~\mathbb{S}^{m}\rightarrow \mathbb{R}^{\lfloor L/2 \rfloor}$ by
\begin{align}\label{SROP2}
\tilde{b}_j=[\tilde{\mathcal{A}}(X)]_j+\tilde{z}_j=\langle\tilde{A}_j, X\rangle+\tilde{z}_j,
\end{align}
where $\tilde{b}_j=b_{2j-1}-b_{2j}$, $\tilde{z}_j=z_{2j-1}-z_{2j}$ and $\mathbb{S}^{m}$ denotes the set of all $m\times m$ symmetric matrices.
Owing to $\mathcal{A}(X)=b$ implying $\tilde{\mathcal{A}}(X)=b$, they still considered (\ref{MatrixS1-Noiseless}) in noiseless case. And note that $\|\tilde{b}-\tilde{\mathcal{A}}(X)\|_1\leq\|b-\mathcal{A}(X)\|_1$, therefore they consider
\begin{align}\label{SROP-MatrixS1-Noise}
\min_{X}~\|X\|_{S_1} ~~\text{subject~ to}~ ~\|b-\mathcal{A}(X)\|_1/L\leq\eta_1,~~ \|\tilde{\mathcal{A}}^*(\tilde{b}-\tilde{\mathcal{A}}(X))\|_{S_\infty}\leq\eta_2.
\end{align}

In this paper, we introduce $\ell_q$-RUB condition, which is a natural generalization of RUB condition. And we consider Schatten-$p$ minimization
\begin{align}\label{MatrixSp}
\min_{X}~\|X\|_{S_p}^p ~~\text{subject~ to}~ ~b-\mathcal{A}(X)\in\mathcal{B},
\end{align}
for ROP model (\ref{ROP2}),
where $\mathcal{B}$ is a set determined by the noise structure and $0<p\leq 1$. We consider two types of bounded noises \cite{CW2011}.
One is $l_q$ bounded noises \cite{DET2006}, i.e.,
\begin{align}\label{lpboundednoise}
\mathcal{B}^{l_q}(\eta_1)=\{z: \|z\|_q/L\leq\eta_1\}
\end{align}
for some constant $\eta_1$; and the other is motivated by \textit{Dantzig~Selector} procedure \cite{CT2007}, where
\begin{align}\label{Dantzigselectornoise}
\mathcal{B}^{DS}(\eta_2)=\{z: \|\mathcal{A}^*(z)\|_{S_{\infty}}\leq\eta_2\}
\end{align}
for some constant $\eta_2$. In particular, $\mathcal{B}=\{0\}$ in noiseless case.

And if $\mathcal{A}$ is a symmetric rank-one projection, we use
\begin{align}\label{SROP-MatrixSp}
\min_{X}~\|X\|_{S_p}^p ~~\text{subject~ to}~ ~\tilde{b}-\tilde{\mathcal{A}}(X)\in\mathcal{B}.
\end{align}
instead of (\ref{MatrixSp}).

In the noisy case, we also consider the simpler least $q$-minimization problem
\begin{align}\label{Leastq}
\min_{X\in\mathbb{R}^{m\times n}}~\|\mathcal{A}(X)-b\|_q^q
\end{align}
which may work equally well or even better than Schatten $p$ minimization problem (\ref{MatrixSp}) in terms of recovery under certain natural conditions. Apart from simplicity and computational efficiency (see \cite{LB2010,TW2013} for $q=2$, \cite{LSC2017} for $q=1$), it has the additional advantage that no estimate $\eta$ of the noise level is required. It was proposed by Cand\`{e}s and Li \cite{CL2013} for $q=1$, which is used to solve PhaseLift problem \cite{CESV2013-2015,CSV2013}. They constructed the dual certificate condition to solve this problem.  Later, Kabanava, Kueng, Rauhut et.al. \cite{KKRT2016} considered it for $q\geq1$ for density operators under robust rank null space property.
We should point out that when $q=1$, least $q$-minimization problem in vector case is just the least absolute deviation introduced in \cite{BK1978}. Moreover works about the least absolute deviation, readers can see \cite{P1984,W2012,W2015,WKT2017}.

In this paper, we consider the recovery of the matrix $X\in\mathbb{R}^{m\times n}$ possessing some density, i.e. $\Big(\text{tr}\big((X^{T}X)^{p/2}\big)\Big)^{1/p}=1$, via nonconvex least $q$ minimization. And our method can be written as
\begin{align}\label{DensityLeastq}
\min_{X\in\mathbb{R}^{m\times n}}~\|\mathcal{A}(X)-b\|_q^q~~\text{subject~ to}~~\Big(\text{tr}\big((X^{T}X)^{p/2}\big)\Big)^{1/p}=1,
\end{align}
where $0<p\leq q\leq 1$. And if $\mathcal{A}$ is a symmetric rank-one projection, we use
\begin{align}\label{SROP-DensityLeastq}
\min_{X\in\mathbb{S}^{m}}~\|\tilde{\mathcal{A}}(X)-\tilde{b}\|_q^q~~\text{subject~ to}~~\Big(\text{tr}\big(X^{p}\big)\Big)^{1/p}=1
\end{align}
instead of (\ref{DensityLeastq}).

The contribution of the present work can be summarized as follows.

\begin{itemize}
\item[(1)]We introduce the $\ell_q$-RUB for $0<q\leq 1$, which includes the RUB condition in \cite{CZ2015}.

\item[(2)]A uniform and stable $\ell_q$-RUB condition for low-rank matrices' recovery, via Schatten-$p$ minimization ($0<p<1$), is given for ROP model.  And our condition is also sufficient for SROP model.

\item[(3)]we obtain that the robust rank null sapce property of order $r$ can be deduced from the $\ell_q$ RUB of order $(k+1)r$ for some $k>1$.

\item[(4)]A stable $\ell_q$-RUB condition for low-rank matrices' recovery, via least-$q$ minimization ($0<q\leq1$), is also given for ROP model.

\item[(5)]With high probability, ROP with $L\geq Cr(m+n)$ random projections from Gaussian distribution is sufficient to ensure stable and robust recovery of all rank-$r$ matrices via least absolute deviation estimator.

\end{itemize}

Throughout the article, we use the following basic notations. We denote $\mathbb{Z}_+$ by positive integer set. For any random variable $x$, we use $\mathbb{E}x$ denote the expectation of $x$. For any vector $x\in\mathbb{R}^{L}$ and index set $S\subset \{1,\ldots,L\}$, let $x_S$ be the vector equal to $x$ on $S$ and to zero on $S^c$. And we denote $I_L$ by $L\times L$ identity matrix.
For any matrix $X\in\mathbb{R}^{m\times n}$, we denote $X_{\max(r)}$ as the best rank-$r$ approximation of $X$, and $X_{-\max(r)}=X-X_{\max(r)}$ as the error of the best rank-$r$ approximation of $X$. We use the phrase ``rank-$r$ matrices" to refer to matrices of rank at most $r$.

\section{Recovery via Schatten-$p$ Minimization\label{s2}}
\hskip\parindent

In this section, we introduce $\ell_q$-RUB condition for $0<q\leq 1$ and consider the recovery of matrices through Schatten-$p$ minimization (\ref{MatrixSp}). We show that the $\ell_q$-RUB condition of order $(k+1)r$ for any $k>1$ such that $kr\in\mathbb{Z}_+$, with constants $C_1,C_2$ satisfies $C_2/{C_1}<k^{(1/p-1/2)q}$ for $0<p\leq q\leq 1$ is sufficient to guarantee the exact and and Stable recovery of all rank-$r$ matrices.

\subsection{$\ell_q$-RUB and Some Auxiliary Lemmas \label{s2.1}}
\quad

In this subsection, we will introduce $\ell_q$-RUB condition for $0<q\leq 1$ and give some auxiliary lemmas.
Firstly, we introduce $\ell_q$-RUB condition, which is a natural generalization of RUB in \cite{CZ2015}.

\begin{definition}($\ell_q$-Restricted Uniform Boundedness)\label{lqRUB}
For a linear map $\mathcal{A}:\mathbb{R}^{m\times n}\rightarrow \mathbb{R}^{L}$, a positive integer $r$ and $0<q\leq 1$, if there exist uniform constants $C_1$ and $C_2$ such that for all nonzero rank-$r$ matrices $X\in\mathbb{R}^{m\times n}$
\begin{align}\label{e2.1}
C_1\|X\|_{S_2}^q\leq\|\mathcal{A}(X)\|_q^q/L\leq C_2\|X\|_{S_2}^q,
\end{align}
we say that linear map $\mathcal{A}$ satisfies $\ell_q$-Restricted Uniform Boundedness ($\ell_q$-RUB) condition of order $r$ with constants $C_1$ and $C_2$.
\end{definition}

\begin{remark}(RIP-$\ell_2/\ell_q$ for low-rank matrices)\label{MatrixRIPl2lq}
If we take $C_1=1-\delta_r^{\text{lb}}$ and $C_2=1+\delta_r^{\text{sub}}$, then (\ref{e2.1}) becomes
\begin{align*}
(1-\delta_r^{\text{lb}})\|X\|_{S_2}^q\leq\frac{1}{L}\|\mathcal{A}(X)\|_q^q\leq(1+\delta_r^{\text{sub}})\|X\|_{S_2}^q.
\end{align*}
And we call that the linear map $\mathcal{A}$ satisfies RIP-$\ell_2/{\ell_q}$ of order $r$ with constants $\delta_r^{\text{lb}}$ and $\delta_r^{\text{sub}}$. Especially, when $q=1$, it is the RIP-$\ell_2/{\ell_1}$ for low-rank matrices introduced in \cite{CCG2015}. When $\delta_r^{\text{lb}}=\delta_r^{\text{sub}}$, it is the restricted $q$-isometry property introduced in \cite{ZHZ2013}.
\end{remark}

Then we will give some auxiliary lemmas.

The first lemma give out the condition guaranteeing the additivity of Schatten-$p$ norm, which comes from \cite[Lemma 2.3]{RFP2010} for $q=1$, and\cite[Lemma
2.2]{KX2013} and \cite[Lemma 2.1]{ZHZ2013} $0<q<1$.

\begin{lemma}\label{orthogonaldecomposition}
Let $0<q\leq 1$. Let $X,Y\in\mathbb{R}^{m\times n}$ be matrices with $X^{T}Y=O$ and $XY^{T}=O$,  then the following holds:
\begin{itemize}
\item[(1)]$\|X+Y\|_{S_q}^q=\|X\|_{S_q}^q+\|Y\|_{S_q}^q$;
\item[(2)]$\|X+Y\|_{S_q}\geq\|X\|_{S_q}+\|Y\|_{S_q}$.
\end{itemize}
\end{lemma}

And the second one, which comes from \cite[Lemma 2.6]{CL2015}, is the cone constraint for matrix's Schatten-$p$ norm.
\begin{lemma}\label{MatrixConeconstraintinequality}
Suppose $X,\hat{X}\in\mathbb{R}^{m\times n}$, $R=\hat{X}-X$. If $\|\hat{X}\|_{S_p}^p\leq \|X\|_{S_p}^p$, then we have
$$
\|R_{-max(r)}\|_{S_p}^p\leq2\|X_{-max(r)}\|_{S_p}^p+\|R_{max(r)}\|_{S_p}^p.
$$
\end{lemma}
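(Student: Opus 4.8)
The plan is to transport the classical $\ell_p$ cone-constraint argument for sparse vectors to the Schatten setting. In the vector case one splits the perturbation on the support $T_0$ of the $r$ largest entries of $X$ and exploits that $\|\cdot\|_p^p$ is additive over disjoint supports; the matrix analogue of this additivity is Lemma~\ref{orthogonaldecomposition}. So I would fix an SVD of $X$, write $P,Q$ for the orthogonal projections onto the column and row spaces of $X_{\max(r)}$, and set $R''=(I-P)R(I-Q)$ and $R'=R-R''$. By construction $X_{\max(r)}$ and $R''$ have mutually orthogonal row and column spaces, whence $\|X_{\max(r)}+R''\|_{S_p}^p=\|X_{\max(r)}\|_{S_p}^p+\|R''\|_{S_p}^p$ by Lemma~\ref{orthogonaldecomposition}, while $R'$ has rank at most $2r$.

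First I would extract the cone inequality adapted to $X$. Grouping $\hat X=(X_{\max(r)}+R'')+(X_{-\max(r)}+R')$ and applying the reverse form of $p$-subadditivity $\|U+V\|_{S_p}^p\ge\|U\|_{S_p}^p-\|V\|_{S_p}^p$ to the first grouping, then Lemma~\ref{orthogonaldecomposition} to $X_{\max(r)}+R''$ and ordinary subadditivity to $X_{-\max(r)}+R'$, gives
\begin{align*}
\|\hat X\|_{S_p}^p\ge\|X_{\max(r)}\|_{S_p}^p+\|R''\|_{S_p}^p-\|X_{-\max(r)}\|_{S_p}^p-\|R'\|_{S_p}^p.
\end{align*}
Since the hypothesis and orthogonality give $\|\hat X\|_{S_p}^p\le\|X\|_{S_p}^p=\|X_{\max(r)}\|_{S_p}^p+\|X_{-\max(r)}\|_{S_p}^p$, the term $\|X_{\max(r)}\|_{S_p}^p$ cancels and I obtain the cone inequality $\|R''\|_{S_p}^p\le2\|X_{-\max(r)}\|_{S_p}^p+\|R'\|_{S_p}^p$. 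For the tail I would use the Eckart--Young minimality of the best rank-$r$ approximation together with subadditivity, in the form $\|(A+B)_{-\max(r)}\|_{S_p}^p\le\|A+B-A_{\max(r)}\|_{S_p}^p\le\|A_{-\max(r)}\|_{S_p}^p+\|B\|_{S_p}^p$; with $A=R''$, $B=R'$ this yields $\|R_{-\max(r)}\|_{S_p}^p\le\|R''\|_{S_p}^p+\|R'\|_{S_p}^p$.

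The step I expect to be the main obstacle is the passage from the $X$-adapted pieces $R',R''$ to the best rank-$r$ approximation of $R$ on the head side. Combining the two displayed estimates only yields $\|R_{-\max(r)}\|_{S_p}^p\le2\|X_{-\max(r)}\|_{S_p}^p+2\|R'\|_{S_p}^p$, so the real work is to replace $2\|R'\|_{S_p}^p$ by $\|R_{\max(r)}\|_{S_p}^p$. Here $0<p<1$ is genuinely restrictive: the Schatten-$p$ quasi-norm has no triangle inequality, and the termwise comparison $\|R'\|_{S_p}^p\le\|R_{\max(r)}\|_{S_p}^p$ fails in general, since the singular subspaces of $X_{\max(r)}$ need not see the dominant singular directions of $R$. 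The conversion must therefore be carried out through the minimality of the best rank-$r$ approximation and a careful accounting of how the singular values of $R$ are split between its own head $R_{\max(r)}$ and the residual $\|X_{-\max(r)}\|_{S_p}^p$, keeping every manipulation inside the only two tools that persist for $p<1$, namely subadditivity of $\|\cdot\|_{S_p}^p$ and the orthogonality additivity of Lemma~\ref{orthogonaldecomposition}. This is precisely the content of \cite[Lemma~2.6]{CL2015}, whose singular-value bookkeeping I would follow to close the argument and reach $\|R_{-\max(r)}\|_{S_p}^p\le2\|X_{-\max(r)}\|_{S_p}^p+\|R_{\max(r)}\|_{S_p}^p$.
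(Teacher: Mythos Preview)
The paper does not prove this lemma; it simply quotes \cite[Lemma~2.6]{CL2015}. Your proposal ultimately lands in the same place: the $R',R''$ subspace decomposition you set up produces only
\[
\|R_{-\max(r)}\|_{S_p}^p\le 2\|X_{-\max(r)}\|_{S_p}^p+2\|R'\|_{S_p}^p,
\]
and to convert $2\|R'\|_{S_p}^p$ (with $\mathrm{rank}\,R'\le 2r$) into $\|R_{\max(r)}\|_{S_p}^p$ you defer to the very reference being cited. That conversion is the whole difficulty---the singular subspaces of $X_{\max(r)}$ need not align with those of $R$, and for $0<p<1$ there is no norm inequality to fall back on---so as written the proposal is not a self-contained argument but the paper's citation preceded by a decomposition that does not close.

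A direct route is, however, available inside the paper and avoids the subspace decomposition entirely. Lemma~\ref{perturbationinequality} with $f(t)=t^p$ works at the level of ordered singular values: with $T=\{1,\dots,r\}$ it yields
\[
\|\sigma(\hat X)_T\|_p^p\ge\|\sigma(X)_T\|_p^p-\|\sigma(R)_T\|_p^p,\qquad
\|\sigma(\hat X)_{T^c}\|_p^p\ge\|\sigma(R)_{T^c}\|_p^p-\|\sigma(X)_{T^c}\|_p^p,
\]
and adding these, using $\|\hat X\|_{S_p}^p\le\|X\|_{S_p}^p$, gives the claim with the sharp coefficient $1$ on $\|R_{\max(r)}\|_{S_p}^p$ in one step. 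This is exactly the computation the paper carries out later in the proof of Lemma~\ref{RNSPlemma}, where the stronger inequality
\[
\|R_{-\max(r)}\|_{S_p}^p\le\bigl(\|\hat X\|_{S_p}^p-\|X\|_{S_p}^p+2\|X_{-\max(r)}\|_{S_p}^p\bigr)+\|R_{\max(r)}\|_{S_p}^p
\]
is derived; the present lemma is the special case $\|\hat X\|_{S_p}^p\le\|X\|_{S_p}^p$. I would replace the $R',R''$ argument with this singular-value approach.
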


In order to estimate the Schatten-2 norm of $(\hat{X}-X)_{-\max(r)}$ for the stable recovery, we also need the following lemma, which is  inspired by \cite[Lemma 7.8]{CZ2015}.

\begin{lemma}\label{Vitaltechnique}
Let
$$
f(t)=t^{(2-p)/p}\big(c-kt\big),
$$
where $c$ is a positive constant independent of $t$. Then we have
\begin{align*}
f(t)\leq\frac{p}{2}\bigg(\frac{2-p}{2k}\bigg)^{(2-p)/p}c^{2/p},~t>0.
\end{align*}
\end{lemma}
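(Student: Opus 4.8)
The plan is to treat this as a one-variable extremal problem: maximize $f$ over $t>0$ and verify that the maximum value coincides with the claimed right-hand side. I would first set $\alpha=(2-p)/p$, so that $0<p\le 1$ gives $\alpha\ge 1$ and $f(t)=c\,t^{\alpha}-k\,t^{\alpha+1}$. Differentiating,
\[
f'(t)=t^{\alpha-1}\big(c\alpha-k(\alpha+1)t\big),
\]
so the unique positive root of $f'$ is $t^{\ast}=\dfrac{c\alpha}{k(\alpha+1)}$. Here I would note that $k>0$ (in the intended application $k>1$) is what guarantees $t^{\ast}$ lies in the admissible range $(0,\infty)$, and $c>0$ is assumed.

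Next I would show that $t^{\ast}$ is in fact the global maximizer on $(0,\infty)$. The prefactor $t^{\alpha-1}$ is positive for $t>0$, while the linear factor $c\alpha-k(\alpha+1)t$ is positive for $t<t^{\ast}$ and negative for $t>t^{\ast}$; hence $f$ increases up to $t^{\ast}$ and decreases thereafter. Combined with $f(t)\to 0$ as $t\to 0^{+}$ and $f(t)\to-\infty$ as $t\to\infty$, this yields $\sup_{t>0}f(t)=f(t^{\ast})$.

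The remaining step is a direct substitution, and this is where the stated constants emerge. The key simplification is the identity $\alpha+1=(2-p)/p+1=2/p$, which turns the critical point into $t^{\ast}=\dfrac{c(2-p)}{2k}$ and gives $c-kt^{\ast}=c-\dfrac{c(2-p)}{2}=\dfrac{p}{2}\,c$. Plugging these back in,
\[
f(t^{\ast})=(t^{\ast})^{\alpha}\,(c-kt^{\ast})=\Big(\tfrac{2-p}{2k}\Big)^{\alpha}c^{\alpha}\cdot\tfrac{p}{2}\,c=\tfrac{p}{2}\Big(\tfrac{2-p}{2k}\Big)^{(2-p)/p}c^{2/p},
\]
which is exactly the asserted bound.

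There is no genuine obstacle here; the lemma is an elementary single-variable optimization, and essentially the only thing requiring care is the bookkeeping of the exponent $\alpha=(2-p)/p$. In particular, one must use $\alpha+1=2/p$ consistently to collapse $c^{\alpha+1}$ into the power $c^{2/p}$ appearing on the right-hand side, and track that $\alpha\ge 1$ so that $t^{\alpha-1}$ behaves as claimed near $t=0$. The positivity of $c$ and $k$ secures both the existence of the interior maximizer and the validity of the final expression.
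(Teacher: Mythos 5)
Your proposal is correct and follows essentially the same route as the paper: differentiate, locate the unique positive critical point $t^{\ast}=\frac{(2-p)c}{2k}$, and evaluate $f$ there using $(2-p)/p+1=2/p$. The only difference is that you explicitly justify via the sign of $f'$ that this critical point is the global maximizer on $(0,\infty)$, a step the paper's proof leaves implicit.
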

\begin{proof}
By taking derivation of $f(t)$, we have
$$
f'(t)=-\frac{1}{p}t^{(2-2p)/p}\big(2kt-(2-p)c\big).
$$
Therefore
\begin{align*}
f(t)&\leq f\bigg(\frac{(2-p)}{2k}c\bigg)=\frac{p}{2}\bigg(\frac{2-p}{2k}\bigg)^{(2-p)/p}c^{2/p}.
\end{align*}
\end{proof}

\subsection{Exact Recovery via via Schatten-$p$ Minimization \label{s2.2}}
\quad

In this subsection, we will consider the exact recovery under $\ell_q$-RUB condition.

\begin{theorem}\label{ROP-lp-Exact}
Let $r$ be any positive integer, and $k>1$ such that $kr$ be positive integer. Suppose that $\mathcal{A}$ satisfies $\ell_q$-RUB of order $(k+1)r$ with $C_2/{C_1}<k^{(1/p-1/2)q}$ for any $0<p\leq q\leq 1$, then the Schatten-$p$ minimization (\ref{MatrixSp}) exact recovers all rank-$r$ matrices. That is, for all rank-$r$ matrices $X$ with $b=\mathcal{A}(X)$, we have $\hat{X}=X$, where $\hat{X}$ is the solution of (\ref{MatrixSp}) with $\mathcal{B}=\{0\}$.
\end{theorem}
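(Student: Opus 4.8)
The plan is to argue from the optimality of $\hat X$ via the standard cone-constraint-plus-block-decomposition scheme, adapted to the $\ell_q$-RUB. Set $R=\hat X-X$. Since both $\hat X$ and $X$ are feasible for (\ref{MatrixSp}) with $\mathcal{B}=\{0\}$, we have $\mathcal{A}(R)=0$, and optimality gives $\|\hat X\|_{S_p}^p\le\|X\|_{S_p}^p$. Thus Lemma \ref{MatrixConeconstraintinequality} applies; because $X$ is rank-$r$ we have $X_{-\max(r)}=O$, so the lemma collapses to the cone constraint
\begin{align*}
\|R_{-\max(r)}\|_{S_p}^p\le\|R_{\max(r)}\|_{S_p}^p.
\end{align*}
The goal is to show $R_{\max(r)}=O$, after which this inequality forces $R_{-\max(r)}=O$ and hence $\hat X=X$.

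Next I would decompose the tail of $R$ into rank-$kr$ blocks. Ordering the singular values of $R$ decreasingly, let $R_{\max(r)}$ collect the top $r$ singular triples and split the remainder $R_{-\max(r)}$ into successive blocks $R_{T_1},R_{T_2},\dots$, each of rank at most $kr$ and supported on singular values no larger than those of the preceding block. Then $R_{\max(r)}+R_{T_1}$ has rank at most $(k+1)r$, so the $\ell_q$-RUB of order $(k+1)r$ applies to it. Using $\mathcal{A}(R)=0$ to write $\mathcal{A}(R_{\max(r)}+R_{T_1})=-\sum_{j\ge2}\mathcal{A}(R_{T_j})$, the lower RUB bound on the left and the $q$-subadditivity of $\|\cdot\|_q^q$ together with the upper RUB bound on the right yield
\begin{align*}
C_1\|R_{\max(r)}\|_{S_2}^q\le C_1\|R_{\max(r)}+R_{T_1}\|_{S_2}^q\le C_2\sum_{j\ge2}\|R_{T_j}\|_{S_2}^q,
\end{align*}
where the first inequality uses that $R_{\max(r)}$ and $R_{T_1}$ have orthogonal row and column spaces, so $\|R_{\max(r)}+R_{T_1}\|_{S_2}\ge\|R_{\max(r)}\|_{S_2}$.

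The crux is to convert $\sum_{j\ge2}\|R_{T_j}\|_{S_2}^q$ back into $\|R_{\max(r)}\|_{S_2}^q$ with exactly the constant $k^{-(1/p-1/2)q}$. The block-shifting estimate $\|R_{T_{j+1}}\|_{S_2}\le(kr)^{1/2-1/p}\|R_{T_j}\|_{S_p}$ (each singular value in block $j+1$ is at most the $S_p$-average over block $j$), after raising to the $q$-th power and summing, gives $\sum_{j\ge2}\|R_{T_j}\|_{S_2}^q\le(kr)^{(1/2-1/p)q}\sum_{j\ge1}\|R_{T_j}\|_{S_p}^q$. Here I would use $q/p\ge1$, so that $x\mapsto x^{q/p}$ is superadditive and $\sum_{j\ge1}\|R_{T_j}\|_{S_p}^q\le\big(\sum_{j\ge1}\|R_{T_j}\|_{S_p}^p\big)^{q/p}=\|R_{-\max(r)}\|_{S_p}^q$, the $S_p^p$-additivity over orthogonal blocks being Lemma \ref{orthogonaldecomposition}. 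Finally the cone constraint together with $\|R_{\max(r)}\|_{S_p}\le r^{1/p-1/2}\|R_{\max(r)}\|_{S_2}$ bounds $\|R_{-\max(r)}\|_{S_p}^q$ by $r^{(1/p-1/2)q}\|R_{\max(r)}\|_{S_2}^q$, and the powers of $r$ cancel against $(kr)^{(1/2-1/p)q}$ to leave precisely $k^{-(1/p-1/2)q}$.

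Assembling these inequalities I obtain $C_1\|R_{\max(r)}\|_{S_2}^q\le C_2k^{-(1/p-1/2)q}\|R_{\max(r)}\|_{S_2}^q$. If $R_{\max(r)}\ne O$, cancelling $\|R_{\max(r)}\|_{S_2}^q$ yields $C_2/C_1\ge k^{(1/p-1/2)q}$, contradicting the hypothesis $C_2/C_1<k^{(1/p-1/2)q}$; hence $R_{\max(r)}=O$ and the proof closes via the cone constraint. The main obstacle is the bookkeeping of the third paragraph: tracking exponents through the block-shifting inequality and the $q/p\ge1$ superadditivity so that the factors $(kr)^{(1/2-1/p)q}$ and $r^{(1/p-1/2)q}$ combine exactly to the threshold $k^{(1/p-1/2)q}$ in the hypothesis, and confirming that the assumption $p\le q$ is genuinely what the superadditivity step consumes.
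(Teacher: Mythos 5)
Your proposal is correct and follows essentially the same route as the paper's proof: the same partition of $R_{-\max(r)}$ into rank-$kr$ blocks, the same application of the order-$(k+1)r$ $\ell_q$-RUB to $R_{\max(r)}+R_{T_1}$ combined with $q$-subadditivity, the same block-shifting estimate and $q/p\ge 1$ superadditivity (the paper's inequality (\ref{e2.4})), and the same cancellation of the $r$-powers against $(kr)^{(1/2-1/p)q}$ to reach the threshold $k^{(1/p-1/2)q}$. The only cosmetic difference is that you argue directly from the minimizer via Lemma \ref{MatrixConeconstraintinequality}, whereas the paper routes the same inequality chain through the null space property of Lemma \ref{Exactrecovery-NSP}.
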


Before proving Theorems \ref{ROP-lp-Exact}, let us first state the well known matrix null space property, which lies in the heart of the proof of this main result.
\begin{lemma}(\cite{YS2016})\label{Exactrecovery-NSP}
All rank-$r$ matrices $X\in\mathbb{R}^{m\times n}$ with $b=\mathcal{A}(X)\in\mathbb{R}^{L}$ can be exactly recovered by solving problem (\ref{MatrixSp}) with $\mathcal{B}=\{0\}$ if and only if
$$
\|R_{\max(r)}\|_{S_p}^p<\|R_{-max(r)}\|_{S_p}^p
$$
holds for all $R\in\mathcal{N}(\mathcal{A})\backslash\{O\}$, where $O$ is $m\times n$ zero matrix.
\end{lemma}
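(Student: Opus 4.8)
The statement is an equivalence, so the plan is to prove the two implications separately; in both the key structural fact is that two matrices produce the same measurement vector precisely when their difference lies in $\mathcal{N}(\mathcal{A})$, since $\mathcal{A}(Y)=\mathcal{A}(X)=b$ is equivalent to $Y-X\in\mathcal{N}(\mathcal{A})$ for the feasible set of (\ref{MatrixSp}) with $\mathcal{B}=\{0\}$.

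For the direction ``null space property $\Rightarrow$ exact recovery,'' I would fix a rank-$r$ matrix $X$ with $b=\mathcal{A}(X)$ and let $\hat{X}$ be any minimizer of (\ref{MatrixSp}). Feasibility of $X$ together with optimality of $\hat{X}$ gives $\|\hat{X}\|_{S_p}^p\leq\|X\|_{S_p}^p$, and because $\mathcal{A}(\hat{X})=\mathcal{A}(X)=b$ the residual $R:=\hat{X}-X$ lies in $\mathcal{N}(\mathcal{A})$. I would then apply Lemma \ref{MatrixConeconstraintinequality} to exactly this pair $X,\hat{X}$: since $X$ has rank at most $r$ we have $X_{-\max(r)}=O$, so the cone inequality collapses to $\|R_{-\max(r)}\|_{S_p}^p\leq\|R_{\max(r)}\|_{S_p}^p$. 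If $R\neq O$ this directly contradicts the assumed property $\|R_{\max(r)}\|_{S_p}^p<\|R_{-\max(r)}\|_{S_p}^p$ evaluated at $R$. Hence $R=O$, i.e. $\hat{X}=X$, which is the claimed unique recovery.

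For the converse ``exact recovery $\Rightarrow$ null space property,'' I would argue by contraposition. Suppose the property fails, so there is some $R\in\mathcal{N}(\mathcal{A})\backslash\{O\}$ with $\|R_{\max(r)}\|_{S_p}^p\geq\|R_{-\max(r)}\|_{S_p}^p$. Using the singular value decomposition of $R$ to split it into $R_{\max(r)}$ and $R_{-\max(r)}$, I would set $X:=-R_{\max(r)}$, which has rank at most $r$, and take the competitor $X':=X+R=R_{-\max(r)}$. The two matrices share a measurement vector, since $\mathcal{A}(X')=\mathcal{A}(X)+\mathcal{A}(R)=\mathcal{A}(X)=b$, so $X'$ is feasible for the program whose true signal is $X$. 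Because the Schatten-$p$ norm is invariant under sign, $\|X\|_{S_p}^p=\|R_{\max(r)}\|_{S_p}^p\geq\|R_{-\max(r)}\|_{S_p}^p=\|X'\|_{S_p}^p$, so the rank-$r$ matrix $X$ is not the unique minimizer: either $X'$ strictly improves on it, or $X'\neq X$ ties it, and each case contradicts exact recovery of $X$.

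The argument is largely bookkeeping, and the only point requiring genuine care is the necessity direction, where I would emphasize why the strict inequality in the null space property is exactly the right threshold: a merely non-strict failure ($\geq$) already manufactures a distinct feasible competitor of no larger objective value, and since ``exact recovery'' means $X$ is the unique solution, even a tie is fatal. I would also make explicit that $X=-R_{\max(r)}$ is genuinely a rank-$r$ matrix while $X'=R_{-\max(r)}$ need not be low rank (it serves only as an alternative feasible point), and note that the orthogonality of $R_{\max(r)}$ and $R_{-\max(r)}$ coming from the SVD is what lets me read off the two Schatten-$p$ norms cleanly, consistent with Lemma \ref{orthogonaldecomposition}.
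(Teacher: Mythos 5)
Your proof is correct, and there is nothing in the paper to compare it against line by line: Lemma \ref{Exactrecovery-NSP} is quoted from \cite{YS2016} without proof, so your self-contained argument is a genuine addition rather than a rederivation. Both directions are sound. For sufficiency, invoking Lemma \ref{MatrixConeconstraintinequality} with a rank-$r$ signal, so that $X_{-\max(r)}=O$ and the cone inequality collapses to $\|R_{-\max(r)}\|_{S_p}^p\leq\|R_{\max(r)}\|_{S_p}^p$, is legitimate (that lemma is available in the paper via \cite{CL2015}) and gives an economical contradiction with the strict null space inequality; one point worth stating explicitly is that your argument actually shows every feasible $Y$ with $\|Y\|_{S_p}^p\leq\|X\|_{S_p}^p$ must equal $X$, which settles existence and uniqueness of the minimizer at once, so no separate existence hypothesis is needed. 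For necessity, the construction $X=-R_{\max(r)}$, $X'=R_{-\max(r)}$ with $X'-X=R\in\mathcal{N}(\mathcal{A})$ is exactly the canonical one, and you correctly flag the two places where care is required: sign-invariance of singular values gives $\|X\|_{S_p}^p=\|R_{\max(r)}\|_{S_p}^p$, and a mere tie in objective value already destroys uniqueness, so the non-strict failure $\geq$ suffices. A direct proof in the style of \cite{YS2016} would instead run the sufficiency direction through the perturbation inequality (Lemma \ref{perturbationinequality}) and the additivity of the Schatten-$p$ quasi-norm (Lemma \ref{orthogonaldecomposition}); your route through the cone constraint is a mild repackaging of the same ingredients, since Lemma \ref{MatrixConeconstraintinequality} is itself built from those tools, and it buys a shorter sufficiency argument at the cost of importing a lemma whose full strength, namely the $2\|X_{-\max(r)}\|_{S_p}^p$ term, is never used here.
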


\begin{proof}[Proof of Theorem \ref{ROP-lp-Exact}]
Our proof follows the idea of \cite[Theorem 2.4]{CS2008}.
By Lemma \ref{Exactrecovery-NSP}, we only need to show that for all matrices $R\in\mathcal{N}(\mathcal{A})\backslash \{O\}$, one has
$\|R_{\max(r)}\|_{S_p}^p<\|R_{-max(r)}\|_{S_p}^p$. Assume there exists a nonzero matrix $R$ with
\begin{align}\label{e2.2}
\mathcal{A}(R)=0
\end{align}
and
\begin{align}\label{e2.3}
\|R_{-\max(r)}\|_{S_p}^p\leq\|R_{max(r)}\|_{S_p}^p.
\end{align}
We denote $l=\min\{m,n\}$ and assume that the singular value decomposition of $R$ is
$$
R=U\Sigma V^{T}=U\text{diag}(\sigma)V^{T},
$$
where $\sigma=(\sigma_1,\ldots,\sigma_l)$ is the singular vector with $\sigma_1\geq\sigma_2\geq\cdots\geq\sigma_l\geq 0$. Without loss of generality, we assume that $(k+1)r\leq l$, otherwise we set the undefined entries of $\sigma$ as 0.

Let $\text{supp}(\sigma_{\max(r)})=\{1,\ldots,r\}=:T_0$. We partition $T_0^c=\{r+1,\ldots,l\}$ as
$$
T_0^c=\cup_{j=1}^J T_j,
$$
where $T_1$ is the index set of the $s$ largest entries of $\sigma_{-\max(r)}$, $T_2$ is the index set of the next $s$ largest entries of $\sigma_{-\max(r)}$, and so on. Here, $s\in\mathbb{Z}_+$ is to be determined. The last index set $T_J$ may contain less $s$ elements.
Then for $j\geq 2$, we have $|\sigma_i|^p\leq\|\sigma_{T_{j-1}}\|_p^p/s$ for any $i\in T_{j}$. Therefore
\begin{align*}
\|\sigma_{T_j}\|_2^2=\sum_{i\in T_j}|\sigma_i|^2\leq\sum_{i\in T_j}\bigg(\frac{\|\sigma_{T_{j-1}}\|_p^p}{s}\bigg)^{2/p}
\leq \frac{\|\sigma_{T_{j-1}}\|_p^2}{s^{2/p-1}},
\end{align*}
so that
\begin{align}\label{e2.4}
\sum_{j\geq 2}\|\sigma_{T_j}\|_2^q&=\sum_{j\geq 2}\big(\|\sigma_{T_j}\|_2^2\big)^{q/2}
\leq\sum_{j\geq 2}\bigg(\frac{\|\sigma_{T_{j-1}}\|_p^2}{s^{2/p-1}}\bigg)^{q/2}
=\frac{1}{\big(s^{1/p-1/2}\big)^q}\sum_{j\geq 2}\big(\|\sigma_{T_{j-1}}\|_p^p\big)^{q/p}\nonumber\\
&\leq\frac{1}{\big(s^{1/p-1/2}\big)^q}\bigg(\sum_{j\geq 2}\|\sigma_{T_{j-1}}\|_p^p\bigg)^{q/p}
=\frac{1}{\big(s^{1/p-1/2}\big)^q}\big(\|\sigma_{-\max(r)}\|_p^p\big)^{q/p},
\end{align}
where the second line follows from $q/p\geq 1$.

Let $T_{01}=T_0\cup T_1$, we consider the following identity
\begin{align}\label{e2.5}
\|\mathcal{A}(R)\|_q^q=\bigg\|\mathcal{A}(R_{T_{01}})+\sum_{j\geq 2}\mathcal{A}(R_{T_{j}})\bigg\|_q^q,
\end{align}
where
$$
R_{T_j}=U\text{diag}(\sigma_{T_j}) V^{T},~\forall j=0,1,\ldots.
$$

First, we give out a lower bound for (\ref{e2.5}). Note that $\text{rank}(R_{T_{01}})\leq r+s$ and $|T_j|\leq s$. By the $r+s$ order $\ell_q$-RUB condition, we have
\begin{align}\label{e2.6}
\|\mathcal{A}(R)\|_q^q&\geq\|\mathcal{A}(R_{T_{01}})\|_q^q-\sum_{j\geq 2}\|\mathcal{A}(R_{T_{j}})\|_q^q\nonumber\\
&\geq C_1L\|R_{T_{01}}\|_{S_2}^q-\sum_{j\geq 2}C_2L\|R_{T_j}\|_{S_2}^q
=C_1L\|R_{T_{01}}\|_{S_2}^q-C_2L\sum_{j\geq 2}\|\sigma_{T_j}\|_2^q\nonumber\\
&\geq C_1L\|R_{T_{01}}\|_{S_2}^q-\frac{C_2L}{\big(s^{1/p-1/2}\big)^q}\big(\|\sigma_{-\max(r)}\|_p^p\big)^{q/p},
\end{align}
where the last inequality follows from (\ref{e2.4}). Then by (\ref{e2.3}), we get a lower bound of $\|\mathcal{A}(R)\|_q^q$ as follows
\begin{align}\label{e2.7}
\|\mathcal{A}(R)\|_q^q&\geq C_1L\|R_{T_{01}}\|_{S_2}^q
-\frac{C_2L}{\big(s^{1/p-1/2}\big)^q}\big(\|\sigma_{\max(r)}\|_p^p\big)^{q/p}\nonumber\\
&\geq C_1L\|R_{T_{01}}\|_{S_2}^q
-C_2L\bigg(\frac{r}{s}\bigg)^{(1/p-1/2)q}\|\sigma_{\max(r)}\|_2^q\nonumber\\
&=C_1L\|R_{T_{01}}\|_{S_2}^q
-C_2L\bigg(\frac{r}{s}\bigg)^{(1/p-1/2)q}\|R_{\max(r)}\|_{S_2}^q\nonumber\\
&\geq L\bigg(C_1-C_2\bigg(\frac{r}{s}\bigg)^{(1/p-1/2)q}\bigg)\|R_{T_{01}}\|_{S_2}^q.
\end{align}

We also need an upper bound of $\|\mathcal{A}(R)\|_q^q$. Using (\ref{e2.2}), we have
\begin{align}\label{e2.8}
\|\mathcal{A}(R)\|_q^q=0.
\end{align}

Combining the lower bound (\ref{e2.7}) with the upper bound (\ref{e2.8}), we get
\begin{align}\label{e2.9}
L\bigg(C_1-C_2\bigg(\frac{r}{s}\bigg)^{(1/p-1/2)q}\bigg)\|R_{T_{01}}\|_{S_2}^q\leq0.
\end{align}
Taking $s=kr$ for $k>1$ with $kr\in\mathbb{Z}_+$. Note that
$$
\frac{C_2}{C_1}< \bigg(\frac{s}{r}\bigg)^{(1/p-1/2)q}=k^{(1/p-1/2)q}.
$$
Then (\ref{e2.9}) implies that
\begin{align*}
\|R_{\max(r)}\|_{S_p}\leq r^{1/p-1/2}\|R_{\max(r)}\|_{S_2}\leq r^{1/p-1/2}\|R_{T_{01}}\|_{S_2}\leq 0.
\end{align*}
However, (\ref{e2.2}) implies that
\begin{align}
0\leq\|R_{-\max(r)}\|_{S_p}\leq\|R_{max(r)}\|_{S_p},
\end{align}
which is a contradiction.
\end{proof}

As a direct consequence of Theorem \ref{ROP-lp-Exact}, we have following two corollaries.
\begin{corollary}\label{SROP-lp-Exact}
Let $r$ be any positive integer, and $k>1$ such that $kr$ be positive integer. Suppose that $\tilde{\mathcal{A}}$ satisfies $\ell_q$ RUB of order $(k+1)r$ with $C_2/{C_1}<k^{(1/p-1/2)q}$ for any $0<p\leq q\leq 1$, then the Schatten-$p$ minimization (\ref{SROP-MatrixSp}) exact recovers all symmetric rank-$r$ matrices. That is, for all symmetric rank-$r$ matrices $X$ with $\tilde{b}=\tilde{\mathcal{A}}(X)$, we have $\hat{X}=X$, where $\hat{X}$ is the solution of (\ref{SROP-MatrixSp}) with $\mathcal{B}=\{0\}$.
\end{corollary}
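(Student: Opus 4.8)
The plan is to reduce the claim to Theorem \ref{ROP-lp-Exact} by observing that $\tilde{\mathcal{A}}$ is itself a linear map, now acting on the linear space $\mathbb{S}^m$ of symmetric matrices, and that the Schatten-$p$ problem (\ref{SROP-MatrixSp}) has exactly the same structure as (\ref{MatrixSp}) once $\mathcal{A},b$ are replaced by $\tilde{\mathcal{A}},\tilde{b}$ and the feasible set is confined to $\mathbb{S}^m$. Since we assume $\tilde{\mathcal{A}}$ satisfies $\ell_q$-RUB of order $(k+1)r$ with $C_2/C_1<k^{(1/p-1/2)q}$, every hypothesis used in the proof of Theorem \ref{ROP-lp-Exact} is available verbatim for $\tilde{\mathcal{A}}$.

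First I would record the symmetric analogue of the null space characterization (Lemma \ref{Exactrecovery-NSP}): all symmetric rank-$r$ matrices are exactly recovered by (\ref{SROP-MatrixSp}) with $\mathcal{B}=\{0\}$ if and only if $\|R_{\max(r)}\|_{S_p}^p<\|R_{-\max(r)}\|_{S_p}^p$ holds for every nonzero $R\in\mathcal{N}(\tilde{\mathcal{A}})$. Because $\tilde{\mathcal{A}}$ is defined only on $\mathbb{S}^m$, its null space automatically consists of symmetric matrices, so no extra symmetry constraint must be dragged through the argument.

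Next I would replay the proof of Theorem \ref{ROP-lp-Exact} line by line, with $\mathcal{A}$ replaced by $\tilde{\mathcal{A}}$. The single point deserving a remark is that every matrix appearing there remains symmetric: for a symmetric $R$ one may take the singular value decomposition from the spectral decomposition $R=Q\,\mathrm{diag}(\lambda)\,Q^T$, whence each truncation $R_{T_j}$ is again symmetric and of rank at most $s$. Thus the $\ell_q$-RUB bounds (\ref{e2.1}) apply legitimately to $R_{T_{01}}$ and to each $R_{T_j}$, and the chain of estimates (\ref{e2.6})–(\ref{e2.9}) yields the same contradiction once $s=kr$ is chosen.

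I expect no genuine obstacle here; the only thing to verify is that restricting the domain to the subspace $\mathbb{S}^m$ leaves every step intact, and this is immediate because $\mathbb{S}^m$ is closed under the best rank-$r$ truncation used in the partition. Consequently $\hat{X}=X$ for all symmetric rank-$r$ matrices with $\tilde{b}=\tilde{\mathcal{A}}(X)$, which is exactly the assertion of the corollary.
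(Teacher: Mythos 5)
Your proposal is correct and matches the paper's intent exactly: the paper states this corollary as a direct consequence of Theorem \ref{ROP-lp-Exact} without further proof, and your spelled-out reduction (the null space of $\tilde{\mathcal{A}}$ on $\mathbb{S}^m$ consists of symmetric matrices, and the truncations $R_{T_j}$ built from the spectral decomposition remain symmetric, so the $\ell_q$-RUB estimates apply verbatim) is precisely the argument the authors leave implicit.
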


\begin{corollary}\label{Corollary-ROP-lp-Exact}
Let $\tau>1$ and $s=\lceil r\tau^{\frac{2p}{(2-p)q}}\rceil$.
Suppose that $\mathcal{A}$ satisfies RIP-$\ell_2/\ell_q$ of order $s+r$ with
$$
\delta_{s+r}^{\text{sub}}+\tau\delta_{r}^{\text{lb}}<\tau-1
$$
for any $0<p\leq q\leq 1$, then the Schatten-$p$ minimization (\ref{MatrixSp}) recovers all rank-$r$ matrices. That is, for all rank-$r$ matrices $X$ with $b=\mathcal{A}(X)$, we have $\hat{X}=X$, where $\hat{X}$ is the solution of (\ref{MatrixSp}) with $\mathcal{B}=\{0\}$.
\end{corollary}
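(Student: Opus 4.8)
The plan is to read this off from Theorem \ref{ROP-lp-Exact} after converting the RIP-$\ell_2/\ell_q$ hypothesis into an $\ell_q$-RUB hypothesis via Remark \ref{MatrixRIPl2lq}, i.e. by setting $C_1=1-\delta^{\mathrm{lb}}$ and $C_2=1+\delta^{\mathrm{sub}}$, and by choosing the block parameter $k:=s/r$. First I would record that this $k$ is admissible for the theorem: since $\tau>1$ and the exponent $2p/((2-p)q)$ is positive, we have $s=\lceil r\tau^{2p/((2-p)q)}\rceil\geq r+1$, so $k=s/r>1$ and $kr=s\in\mathbb{Z}_+$, while the required order $(k+1)r=s+r$ is exactly the order at which the RIP is assumed.

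Next I would dispatch the exponent bookkeeping that the ceiling is designed for. From $s\geq r\tau^{2p/((2-p)q)}$ we get $s/r\geq\tau^{2p/((2-p)q)}$, and raising both sides to the power $(1/p-1/2)q=(2-p)q/(2p)$ collapses the exponents:
\begin{align*}
k^{(1/p-1/2)q}=\bigl(s/r\bigr)^{(2-p)q/(2p)}\geq\tau .
\end{align*}
Hence it suffices to verify the theorem's ratio condition in the weakened form $C_2/C_1<\tau$. This last step is pure algebra: the assumed bound $\delta_{s+r}^{\mathrm{sub}}+\tau\delta_r^{\mathrm{lb}}<\tau-1$ rearranges to $1+\delta_{s+r}^{\mathrm{sub}}<\tau(1-\delta_r^{\mathrm{lb}})$, i.e. $C_2/C_1<\tau\leq k^{(1/p-1/2)q}$, after which Theorem \ref{ROP-lp-Exact} applies and forces $\hat X=X$ for every rank-$r$ matrix $X$ with $b=\mathcal{A}(X)$.

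The step I expect to be the real obstacle is matching the exact \emph{orders} attached to the two RIP constants. In the contradiction argument behind Theorem \ref{ROP-lp-Exact} the lower RUB inequality is invoked on $R_{T_{01}}$, which has rank $r+s$, whereas the upper RUB inequality is invoked on the tail blocks $R_{T_j}$, each of rank at most $s$; read literally this produces constants at orders $r+s$ and $s$, not the orders $r$ and $s+r$ printed in the statement. To land on $\delta_r^{\mathrm{lb}}$ (order $r$) together with $\delta_{s+r}^{\mathrm{sub}}$ (order $s+r$) I would re-run the estimate so that the lower bound is applied only to the leading rank-$r$ piece $R_{\max(r)}=R_{T_0}$, and the upper bound is applied to rank-$(s+r)$ combinations rather than to the bare tail blocks, retaining the decay estimate (\ref{e2.4}) for $j\ge 2$.

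The delicate point is that, unlike the $\ell_2$ regime, for $0<q\le 1$ only the quasi-norm subadditivity $\|\cdot\|_q^q$ (together with $(a+b)^{q/2}\le a^{q/2}+b^{q/2}$) is available, and the usual shifting identities that pair $R_{T_0}$ with each $R_{T_j}$ tend to drag in a power of the block count $N$ that, because $q<1$, overwhelms the gain from raising $\mathcal{A}(R_{T_0})$ to the $q$-th power. Making the regrouping close cleanly without such a factor — or, failing that, reconciling the printed indices $(\delta_r^{\mathrm{lb}},\delta_{s+r}^{\mathrm{sub}})$ with the orders $(r+s,s)$ that the direct argument naturally yields — is where the genuine work lies; the exponent computation and the algebraic reduction to $C_2/C_1<\tau$ above are routine once the order assignment is fixed.
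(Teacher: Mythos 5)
Your reduction is exactly what the paper intends: the paper offers no proof of this corollary beyond the phrase ``as a direct consequence of Theorem \ref{ROP-lp-Exact}'', and the intended argument is precisely your substitution $C_1=1-\delta^{\mathrm{lb}}$, $C_2=1+\delta^{\mathrm{sub}}$ via Remark \ref{MatrixRIPl2lq}, the choice $k=s/r$ (which is admissible since $kr=s\in\mathbb{Z}_+$ and $s>r$), and the exponent collapse $(s/r)^{(1/p-1/2)q}\geq\tau$ followed by the rearrangement $\delta^{\mathrm{sub}}+\tau\delta^{\mathrm{lb}}<\tau-1\iff C_2/C_1<\tau$. All of that is correct.

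The obstacle you flag at the end is not something you failed to resolve; it is a genuine defect in the printed statement. Theorem \ref{ROP-lp-Exact} assumes $\ell_q$-RUB of a single order $(k+1)r=s+r$, so the substitution legitimately yields the corollary only under $\delta_{s+r}^{\mathrm{sub}}+\tau\delta_{s+r}^{\mathrm{lb}}<\tau-1$ (or, tracking the orders at which the two RUB inequalities are actually invoked in (\ref{e2.6}), under $\delta_{s}^{\mathrm{sub}}+\tau\delta_{s+r}^{\mathrm{lb}}<\tau-1$: the upper bound acts on the rank-$s$ tail blocks, the lower bound on the rank-$(s+r)$ head $R_{T_{01}}$). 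The printed hypothesis puts the lower constant at order $r$ only, which by monotonicity $\delta_{r}^{\mathrm{lb}}\leq\delta_{s+r}^{\mathrm{lb}}$ is strictly weaker and therefore does not feed into the theorem. Your suspicion that this cannot be repaired for $q\leq1$ is also well founded: applying the lower bound only to $R_{T_0}=R_{\max(r)}$ forces the block $R_{T_1}$ into the subtracted sum, and the cone constraint only gives $\|\sigma_{T_1}\|_2^q\leq r^{(1/p-1/2)q}\|\sigma_{T_0}\|_2^q$, which destroys the gain of $(r/s)^{(1/p-1/2)q}$; with no inner product available there is no restricted-orthogonality device to absorb it. So the correct conclusion is that the corollary should be read with $\delta_{s+r}^{\mathrm{lb}}$ in place of $\delta_{r}^{\mathrm{lb}}$, and with that correction your argument is complete and coincides with the paper's.
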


However, our results may not be optimal and can be improved furthermore.
\begin{remark}\label{Openproblem1}
We emphasize that our $(k+1)r$ order $\ell_q$-RUB condition for $q=1$ is a litter stronger than the condition $kr$ order RUB condition in \cite{CZ2015}. We note that in \cite{CZ2015}, Cai and Zhang used a sparse representation of a polytope in $\ell_1$ norm (see  \cite[Lemma 1.1]{CZ2014}), which provide a more refined analysis. And Zhang and Li \cite[Lemma 2.1]{ZL2017-1} gave out a similar sparse representation in $\ell_p$ norm for $0<p\leq 1$. However this sparse representation can not be direct used in our model (\ref{MatrixSp}). Therefore, we don't know whether
or not it is possible to reduce this $(k+1)r$ order $\ell_q$-RUB condition to $kr$ order for $k>1$ with $kr\in\mathbb{Z}_+$.
\end{remark}

\subsection{Stable Recovery via Schatten-$p$ Minimization \label{s2.3}}
\quad

In this subsection, we consider the stable recovery. We show that the $\ell_q$-RUB condition of order $(k+1)r$  with constants $C_1,C_2$ satisfies $C_2/{C_1}<k^{(1/p-1/2)q}$ is also sufficient to guarantee the stable recovery of all rank-$r$ matrices via the noisy measurements.

\begin{theorem}\label{ROP-lp-lq-DS}
Let $r$ be any positive integer and $0<p\leq q\leq 1$.  Let $\hat{X}^{\ell_q}$ be the solution of the Schatten-$p$ minimization (\ref{MatrixSp}) with $\mathcal{B}=\mathcal{B}^{\ell_q}(\eta_1)\cap \mathcal{B}^{DS}(\eta_2)$.
\begin{itemize}
\item[(1)]
For any rank-$r$ $X$, let $k>1$ with $kr$ be positive integer, and $\tilde{\mathcal{A}}$ satisfies $\ell_q$-RUB of order $(k+1)r$ with $C_2/{C_1}<k^{(1/p-/2)q}$, then
\begin{align*}
\|\hat{X}&-X\|_{S_2}^q\leq
\Bigg(\bigg(\frac{1}{k}\bigg)^{(1/p-1/2)q}+1\Bigg)\frac{1}{\rho_1 L^{q}}
\min\bigg\{\frac{2}{L^{1-2q}}\eta_1^q,\frac{2^{q/p+q}}{\rho_1}r^{(1/p-1/2)q}\eta_2^q\bigg\},
\end{align*}
where $\rho_1=C_1-C_2\big(1/k\big)^{(1/p-1/2)q}$.
\item[(2)]
And for more general matrix $X$, let $k>2^{\frac{2(q-p)}{q(2-p)}}$ with $kr$ be positive integer, and $\tilde{\mathcal{A}}$ satisfies $\ell_q$-RUB of order $(k+1)r$ with $C_2/{C_1}<2^{1-q/p}k^{(1/p-/2)q}$, then
\begin{align*}
\|\hat{X}&-X\|_{S_2}^q\\
&\leq \Bigg(\frac{C_22^{2q/p-1}}{\rho_2}\bigg(\frac{1}{k}\bigg)^{(1/p-1/2)q}+1\Bigg)
\Bigg(2^{2q/p-1}\bigg(\frac{1}{k}\bigg)^{(1/p-1/2)q}+1\Bigg)\bigg(\frac{\|X_{-\max(r)}\|_{S_p}}{r^{1/p-1/2}}\bigg)^q\nonumber\\
&\hspace*{12pt}+\Bigg(2^{q/p-1}\bigg(\frac{1}{k}\bigg)^{(1/p-1/2)q}+1\Bigg)\frac{1}{\rho_2 L^{q}}
\min\bigg\{\frac{2}{L^{1-2q}}\eta_1^q,\frac{2^{2q/p+q+1}}{\rho_1}r^{(1/p-1/2)q}\eta_2^q\bigg\},
\end{align*}
where $\rho_2=C_1-C_22^{q/p-1}\big(1/k\big)^{(1/p-1/2)q}$.
\end{itemize}
\end{theorem}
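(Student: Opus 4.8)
The plan is to run the same singular-value--partition argument used for Theorem \ref{ROP-lp-Exact}, but now keeping $\|\mathcal{A}(R)\|_q^q$ as a live quantity to be squeezed between a lower bound coming from $\ell_q$-RUB and two upper bounds coming from feasibility. Write $R=\hat{X}-X$. Since $\hat{X}$ minimizes $\|\cdot\|_{S_p}^p$ and $X$ is feasible, $\|\hat{X}\|_{S_p}^p\le\|X\|_{S_p}^p$, so the cone inequality of Lemma \ref{MatrixConeconstraintinequality} holds; for a genuine rank-$r$ target (part (1)) it collapses to $\|R_{-\max(r)}\|_{S_p}^p\le\|R_{\max(r)}\|_{S_p}^p$, while for general $X$ (part (2)) it carries the extra term $2\|X_{-\max(r)}\|_{S_p}^p$. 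I would then take the SVD of $R$, split its singular values into $T_0$ (the top $r$), $T_1$ (the next $s=kr$), $T_2,\dots$ exactly as before, and reuse the tail estimate (\ref{e2.4}).

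Next I would reproduce the lower bound (\ref{e2.6})--(\ref{e2.7}) almost verbatim: by $\ell_q$-RUB of order $(k+1)r$, $q$-subadditivity and (\ref{e2.4}), together with the norm comparison $\|R_{\max(r)}\|_{S_p}\le r^{1/p-1/2}\|R_{\max(r)}\|_{S_2}\le r^{1/p-1/2}\|R_{T_{01}}\|_{S_2}$, one obtains $\rho_1 L\|R_{T_{01}}\|_{S_2}^q\le\|\mathcal{A}(R)\|_q^q$ in the rank-$r$ case. In the general case the extra $2\|X_{-\max(r)}\|_{S_p}^p$ forces the splitting $(a+b)^{q/p}\le 2^{q/p-1}(a^{q/p}+b^{q/p})$ (valid since $q/p\ge1$), which is exactly what converts $C_1$ into $\rho_2=C_1-C_2 2^{q/p-1}(1/k)^{(1/p-1/2)q}$ and peels off the explicit $\|X_{-\max(r)}\|_{S_p}$ remainder; the hypotheses $C_2/C_1<2^{1-q/p}k^{(1/p-1/2)q}$ and $k>2^{2(q-p)/(q(2-p))}$ are precisely what keep $\rho_2>0$ and the remainder coefficients controlled.

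The heart of the argument is to bound $\|\mathcal{A}(R)\|_q^q$ from above in two ways, using that both $\hat{X}$ and $X$ lie in $\mathcal{B}^{\ell_q}(\eta_1)\cap\mathcal{B}^{DS}(\eta_2)$, so $\mathcal{A}(R)$ is a difference of two admissible residuals. For the $\ell_q$ constraint, $q$-subadditivity gives $\|\mathcal{A}(R)\|_q^q\le 2L^q\eta_1^q$ at once, producing the $\eta_1$-branch of the minimum. The Dantzig constraint is the delicate one, because $\ell_q$-RUB with $q\le1$ supplies none of the $\ell_2$/inner-product control that the classical Dantzig analysis relies on. I would bridge the gap with Hölder in $\mathbb{R}^L$, $\|\mathcal{A}(R)\|_q^q\le L^{1-q/2}\|\mathcal{A}(R)\|_2^q$, followed by trace duality $\|\mathcal{A}(R)\|_2^2=\langle\mathcal{A}^*\mathcal{A}(R),R\rangle\le\|\mathcal{A}^*\mathcal{A}(R)\|_{S_\infty}\|R\|_{S_1}\le 2\eta_2\|R\|_{S_1}$, and finally the cone-driven bound $\|R\|_{S_1}\lesssim r^{1/p-1/2}\|R_{T_{01}}\|_{S_2}$. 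Feeding these into the lower bound produces a \emph{self-referential} inequality in $\|R_{T_{01}}\|_{S_2}$; dividing through by $\|R_{T_{01}}\|_{S_2}^{q/2}$ and squaring is exactly what turns the half-power $\eta_2^{q/2}$ back into the advertised full power $\eta_2^q$ and the single $\rho_1$ into $\rho_1^{2}$.

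Finally I would pass from $\|R_{T_{01}}\|_{S_2}^q$ to the full error $\|\hat{X}-X\|_{S_2}^q$ by adding the tail: using $\|\cdot\|_2\le\|\cdot\|_q$ on the block sequence together with (\ref{e2.4}) gives $\|R_{T_{01}^c}\|_{S_2}^q\le(1/k)^{(1/p-1/2)q}\|R_{T_{01}}\|_{S_2}^q$ in the rank-$r$ case, which is the source of the prefactor $(1/k)^{(1/p-1/2)q}+1$, whereas in the general case I would invoke Lemma \ref{Vitaltechnique} to control $\|R_{-\max(r)}\|_{S_2}$ by $\|R_{-\max(r)}\|_{S_p}$ and so absorb the approximation error into the stated $\|X_{-\max(r)}\|_{S_p}$ terms. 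Taking the minimum of the $\eta_1$- and $\eta_2$-bounds (legitimate since $\hat{X}$ meets both constraints simultaneously) yields the two estimates. I expect the main obstacle to be the Dantzig step of the third paragraph---locating the self-referential closing that reconciles $\ell_q$-RUB with the inner-product structure of $\mathcal{B}^{DS}$---with a secondary burden in the part-(2) bookkeeping, where each $(a+b)^{q/p}$ split must be tracked to land exactly on $\rho_2$ and the $2^{q/p}$-type constants.
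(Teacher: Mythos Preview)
Your proposal is correct and follows essentially the same route as the paper: the paper itself says the theorem ``can be deduced by following Proposition \ref{ROP-lp-lq} for $\mathcal{B}^{\ell_q}(\eta_1)$ and Proposition \ref{ROP-lp-DS} for $\mathcal{B}^{DS}(\eta_2)$,'' and those two propositions are proved exactly along your lines---the SVD block partition with $s=kr$, the $\ell_q$-RUB lower bound (\ref{e2.6}), the tube bound $\|\mathcal{A}(R)\|_q^q\le 2L^q\eta_1^q$ for the $\ell_q$ branch, and for the Dantzig branch the chain $\|\mathcal{A}(R)\|_q^q\le L^{1-q/2}\|\mathcal{A}(R)\|_2^q$, $\|\mathcal{A}(R)\|_2^2\le\|\mathcal{A}^*\mathcal{A}(R)\|_{S_\infty}\|R\|_{S_1}\le 2\eta_2\|R\|_{S_p}$, followed by the self-referential closure you describe. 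The only cosmetic difference is that for the rank-$r$ tail $\|R_{T_{01}^c}\|_{S_2}^q$ the paper goes through Lemma \ref{Vitaltechnique} in both cases (see (\ref{e3.7})--(\ref{e3.8-1})), then discards the extra factor $(p/2)^{q/2}((2-p)/2)^{(1/p-1/2)q}\le 1$, whereas your block-wise $\|\cdot\|_2\le\|\cdot\|_q$ argument combined with (\ref{e2.4}) reaches $(1/k)^{(1/p-1/2)q}\|R_{T_{01}}\|_{S_2}^q$ directly; either way the stated prefactor results.
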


In fact, Theorem \ref{ROP-lp-lq-DS} can be deduced by following Proposition \ref{ROP-lp-lq} for $\mathcal{B}^{\ell_q}(\eta_1)$ and Proposition \ref{ROP-lp-DS} for $\mathcal{B}^{DS}(\eta_2)$.
First, we give out the estimate $\|\hat{X}-X\|_{S_2}^q$ for the $\mathcal{B}=\mathcal{B}^{\ell_q}(\eta_1)$.

\begin{proposition}\label{ROP-lp-lq}
Let $r$ be any positive integer and $0<p\leq q\leq 1$.  Let $\hat{X}^{\ell_q}$ be the solution of the Schatten-$p$ minimization (\ref{MatrixSp}) with $\mathcal{B}=\mathcal{B}^{\ell_q}(\eta_1)$.
\begin{itemize}
\item[(1)]
For any rank-$r$ $X$, let $k>1$ with $kr$ be positive integer, and $\mathcal{A}$ satisfies $\ell_q$-RUB of order $(k+1)r$ with $C_2/{C_1}<k^{(1/p-/2)q}$, then
\begin{align*}
\|\hat{X}^{\ell_q}&-X\|_{S_2}^q\leq\frac{2}{\rho_1 L^{1-q}}
\Bigg(\bigg(\frac{1}{k}\bigg)^{(1/p-1/2)q}+1\Bigg)\eta_1^q,
\end{align*}
where $\rho_1=C_1-C_2\big(1/k\big)^{(1/p-1/2)q}$.
\item[(2)]
And for more general matrix $X$, let $k>2^{\frac{2(q-p)}{q(2-p)}}$ with $kr$ be positive integer, and $\mathcal{A}$ satisfies $\ell_q$-RUB of order $(k+1)r$ with $C_2/{C_1}<2^{1-q/p}k^{(1/p-/2)q}$, then
\begin{align*}
\|\hat{X}^{\ell_q}&-X\|_{S_2}^q\\
&\leq \Bigg(\frac{C_22^{2q/p-1}}{\rho_2}\bigg(\frac{1}{k}\bigg)^{(1/p-1/2)q}+1\Bigg)\Bigg(2^{2q/p-1}
\bigg(\frac{1}{k}\bigg)^{(1/p-1/2)q}+1\Bigg)\bigg(\frac{\|X_{-\max(r)}\|_{S_p}}{r^{1/p-1/2}}\bigg)^q\nonumber\\
&\hspace*{12pt}+\frac{2}{\rho_2 L^{1-q}}\Bigg(2^{q/p-1}
\bigg(\frac{1}{k}\bigg)^{(1/p-1/2)q}+1\Bigg)\eta_1^q,
\end{align*}
where $\rho_2=C_1-C_22^{q/p-1}\big(1/k\big)^{(1/p-1/2)q}$.
\end{itemize}
\end{proposition}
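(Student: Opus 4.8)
The plan is to prove both parts simultaneously by pitting a lower bound against an upper bound for $\|\mathcal{A}(R)\|_q^q$, where $R=\hat{X}^{\ell_q}-X$, reusing the partition-and-RUB machinery of the proof of Theorem \ref{ROP-lp-Exact} but now carrying the noise level and the compressibility error $\|X_{-\max(r)}\|_{S_p}$ through every estimate. First I would record the cone constraint. Since the true $X$ satisfies $b-\mathcal{A}(X)=z$ with $\|z\|_q/L\leq\eta_1$, it is feasible for (\ref{MatrixSp}) with $\mathcal{B}=\mathcal{B}^{\ell_q}(\eta_1)$, so the minimizer obeys $\|\hat{X}^{\ell_q}\|_{S_p}^p\leq\|X\|_{S_p}^p$ and Lemma \ref{MatrixConeconstraintinequality} gives
$$
\|R_{-\max(r)}\|_{S_p}^p\leq 2\|X_{-\max(r)}\|_{S_p}^p+\|R_{\max(r)}\|_{S_p}^p.
$$
In part (1), $X_{-\max(r)}=O$, so this collapses to the cone inequality (\ref{e2.3}) already used for exact recovery.

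For the upper bound, both $\hat{X}^{\ell_q}$ and $X$ lie in the feasible set, so the $q$-subadditivity $\|u+v\|_q^q\leq\|u\|_q^q+\|v\|_q^q$ (valid for $0<q\leq1$) gives
$$
\|\mathcal{A}(R)\|_q^q\leq\|b-\mathcal{A}(\hat{X}^{\ell_q})\|_q^q+\|b-\mathcal{A}(X)\|_q^q\leq 2(L\eta_1)^q.
$$
For the lower bound I would repeat the computation (\ref{e2.6})--(\ref{e2.7}): partition the singular-value tail $T_0^c$ into blocks of size $s=kr$, apply $\ell_q$-RUB of order $(k+1)r$ to $R_{T_{01}}$ and to each $R_{T_j}$, and invoke the tail estimate (\ref{e2.4}). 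The only change is that the general cone constraint replaces $\|\sigma_{-\max(r)}\|_p^p$ by $2\|X_{-\max(r)}\|_{S_p}^p+\|R_{\max(r)}\|_{S_p}^p$. This is the technical heart: because $q/p\geq1$ the map $t\mapsto t^{q/p}$ is convex, so subadditivity is unavailable and I must instead use $(a+b)^{q/p}\leq 2^{q/p-1}(a^{q/p}+b^{q/p})$. This single step is what breeds the powers of $2$ in the constants of part (2), what forces the strengthened ratio $C_2/C_1<2^{1-q/p}k^{(1/p-1/2)q}$ (ensuring $\rho_2=C_1-C_2 2^{q/p-1}(1/k)^{(1/p-1/2)q}>0$), and what makes the threshold $k>2^{2(q-p)/(q(2-p))}$ necessary, since that threshold is exactly $k^{(1/p-1/2)q}>2^{q/p-1}$, the condition under which the new ratio bound remains compatible with $C_2\geq C_1$. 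After bounding $\|R_{\max(r)}\|_{S_p}^q\leq r^{(1/p-1/2)q}\|R_{T_{01}}\|_{S_2}^q$, the lower bound takes the form
$$
\|\mathcal{A}(R)\|_q^q\geq L\rho_2\|R_{T_{01}}\|_{S_2}^q-C_2 L\,2^{2q/p-1}\Big(\tfrac{1}{k}\Big)^{(1/p-1/2)q}\frac{\|X_{-\max(r)}\|_{S_p}^q}{r^{(1/p-1/2)q}}.
$$

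Finally I would combine the two bounds: setting the lower bound no larger than $2(L\eta_1)^q$ and solving yields a bound on $\|R_{T_{01}}\|_{S_2}^q$ in terms of the compressibility error $\|X_{-\max(r)}\|_{S_p}^q/r^{(1/p-1/2)q}$ and the noise quantity $\eta_1^q/L^{1-q}$. To pass from $\|R_{T_{01}}\|_{S_2}^q$ to the full error $\|R\|_{S_2}^q=\|\hat{X}^{\ell_q}-X\|_{S_2}^q$, I would use orthogonality of the singular-value blocks to write $\|R\|_{S_2}^2=\|R_{T_{01}}\|_{S_2}^2+\sum_{j\geq2}\|R_{T_j}\|_{S_2}^2$ and then apply subadditivity of $t\mapsto t^{q/2}$ together with the tail estimate (\ref{e2.4}) a second time; substituting the bound on $\|R_{T_{01}}\|_{S_2}^q$ and collecting coefficients produces the stated product-form constants. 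Part (1) is then just the specialization $X_{-\max(r)}=O$, in which the convexity factors $2^{q/p-1}$ disappear and the cleaner leading constant $\rho_1=C_1-C_2(1/k)^{(1/p-1/2)q}$ and requirement $C_2/C_1<k^{(1/p-1/2)q}$ reappear. I expect the main obstacle to be purely the convex-power bookkeeping around the term $2\|X_{-\max(r)}\|_{S_p}^p+\|R_{\max(r)}\|_{S_p}^p$: keeping the $2^{q/p-1}$ and $2^{2q/p-1}$ factors attached to the correct quantities is what determines both the final constants and the admissible ranges of $k$ and $C_2/C_1$.
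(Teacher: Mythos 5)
Your proposal is correct, and its skeleton (tube constraint from feasibility of $X$, cone constraint from Lemma \ref{MatrixConeconstraintinequality}, block decomposition of the singular values with $s=kr$, RUB lower bound on $\|\mathcal{A}(R)\|_q^q$ pitted against the noise upper bound $2L^q\eta_1^q$, and the convexity inequality $(a+b)^{q/p}\le 2^{q/p-1}(a^{q/p}+b^{q/p})$ generating the powers of $2$ and the thresholds on $k$ and $C_2/C_1$) is exactly the paper's. The one place you genuinely diverge is the final tail estimate: to control $\|R_{T_{01}^c}\|_{S_2}^q$ you reuse the block estimate (\ref{e2.4}) together with the cone constraint, whereas the paper instead uses the interpolation $\|\sigma_{-\max(s+r)}\|_2^2\le\|\sigma_{-\max(s+r)}\|_\infty^{2-p}\|\sigma_{-\max(s+r)}\|_p^p$ and maximizes $f(t)=t^{(2-p)/p}(c-st)$ via Lemma \ref{Vitaltechnique} (the Cai--Zhang Lemma 7.8 device), which produces an extra factor $(p/2)^{q/2}\big((2-p)/2\big)^{(1/p-1/2)q}<1$ in front of the tail term. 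Since the paper immediately discards that factor by bounding it by $1$ when assembling the final estimate (\ref{e3.10}), your more elementary route lands on the same stated constants (indeed your coefficient $A(B+1)+B'$ of the compressibility term is dominated by the paper's $(A+1)(B'+1)$), so nothing is lost for this proposition; the paper's sharper intermediate bound would only matter if one wanted to track the $p$-dependence of the constants.
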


\begin{proof}
We denote $l=\min\{m,n\}$. Let $R=\hat{X}^{\ell_q}-X$. We have the following tube constraint inequality
\begin{align}\label{e3.1}
\|\mathcal{A}(R)\|_q^q\leq\|\mathcal{A}(\hat{X}^{\ell_p})-b\|_q^q+\|b-\mathcal{A}(X)\|_q^q
\leq(L\eta_1)^q+(L\eta_1)^q=2L^{q}\eta_1^q.
\end{align}
By Lemma \ref{MatrixConeconstraintinequality}, we have cone constraint inequality as follows
\begin{align}\label{e3.2}
\|R_{-max(r)}\|_{S_p}^p\leq2\|X_{-max(r)}\|_{S_p}^p+\|R_{max(r)}\|_{S_p}^p.
\end{align}

We still assume that the singular value decomposition of $R$ is
$$
R=U\Sigma V^{T}=U\text{diag}(\sigma)V^{T},
$$
where $\sigma=(\sigma_1,\ldots,\sigma_l)$ is the singular vector with $\sigma_1\geq\sigma_2\geq\cdots\geq\sigma_l\geq 0$. And we denote $T_0:=\text{supp}(\sigma_{\max(r)})=\{1,\ldots,r\}$. We also partition $T_0^c=\{r+1,\ldots,l\}$ as
$$
T_0^c=\cup_{j=1}^J T_j,
$$
which is the same as the proof of Theorem \ref{ROP-lp-Exact}. And (\ref{e2.4}) still holds.

Let $T_{01}=T_0\cup T_1$, we also consider identity (\ref{e2.5}).

First, by (\ref{e2.6}) and (\ref{e3.2}),  we can estimate a lower bound for (\ref{e2.5}) as follows
\begin{align*}
\|\mathcal{A}(R)\|_q^q&\geq C_1L\|R_{T_{01}}\|_{S_2}^q
-\frac{C_2L}{\big(s^{1/p-1/2}\big)^q}\big(\|\sigma_{\max(r)}\|_p^p+2\|X_{-\max(r)}\|_{S_p}^p\big)^{q/p}\\
&\geq C_1L\|R_{T_{01}}\|_{S_2}^q
-C_2L\bigg(\frac{r}{s}\bigg)^{(1/p-1/2)q}
\Bigg(\|\sigma_{\max(r)}\|_2^p+2\bigg(\frac{\|X_{-\max(r)}\|_{S_p}}{r^{1/p-1/2}}\bigg)^p\Bigg)^{q/p}\\
\end{align*}
It follows from $(|a|+|b|)^{q/p}\leq 2^{q/p-1}\big(|a|^{q/p}+|b|^{q/p}\big)$ for $q/p\geq 1$ that
\begin{align*}
\Bigg(\|\sigma_{\max(r)}\|_2^p+2\bigg(\frac{\|X_{-\max(r)}\|_{S_p}^p}{r^{1/p-1/2}}\bigg)^p\Bigg)^{q/p}
\leq 2^{q/p-1}\Bigg(\|\sigma_{\max(r)}\|_2^q+2^{q/p}\bigg(\frac{\|X_{-\max(r)}\|_{S_p}}{r^{1/p-1/2}}\bigg)^q\Bigg).
\end{align*}
But when $X$ is rank-$r$, i.e., $\|X_{-\max(r)}\|_{S_p}^p=0$, we have a simpler form
\begin{align*}
\Bigg(\|\sigma_{\max(r)}\|_2^p+2\bigg(\frac{\|X_{-\max(r)}\|_{S_p}^p}{r^{1/p-1/2}}\bigg)^p\Bigg)^{q/p}
=\|\sigma_{\max(r)}\|_2^q.
\end{align*}
Therefore
\begin{align}\label{e3.3}
\|\mathcal{A}(R)\|_q^q
&\geq C_1L\|R_{T_{01}}\|_{S_2}^q-C_2L2^{q/p-1}\bigg(\frac{r}{s}\bigg)^{(1/p-1/2)q}
\Bigg(\|\sigma_{\max(r)}\|_2^q+2^{q/p}\bigg(\frac{\|X_{-\max(r)}\|_{S_p}}{r^{1/p-1/2}}\bigg)^q\Bigg)\nonumber\\
&\geq L\bigg(C_1-C_22^{q/p-1}\bigg(\frac{r}{s}\bigg)^{(1/p-1/2)q}\bigg)\|R_{T_{01}}\|_{S_2}^q
-C_2L2^{2q/p-1}\bigg(\frac{r}{s}\bigg)^{(1/p-1/2)q}\bigg(\frac{\|X_{-\max(r)}\|_{S_p}}{r^{1/p-1/2}}\bigg)^q
\end{align}
or
\begin{align}\label{e3.3-1}
\|\mathcal{A}(R)\|_q^q
&\geq C_1L\|R_{T_{01}}\|_{S_2}^q-C_2L\bigg(\frac{r}{s}\bigg)^{(1/p-1/2)q}
\|\sigma_{\max(r)}\|_2^q\nonumber\\
&\geq L\bigg(C_1-C_2\bigg(\frac{r}{s}\bigg)^{(1/p-1/2)q}\bigg)\|R_{T_{01}}\|_{S_2}^q.
\end{align}

We also need an upper bound of $\|\mathcal{A}(R)\|_q^q$. Using (\ref{e3.1}), we have
\begin{align}\label{e3.4}
\|\mathcal{A}(R)\|_q^q\leq 2L^q\eta_1^q.
\end{align}

Combining the lower bound (\ref{e3.3}) (or (\ref{e3.3-1})) with the upper bound (\ref{e3.4}), we get
\begin{align}\label{e3.5}
L\bigg(C_1-C_22^{q/p-1}\bigg(\frac{r}{s}\bigg)^{(1/p-1/2)q}\bigg)\|R_{T_{01}}\|_{S_2}^q
-C_2L2^{2q/p-1}\bigg(\frac{r}{s}\bigg)^{(1/p-1/2)q}\bigg(\frac{\|X_{-\max(r)}\|_{S_p}}{r^{1/p-1/2}}\bigg)^q
\leq  2L^q\eta_1^q
\end{align}
or
\begin{align}\label{e3.5-1}
L\bigg(C_1-C_2\bigg(\frac{r}{s}\bigg)^{(1/p-1/2)q}\bigg)\|R_{T_{01}}\|_{S_2}^q
\leq  2L^q\eta_1^q
\end{align}

Taking $s=kr$ with $kr\in\mathbb{Z}_+$. Note that
\begin{align*}
\rho=
\begin{cases}
C_1-C_2\big(\frac{1}{k}\big)^{(1/p-1/2)q}>0, &\text{if}~X \text{~is~rank}~r\\
C_1-C_22^{q/p-1}\big(\frac{1}{k}\big)^{(1/p-1/2)q}>0, &\text{otherwise}.
\end{cases}
\end{align*}
Therefore
\begin{align}\label{e3.6}
\|R_{T_{01}}\|_{S_2}^q\leq \frac{C_22^{2q/p-1}}{\rho}\bigg(\frac{1}{k}\bigg)^{(1/p-1/2)q}\bigg(\frac{\|X_{-\max(r)}\|_{S_p}}{r^{1/p-1/2}}\bigg)^q
+\frac{2}{\rho L^{1-q}}\eta_1^q
\end{align}
or
\begin{align}\label{e3.6-1}
\|R_{T_{01}}\|_{S_2}^q\leq\frac{2}{\rho L^{1-q}}\eta_1^q.
\end{align}

Next, we estimate $\|R_{T_{01}^c}\|_{S_2}^q$. Here, we use some idea of the proof of \cite[Lemma 7.8]{CZ2015}. First, by a simply computation, we have
\begin{align}\label{e3.7}
\|R_{T_{01}^c}\|_{S_2}^q&=\|\sigma_{-\max{(s+r)}}\|_2^q
\leq\Big(\big(\|\sigma_{-\max(s+r)}\|_{\infty}^{p}\big)^{(2-p)/p}\|\sigma_{-\max(s+r)}\|_p^p\Big)^{q/2}\nonumber\\
&\leq\Bigg((\sigma_{s+r}^p)^{(2-p)/p}
\bigg(\|\sigma_{-\max(r)}\|_p^p-\sum_{j=r+1}^{s+r}\sigma_j^p\bigg)\Bigg)^{q/2}\nonumber\\
&\leq\Bigg((\sigma_{s+r}^p)^{(2-p)/p}
\bigg(\big(\|\sigma_{\max(r)}\|_p^p+2\|X_{-\max(r)}\|_{S_p}^p\big)-s\sigma_{s+r}^p\bigg)\Bigg)^{q/2}\nonumber\\
&:=\big(f(\sigma_{s+r}^p)\big)^{q/2},
\end{align}
where
$$
f(t)=t^{(2-p)/p}\Big(\big(\|\sigma_{\max(r)}\|_p^p+2\|X_{-\max(r)}\|_{S_p}^p\big)-st\Big).
$$
We need to estimate the upper bound of $f(\sigma_{s+r}^p)$. By Lemma \ref{Vitaltechnique}, we have
\begin{align*}
f(t)&\leq\frac{p}{2}\bigg(\frac{2-p}{2s}\bigg)^{(2-p)/p}\big(\|\sigma_{\max(r)}\|_p^p+2\|X_{-\max(r)}\|_{S_p}^p\big)^{2/p}\\
&\leq \frac{p}{2}\bigg(\frac{2-p}{2}\bigg)^{(2-p)/p}\bigg(\frac{1}{k}\bigg)^{(1/p-1/2)2}
\Bigg(\|\sigma_{\max(r)}\|_2^p+2\bigg(\frac{\|X_{-\max(r)}\|_{S_p}}{r^{1/p-1/2}}\bigg)^p\Bigg)^{2/p},
\end{align*}
which implies that
\begin{align}\label{e3.8}
\|R_{T_{01}^c}\|_{S_2}^q&\leq \bigg(\frac{p}{2}\bigg)^{q/2}
\bigg(\frac{2-p}{2}\bigg)^{(1/p-1/2)q}\bigg(\frac{1}{k}\bigg)^{(1/p-1/2)q}
\Bigg(\|\sigma_{\max(r)}\|_2^p+2\bigg(\frac{\|X_{-\max(r)}\|_{S_p}}{r^{1/p-1/2}}\bigg)^p\Bigg)^{q/p}\nonumber\\
&\leq2^{q/p-1}\bigg(\frac{p}{2}\bigg)^{q/2}
\bigg(\frac{2-p}{2}\bigg)^{(1/p-1/2)q}\bigg(\frac{1}{k}\bigg)^{(1/p-1/2)q}
\Bigg(\|\sigma_{\max(r)}\|_2^q+2^{q/p}\bigg(\frac{\|X_{-\max(r)}\|_{S_p}}{r^{1/p-1/2}}\bigg)^q\Bigg)\nonumber\\
&\leq2^{q/p-1}\bigg(\frac{p}{2}\bigg)^{q/2}
\bigg(\frac{2-p}{2}\bigg)^{(1/p-1/2)q}\bigg(\frac{1}{k}\bigg)^{(1/p-1/2)q}
\Bigg(\|R_{T_{01}}\|_{S_2}^q+2^{q/p}\bigg(\frac{\|X_{-\max(r)}\|_{S_p}}{r^{1/p-1/2}}\bigg)^q\Bigg).
\end{align}
or
\begin{align}\label{e3.8-1}
\|R_{T_{01}^c}\|_{S_2}^q&\leq \bigg(\frac{p}{2}\bigg)^{q/2}
\bigg(\frac{2-p}{2}\bigg)^{(1/p-1/2)q}\bigg(\frac{1}{k}\bigg)^{(1/p-1/2)q}
\big(\|\sigma_{\max(r)}\|_2^p\big)^{q/p}\nonumber\\
&\leq\bigg(\frac{p}{2}\bigg)^{q/2}
\bigg(\frac{2-p}{2}\bigg)^{(1/p-1/2)q}\bigg(\frac{1}{k}\bigg)^{(1/p-1/2)q}\|R_{T_{01}}\|_{S_2}^q.
\end{align}

It follows from (\ref{e3.8}) (or \ref{e3.8-1}) that
\begin{align}\label{e3.9}
\|R\|_{S_2}^q&\leq \|R_{T_{01}}\|_{S_2}^q+\|R_{T_{01}^c}\|_{S_2}^q\nonumber\\
&\leq \Bigg(2^{q/p-1}\bigg(\frac{p}{2}\bigg)^{q/2}
\bigg(\frac{2-p}{2}\bigg)^{(1/p-1/2)q}\bigg(\frac{1}{k}\bigg)^{(1/p-1/2)q}+1\Bigg)\|R_{T_{01}}\|_{S_2}^q\nonumber\\
&\hspace*{12pt}+2^{2q/p-1}\bigg(\frac{p}{2}\bigg)^{q/2}
\bigg(\frac{2-p}{2}\bigg)^{(1/p-1/2)q}\bigg(\frac{1}{k}\bigg)^{(1/p-1/2)q}
\bigg(\frac{\|X_{-\max(r)}\|_{S_p}}{r^{1/p-1/2}}\bigg)^q
\end{align}
or
\begin{align}\label{e3.9-1}
\|R\|_{S_2}^q&\leq \|R_{T_{01}}\|_{S_2}^q+\|R_{T_{01}^c}\|_{S_2}^q\nonumber\\
&\leq \Bigg(\bigg(\frac{p}{2}\bigg)^{q/2}
\bigg(\frac{2-p}{2}\bigg)^{(1/p-1/2)q}\bigg(\frac{1}{k}\bigg)^{(1/p-1/2)q}+1\Bigg)\|R_{T_{01}}\|_{S_2}^q.
\end{align}

Then substituting (\ref{e3.6}) into (\ref{e3.9}), we have
\begin{align}\label{e3.10}
\|R\|_{S_2}^q&\leq
\Bigg\{\frac{C_22^{2q/p-1}}{\rho}\bigg(\frac{1}{k}\bigg)^{(1/p-1/2)q}\Bigg(2^{q/p-1}\bigg(\frac{p}{2}\bigg)^{q/2}
\bigg(\frac{2-p}{2}\bigg)^{(1/p-1/2)q}\bigg(\frac{1}{k}\bigg)^{(1/p-1/2)q}+1\Bigg)\nonumber\\
&\hspace*{12pt}+2^{2q/p-1}\bigg(\frac{p}{2}\bigg)^{q/2}
\bigg(\frac{2-p}{2}\bigg)^{(1/p-1/2)q}\bigg(\frac{1}{k}\bigg)^{(1/p-1/2)q}\Bigg\}
\bigg(\frac{\|X_{-\max(r)}\|_{S_p}}{r^{1/p-1/2}}\bigg)^q\nonumber\\
&\hspace*{12pt}+\frac{2}{\rho L^{1-q}}\Bigg(2^{q/p-1}\bigg(\frac{p}{2}\bigg)^{q/2}
\bigg(\frac{2-p}{2}\bigg)^{(1/p-1/2)q}\bigg(\frac{1}{k}\bigg)^{(1/p-1/2)q}+1\Bigg)\eta_1^q\nonumber\\
&\leq \Bigg(\frac{C_22^{2q/p-1}}{\rho}\bigg(\frac{1}{k}\bigg)^{(1/p-1/2)q}+1\Bigg)\Bigg(2^{2q/p-1}
\bigg(\frac{1}{k}\bigg)^{(1/p-1/2)q}+1\Bigg)\bigg(\frac{\|X_{-\max(r)}\|_{S_p}}{r^{1/p-1/2}}\bigg)^q\nonumber\\
&\hspace*{12pt}+\frac{2}{\rho L^{1-q}}\Bigg(2^{q/p-1}
\bigg(\frac{1}{k}\bigg)^{(1/p-1/2)q}+1\Bigg)\eta_1^q,
\end{align}
where the last inequality follows from $0<p/2\leq 1$ and $(1/p-1/2)q>0$.

And substituting (\ref{e3.6-1}) into (\ref{e3.9-1}), we have
\begin{align}\label{e3.10-1}
\|R\|_{S_2}^q&\leq
\frac{2}{\rho L^{1-q}}\Bigg(\bigg(\frac{1}{k}\bigg)^{(1/p-1/2)q}+1\Bigg)\eta_1^q.
\end{align}
which finishes the proof.
\end{proof}

Next, we consider Dantzig selector constraint $\mathcal{B}^{DS}(\eta_2)$.
\begin{proposition}\label{ROP-lp-DS}
Let $r$ be any positive integer and $0<p\leq q\leq 1$.  Let $\hat{X}^{DS}$ be the solution of the Schatten-$p$ minimization (\ref{MatrixSp}) with $\mathcal{B}=\mathcal{B}^{DS}(\eta_2)$.
\begin{itemize}
\item[(1)]
For any rank-$r$ $X$, let $k>1$ with $kr$ be positive integer and $\mathcal{A}$ satisfies $\ell_q$-RUB of order $(k+1)r$ with $C_2/{C_1}<k^{(1/p-/2)q}$, then
\begin{align*}
\|\hat{X}^{DS}&-X\|_{S_2}^q
\leq\frac{2^{q/p+q}r^{(1/p-1/2)q}}{L^q\rho_1^2}\Bigg(\bigg(\frac{1}{k}\bigg)^{(1/p-1/2)q}+1\Bigg)\eta_2^q,
\end{align*}
where $\rho_1=C_1-C_2\big(1/k\big)^{(1/p-1/2)q}$.
\item[(2)]
And for more general matrix $X$, let $k>2^{\frac{2(q-p)}{q(2-p)}}$ with $kr$ be positive integer, and $\mathcal{A}$ satisfies $\ell_q$-RUB of order $(k+1)r$ with $C_2/{C_1}<2^{1-q/p}k^{(1/p-/2)q}$, then
\begin{align*}
\|\hat{X}^{DS}&-X\|_{S_2}^q\\
&\leq\Bigg\{\bigg(\frac{C_22^{2q/p}}{\rho_2}\bigg(\frac{1}{k}\bigg)^{(1/p-1/2)q}+\frac{3}{2}\bigg)
\Bigg(2^{2q/p-1}\bigg(\frac{1}{k}\bigg)^{(1/p-1/2)q}+1\Bigg)
\bigg(\frac{\|X_{-\max(r)}\|_{S_p}}{r^{1/p-1/2}}\bigg)^q\nonumber\\
&\hspace*{12pt}+\frac{2^{2q/p+q+1}r^{(1/p-1/2)q}}{L^q\rho_2^2}\Bigg(2^{q/p-1}
\bigg(\frac{1}{k}\bigg)^{(1/p-1/2)q}+1\Bigg)\eta_2^q,
\end{align*}
where $\rho_2=C_1-C_22^{q/p-1}\big(1/k\big)^{(1/p-1/2)q}$.
\end{itemize}
\end{proposition}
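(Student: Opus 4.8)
The plan is to follow the proof of Proposition \ref{ROP-lp-lq} line by line, keeping its RUB lower bound and its tail estimate unchanged and replacing only the tube inequality (\ref{e3.1}) by an upper bound tailored to the Dantzig selector constraint. Write $R=\hat{X}^{DS}-X$. Since both $X$ and $\hat{X}^{DS}$ lie in the feasible set $\mathcal{B}^{DS}(\eta_2)$, the triangle inequality for $\|\cdot\|_{S_\infty}$ gives $\|\mathcal{A}^*(\mathcal{A}(R))\|_{S_\infty}\leq 2\eta_2$. The new ingredient is the self-adjoint identity $\|\mathcal{A}(R)\|_2^2=\langle\mathcal{A}^*\mathcal{A}(R),R\rangle$ combined with the trace duality between $\|\cdot\|_{S_\infty}$ and $\|\cdot\|_{S_1}$,
\begin{align*}
\|\mathcal{A}(R)\|_2^2=\langle\mathcal{A}^*\mathcal{A}(R),R\rangle\leq\|\mathcal{A}^*\mathcal{A}(R)\|_{S_\infty}\|R\|_{S_1}\leq 2\eta_2\|R\|_{S_1}.
\end{align*}
Passing from the $\ell_2$ to the $\ell_q$ norm on $\mathbb{R}^L$ by the power-mean inequality $\|v\|_q\leq L^{1/q-1/2}\|v\|_2$ then produces the Dantzig analogue of the tube constraint,
\begin{align*}
\|\mathcal{A}(R)\|_q^q\leq L^{1-q/2}\|\mathcal{A}(R)\|_2^q\leq L^{1-q/2}\big(2\eta_2\|R\|_{S_1}\big)^{q/2}.
\end{align*}

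The second ingredient is to control $\|R\|_{S_1}$ by $\|R_{T_{01}}\|_{S_2}$. For the rank-$r$ case of part (1) the cone constraint of Lemma \ref{MatrixConeconstraintinequality} collapses to $\|R_{-\max(r)}\|_{S_p}^p\leq\|R_{\max(r)}\|_{S_p}^p$, whence
\begin{align*}
\|R\|_{S_1}\leq\|R\|_{S_p}\leq 2^{1/p}\|R_{\max(r)}\|_{S_p}\leq 2^{1/p}r^{1/p-1/2}\|R_{T_{01}}\|_{S_2},
\end{align*}
using $\|\cdot\|_{S_1}\leq\|\cdot\|_{S_p}$ for $0<p\leq 1$ and the rank-$r$ comparison $\|\cdot\|_{S_p}\leq r^{1/p-1/2}\|\cdot\|_{S_2}$. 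Feeding this together with the RUB lower bound (\ref{e3.3-1}), namely $\|\mathcal{A}(R)\|_q^q\geq L\rho_1\|R_{T_{01}}\|_{S_2}^q$, into the upper bound above gives
\begin{align*}
L\rho_1\|R_{T_{01}}\|_{S_2}^q\leq L^{1-q/2}\,2^{q/2+q/(2p)}\,\eta_2^{q/2}\,r^{(1/p-1/2)q/2}\,\|R_{T_{01}}\|_{S_2}^{q/2}.
\end{align*}
Cancelling one factor $\|R_{T_{01}}\|_{S_2}^{q/2}$ and squaring yields precisely the claimed $\rho_1^2$ in the denominator and the factor $2^{q/p+q}r^{(1/p-1/2)q}\eta_2^q/L^q$; adding the tail estimate (\ref{e3.8-1}), which gives $\|R_{T_{01}^c}\|_{S_2}^q\leq k^{-(1/p-1/2)q}\|R_{T_{01}}\|_{S_2}^q$, through $\|R\|_{S_2}^q\leq\|R_{T_{01}}\|_{S_2}^q+\|R_{T_{01}^c}\|_{S_2}^q$ completes part (1).

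For the general matrix $X$ in part (2) the same three ingredients are used, each now carrying an $\|X_{-\max(r)}\|_{S_p}$ error term: the lower bound is (\ref{e3.3}) with $\rho_2$ in place of $\rho_1$, the tail estimate is (\ref{e3.8}), and the full cone constraint $\|R\|_{S_p}^p\leq 2\|R_{\max(r)}\|_{S_p}^p+2\|X_{-\max(r)}\|_{S_p}^p$ (together with $\|R\|_{S_1}\leq\|R\|_{S_p}$, the rank bound $\|R_{\max(r)}\|_{S_p}\leq r^{1/p-1/2}\|R_{T_{01}}\|_{S_2}$, and elementary inequalities for powers of sums) controls $\|R\|_{S_1}^{q/2}$ by a multiple of $r^{(1/p-1/2)q/2}\|R_{T_{01}}\|_{S_2}^{q/2}$ plus a multiple of $\|X_{-\max(r)}\|_{S_p}^{q/2}$. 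Substituting into the Dantzig upper bound, the resulting inequality is no longer homogeneous in $\|R_{T_{01}}\|_{S_2}$ but quadratic in $y:=\|R_{T_{01}}\|_{S_2}^{q/2}$, of the shape $L\rho_2\,y^2\leq By+D$, with $B$ proportional to $\eta_2^{q/2}r^{(1/p-1/2)q/2}$ and $D$ gathering the $\|X_{-\max(r)}\|_{S_p}$ contributions. I expect this quadratic step to be the main obstacle: instead of the crude quadratic formula, the clean route is an AM-GM splitting $By\leq\frac12 L\rho_2\,y^2+\frac{B^2}{2L\rho_2}$ that absorbs half of the $y^2$ term into the left-hand side, and this is exactly what generates the $\rho_2^2$ and the numerical constants such as the $\frac32$ in the stated bound. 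After solving for $\|R_{T_{01}}\|_{S_2}^q$ and inserting the tail estimate (\ref{e3.8}) into $\|R\|_{S_2}^q\leq\|R_{T_{01}}\|_{S_2}^q+\|R_{T_{01}^c}\|_{S_2}^q$, part (2) follows; the remainder is the bookkeeping of constants, all of types already present in the proof of Proposition \ref{ROP-lp-lq}.
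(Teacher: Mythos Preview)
Your proposal is correct and follows essentially the same route as the paper: the Dantzig tube constraint via trace duality and the $\ell_q\!\to\!\ell_2$ comparison, the RUB lower bounds (\ref{e3.3})/(\ref{e3.3-1}), the cone constraint to control $\|R\|_{S_1}$ by $\|R_{T_{01}}\|_{S_2}$, and the tail estimates (\ref{e3.8})/(\ref{e3.8-1}) are exactly the ingredients the paper uses. The only variation is in part (2): the paper squares the combined inequality to obtain a quadratic $ax^2-bx-c\le 0$ in $x=\|R_{T_{01}}\|_{S_2}^q$, applies the elementary bound $x\le b/a+\sqrt{c/a}$, and then uses AM--GM on the $\sqrt{c/a}$ cross term, whereas you absorb the linear term by AM--GM before squaring; both manipulations are equivalent and yield constants of the same shape.
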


\begin{proof}
We denote $l=\min\{m,n\}$. Let $R=\hat{X}^{DS}-X$. We have the following tube constraint inequality
\begin{align}\label{e3.11}
\|\mathcal{A}^*(\mathcal{A}(R))\|_{S_\infty}
\leq\|\mathcal{A}^*(\mathcal{A}(\hat{X}^{DS})-b)\|_{S_\infty}+\|\mathcal{A}^*(b-\mathcal{A}(X))\|_{S_\infty}
\leq\eta_2+\eta_2=2\eta_2,
\end{align}
instead of (\ref{e3.1}). By Lemma \ref{MatrixConeconstraintinequality}, cone constraint inequality (\ref{e3.2}) still holds.

With the same proof of Theorem \ref{ROP-lp-lq}, we still have (\ref{e3.3}) (or (\ref{e3.3-1})), which gives out a lower bound of $\|\mathcal{A}(R)\|_q^q$.

But the upper bound of $\|\mathcal{A}(R)\|_q^q$ needs a completely different proof from that of Theorem \ref{ROP-lp-lq}. Using (\ref{e3.11}), we have
\begin{align}\label{e3.12}
\|\mathcal{A}(R)\|_q^q&\leq L^{1-q/2}\|\mathcal{A}(R)\|_2^q
=L^{1-q/2}\big(\langle\mathcal{A}(R),\mathcal{A}(R) \rangle\big)^{q/2}\nonumber\\
&\leq L^{1-q/2}\big(\|\mathcal{A}^{*}(\mathcal{A}(R))\|_{S_{\infty}}\|R\|_{S_1}\big)^{q/2}
\leq L^{1-q/2}\big(2\eta_2\|R\|_{S_p}\big)^{q/2}\nonumber\\
&= L^{1-q/2}(2\eta_2)^{q/2}\big(\|R_{\max(r)}\|_{S_p}^p+\|R_{-max(r)}\|_{S_p}^p\big)^{q/(2p)}\nonumber\\
&\leq L^{1-q/2}(2\eta_2)^{q/2}\big(2\|\sigma_{\max(r)}\|_{p}^p+2\|X_{-max(r)}\|_{S_p}^p\big)^{q/(2p)}\nonumber\\
&\leq
\Bigg\{L^{2-q}2^{q/p+q}\eta_2^qr^{(1/p-1/2)q}
\Bigg(\|\sigma_{\max(r)}\|_{2}^p+\bigg(\frac{\|X_{-max(r)}\|_{S_p}}{r^{1/p-1/2}}\bigg)^{p}\Bigg)^{q/p}\Bigg\}^{1/2}\\
&\leq
\begin{cases}
\Bigg\{L^{2-q}2^{q/p+q}\eta_2^qr^{(1/p-1/2)q}
\|R_{T_{01}}\|_{S_2}^q\Bigg\}^{1/2}, &\text{if}~X ~\text{is~rank}~r,\nonumber\\
\Bigg\{L^{2-q}2^{2q/p+q-1}\eta_2^qr^{(1/p-1/2)q}
\Bigg(\|R_{T_{01}}\|_{S_2}^q+\bigg(\frac{\|X_{-max(r)}\|_{S_p}}{r^{1/p-1/2}}\bigg)^{q}\Bigg)\Bigg\}^{1/2}, & ~\text{otherwise}.
\end{cases}
\end{align}

Combining the lower bound (\ref{e3.3-1}) (or \ref{e3.3}) with the upper bound (\ref{e3.12}), we get
\begin{align}\label{e3.13-1}
\Bigg\{L\bigg(C_1-C_2\bigg(\frac{1}{k}\bigg)^{(1/p-1/2)q}\bigg)\|R_{T_{01}}\|_{S_2}^q\Bigg\}^2
\leq L^{2-q}2^{q/p+q}r^{(1/p-1/2)^q}\eta_2^q\|R_{T_{01}}\|_{S_2}^q
\end{align}
or
\begin{align}\label{e3.13}
\Bigg\{L\bigg(C_1-C_22^{q/p-1}\bigg(\frac{1}{k}\bigg)^{(1/p-1/2)q}\bigg)\|R_{T_{01}}\|_{S_2}^q
-C_2L2^{2q/p-1}\bigg(\frac{1}{k}\bigg)^{(1/p-1/2)q}\bigg(\frac{\|X_{-\max(r)}\|_{S_p}}{r^{1/p-1/2}}\bigg)^q\Bigg\}_{+}^2\nonumber\\
\leq L^{2-q}2^{2q/p+q-1}r^{(1/p-1/2)^q}\eta_2^q
\Bigg(\|R_{T_{01}}\|_{S_2}^q+\bigg(\frac{\|X_{-max(r)}\|_{S_p}}{r^{1/p-1/2}}\bigg)^{q}\Bigg),
\end{align}
where $(x)_+=\max\{x,0\}$.

First, we consider (\ref{e3.13}). Let $x=\|R_{T_{01}}\|_{S_2}^q$ and $y=\big(r^{-1/p+1/2}\|X_{-max(r)}\|_{S_p}^p\big)^{q}$.
Note that
$$
\rho=
C_1-C_22^{q/p-1}\bigg(\frac{1}{k}\bigg)^{(1/p-1/2)q}>0.
$$
When
\begin{align*}
\|R_{T_{01}}\|_{S_2}^q\geq \frac{C_22^{2q/p-1}}{\rho}\bigg(\frac{1}{k}\bigg)^{(1/p-1/2)q}\bigg(\frac{\|X_{-\max(r)}\|_{S_p}}{r^{1/p-1/2}}\bigg)^q,
\end{align*}
we have
\begin{align*}
L^2\rho^2&x^2-\bigg(C_2L^2\rho2^{2q/p}\bigg(\frac{1}{k}\bigg)^{(1/p-1/2)q}y+L^{2-q}2^{2q/p+q-1}r^{(1/p-1/2)q}\eta_2^q\bigg)x\\
&-L^{2-q}2^{2q/p+q-1}r^{(1/p-1/2)q}\eta_2^qy\leq 0.
\end{align*}
Note that for the second order inequality $ax^2-bx-c\leq 0$ for $a,b,c>0$, we have
$$
x\leq\frac{b+\sqrt{b^2+4ac}}{2a}\leq\frac{b}{a}+\sqrt{\frac{c}{a}}.
$$
Hence we can get an upper bound of $x$ as follows
\begin{align}\label{e3.15}
x&\leq
\frac{C_2L^2\rho2^{2q/p}}{L^2\rho^2}\bigg(\frac{1}{k}\bigg)^{(1/p-1/2)q}y
+\frac{L^{2-q}2^{2q/p+q-1}r^{(1/p-1/2)q}}{L^2\rho^2}\eta_2^q\nonumber\\
&\hspace*{12pt}+\sqrt{\frac{L^{2-q}2^{2q/p+q-1}r^{(1/p-1/2)q}\eta_2^qy}{L^2\rho^2}}\nonumber\\
&\leq\frac{C_22^{2q/p}}{\rho}\bigg(\frac{1}{k}\bigg)^{(1/p-1/2)q}y
+\frac{2^{2q/p+q-1}r^{(1/p-1/2)q}}{L^q\rho^2}\eta_2^q\nonumber\\
&\hspace*{12pt}+\frac{1}{2}\bigg(\frac{2^{2q/p+q-1}r^{(1/p-1/2)q}}{L^q\rho^2}\eta_2^q+y\bigg)\nonumber\\
&\leq\bigg(\frac{C_22^{2q/p}}{\rho}\bigg(\frac{1}{k}\bigg)^{(1/p-1/2)q}+\frac{1}{2}\bigg)y
+\frac{2^{2q/p+q+1}r^{(1/p-1/2)q}}{L^q\rho^2}\eta_2^q.
\end{align}
Hence whenever
\begin{align*}
\|R_{T_{01}}\|_{S_2}^q\geq \frac{C_22^{2q/p-1}}{\rho}\bigg(\frac{1}{k}\bigg)^{(1/p-1/2)q}\bigg(\frac{\|X_{-\max(r)}\|_{S_p}}{r^{1/p-1/2}}\bigg)^q
\end{align*}
or not, we always have
\begin{align}\label{e3.16}
\|R_{T_{01}}\|_{S_2}^q&\leq \max\Bigg\{\frac{C_22^{2q/p-1}}{\rho}\bigg(\frac{1}{k}\bigg)^{(1/p-1/2)q}
\bigg(\frac{\|X_{-\max(r)}\|_{S_p}}{r^{1/p-1/2}}\bigg)^q,\nonumber\\
&\hspace{12pt}\bigg(\frac{C_22^{2q/p}}{\rho}\bigg(\frac{1}{k}\bigg)^{(1/p-1/2)q}+\frac{1}{2}\bigg)
\bigg(\frac{\|X_{-\max(r)}\|_{S_p}}{r^{1/p-1/2}}\bigg)^q
+\frac{2^{2q/p+q+1}r^{(1/p-1/2)q}}{L^q\rho^2}\eta_2^q\Bigg\}\nonumber\\
&\leq \bigg(\frac{C_22^{2q/p}}{\rho}\bigg(\frac{1}{k}\bigg)^{(1/p-1/2)q}+\frac{1}{2}\bigg)
\bigg(\frac{\|X_{-\max(r)}\|_{S_p}}{r^{1/p-1/2}}\bigg)^q
+\frac{2^{2q/p+q+1}r^{(1/p-1/2)q}}{L^q\rho^2}\eta_2^q.
\end{align}

Then, we consider (\ref{e3.13-1}). Note that
$$
\rho=C_1-C_2\bigg(\frac{1}{k}\bigg)^{(1/p-1/2)q}>0.
$$
Therefore
\begin{align}\label{e3.16-1}
\|R_{T_{01}}\|_{S_2}^q
\leq \frac{2^{q/p+q}r^{(1/p-1/2)q}}{L^q\rho^2}\eta_2^q.
\end{align}

Note that (\ref{e3.9}) and (\ref{e3.9-1}) still holds.

Then substituting (\ref{e3.16}) into (\ref{e3.9}), we have
\begin{align}\label{e3.17}
\|R\|_{S_2}^q&\leq
\Bigg\{\bigg(\frac{C_22^{2q/p}}{\rho}\bigg(\frac{1}{k}\bigg)^{(1/p-1/2)q}+\frac{1}{2}\bigg)
\Bigg(2^{q/p-1}\bigg(\frac{p}{2}\bigg)^{q/2}
\bigg(\frac{2-p}{2}\bigg)^{(1/p-1/2)q}\bigg(\frac{1}{k}\bigg)^{(1/p-1/2)q}+1\Bigg)
\nonumber\\
&\hspace*{12pt}+2^{2q/p-1}\bigg(\frac{p}{2}\bigg)^{q/2}
\bigg(\frac{2-p}{2}\bigg)^{(1/p-1/2)q}\bigg(\frac{1}{k}\bigg)^{(1/p-1/2)q}\Bigg\}
\bigg(\frac{\|X_{-\max(r)}\|_{S_p}}{r^{1/p-1/2}}\bigg)^q\nonumber\\
&\hspace*{12pt}+ \frac{2^{2q/p+q+1}r^{(1/p-1/2)q}}{L^q\rho^2}\Bigg(2^{q/p-1}\bigg(\frac{p}{2}\bigg)^{q/2}
\bigg(\frac{2-p}{2}\bigg)^{(1/p-1/2)q}\bigg(\frac{1}{k}\bigg)^{(1/p-1/2)q}+1\Bigg)\eta_2^q\nonumber\\
&\leq\Bigg\{\bigg(\frac{C_22^{2q/p}}{\rho}\bigg(\frac{1}{k}\bigg)^{(1/p-1/2)q}+\frac{3}{2}\bigg)
\Bigg(2^{2q/p-1}\bigg(\frac{p}{2}\bigg)^{q/2}
\bigg(\frac{2-p}{2}\bigg)^{(1/p-1/2)q}\bigg(\frac{1}{k}\bigg)^{(1/p-1/2)q}+1\Bigg)
\nonumber\\
&\hspace*{12pt}\times
\bigg(\frac{\|X_{-\max(r)}\|_{S_p}}{r^{1/p-1/2}}\bigg)^q\nonumber\\
&\hspace*{12pt}+ \frac{2^{2q/p+q+1}r^{(1/p-1/2)q}}{L^q\rho^2}\Bigg(2^{q/p-1}
\bigg(\frac{p}{2}\bigg)^{q/2}
\bigg(\frac{2-p}{2}\bigg)^{(1/p-1/2)q}
\bigg(\frac{1}{k}\bigg)^{(1/p-1/2)q}+1\Bigg)\eta_2^q\nonumber\\
&\leq\Bigg\{\bigg(\frac{C_22^{2q/p}}{\rho}\bigg(\frac{1}{k}\bigg)^{(1/p-1/2)q}+\frac{3}{2}\bigg)
\Bigg(2^{2q/p-1}\bigg(\frac{1}{k}\bigg)^{(1/p-1/2)q}+1\Bigg)
\bigg(\frac{\|X_{-\max(r)}\|_{S_p}}{r^{1/p-1/2}}\bigg)^q\nonumber\\
&\hspace*{12pt}+ \frac{2^{2q/p+q+1}r^{(1/p-1/2)q}}{L^q\rho^2}\Bigg(2^{q/p-1}
\bigg(\frac{1}{k}\bigg)^{(1/p-1/2)q}+1\Bigg)\eta_2^q,
\end{align}
where the last inequality follows from $0<p/2\leq 1$ and $(1/p-1/2)q>0$.

And substituting (\ref{e3.16-1}) into (\ref{e3.9-1}), we have
\begin{align}\label{e3.17-1}
\|R\|_{S_2}^q&\leq
\frac{2^{q/p+q}r^{(1/p-1/2)q}}{L^q\rho^2}\Bigg(\bigg(\frac{1}{k}\bigg)^{(1/p-1/2)q}+1\Bigg)\eta_2^q.
\end{align}
which finishes the proof.
\end{proof}

\begin{remark}
What should be noticed is that Zhang, Huang and Zhang \cite{ZHZ2013} also considered matrix recovery via Schatten-$p$ minimization with $\mathcal{B}=\mathcal{B}^{\ell_2}(\eta)$-$\ell_2$ norm constraint, under the $\ell_p$-RUB condition.
The problem (\ref{MatrixSp}) considered in Proposition \ref{ROP-lp-lq} and Theorem \ref{ROP-lp-lq-DS} are different from that of they considered.
\end{remark}

If $\mathcal{A}:\mathbb{S}^{m}\rightarrow \mathbb{R}^L$ is a SROP, we consider Schatten-$p$ minimization (\ref{SROP-MatrixSp}) instead of (\ref{MatrixSp}) and have the following results.
\begin{corollary}\label{SROP-lp-lq-DS}
Let $r$ be any positive integer, $0<p\leq q\leq 1$ and $\tilde{L}=\lfloor L/2 \rfloor$.  Let $\hat{X}^{\ell_q}$ be the solution of the Schatten-$p$ minimization (\ref{SROP-MatrixSp}) with $\mathcal{B}=\mathcal{B}^{\ell_q}(\eta_1)\cap \mathcal{B}^{\ell_q}(\eta_2)$.
\begin{itemize}
\item[(1)]
For any symmetric rank-$r$ $X$, let $k>1$ with $kr$ be positive integer, and $\tilde{\mathcal{A}}$ satisfies $\ell_q$-RUB of order $(k+1)r$ with $C_2/{C_1}<k^{(1/p-/2)q}$, then
\begin{align*}
\|\hat{X}&-X\|_{S_2}^q\leq
\Bigg(\bigg(\frac{1}{k}\bigg)^{(1/p-1/2)q}+1\Bigg)\frac{1}{\rho_1 \tilde{L}^{q}}
\min\bigg\{\frac{2}{\tilde{L}^{1-2q}}\eta_1^q,\frac{2^{q/p+q}}{\rho_1}r^{(1/p-1/2)q}\eta_2^q\bigg\},
\end{align*}
where $\rho_1=C_1-C_2\big(1/k\big)^{(1/p-1/2)q}$.
\item[(2)]
And for more general symmetric matrix $X$, let $k>2^{\frac{2(q-p)}{q(2-p)}}$ with $kr$ be positive integer, and $\tilde{\mathcal{A}}$ satisfies $\ell_q$-RUB of order $(k+1)r$ with $C_2/{C_1}<2^{1-q/p}k^{(1/p-/2)q}$, then
\begin{align*}
\|\hat{X}&-X\|_{S_2}^q\\
&\leq \Bigg(\frac{C_22^{2q/p-1}}{\rho_2}\bigg(\frac{1}{k}\bigg)^{(1/p-1/2)q}+1\Bigg)
\Bigg(2^{2q/p-1}\bigg(\frac{1}{k}\bigg)^{(1/p-1/2)q}+1\Bigg)\bigg(\frac{\|X_{-\max(r)}\|_{S_p}}{r^{1/p-1/2}}\bigg)^q\nonumber\\
&\hspace*{12pt}+\Bigg(2^{q/p-1}\bigg(\frac{1}{k}\bigg)^{(1/p-1/2)q}+1\Bigg)\frac{1}{\rho_2 \tilde{L}^{q}}
\min\bigg\{\frac{2}{\tilde{L}^{1-2q}}\eta_1^q,\frac{2^{2q/p+q+1}}{\rho_1}r^{(1/p-1/2)q}\eta_2^q\bigg\},
\end{align*}
where $\rho_2=C_1-C_22^{q/p-1}\big(1/k\big)^{(1/p-1/2)q}$.
\end{itemize}
\end{corollary}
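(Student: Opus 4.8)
The plan is to recognize that Corollary \ref{SROP-lp-lq-DS} is the specialization of Theorem \ref{ROP-lp-lq-DS} to the debiased SROP operator, and to carry it out by a direct transfer of the proofs of Propositions \ref{ROP-lp-lq} and \ref{ROP-lp-DS}. Indeed, the symmetric problem (\ref{SROP-MatrixSp}) has exactly the same algebraic form as the general problem (\ref{MatrixSp}): one merely replaces the sensing map $\mathcal{A}$ by $\tilde{\mathcal{A}}:\mathbb{S}^m\to\mathbb{R}^{\tilde L}$, the number of measurements $L$ by $\tilde L=\lfloor L/2\rfloor$, and the data $b$ by $\tilde b$. Since the $\ell_q$-RUB hypothesis in the corollary is imposed directly on $\tilde{\mathcal{A}}$, with the same order $(k+1)r$ and the same ratio condition on $C_2/C_1$ in each of the two regimes, the entire machinery from which Theorem \ref{ROP-lp-lq-DS} is assembled applies verbatim once these substitutions are made.

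Concretely, I would set $R=\hat X-X$ and re-run the two propositions over $\tilde{\mathcal{A}}$. For the $\mathcal{B}^{\ell_q}(\eta_1)$ branch, feasibility of both $\hat X$ and $X$ gives the tube inequality $\|\tilde{\mathcal{A}}(R)\|_q^q\le 2\tilde L^{\,q}\eta_1^q$, the exact analogue of (\ref{e3.1}) with $L\mapsto\tilde L$; for the $\mathcal{B}^{DS}(\eta_2)$ branch one uses $\|\tilde{\mathcal{A}}^*(\tilde{\mathcal{A}}(R))\|_{S_\infty}\le 2\eta_2$ as in (\ref{e3.11}). The cone-constraint inequality (Lemma \ref{MatrixConeconstraintinequality}) and the Schatten-norm additivity (Lemma \ref{orthogonaldecomposition}) are stated for arbitrary matrices, hence hold in particular for the symmetric iterates, and the singular-value partition $T_0^c=\cup_{j\ge 1}T_j$ together with estimate (\ref{e2.4}) is insensitive to symmetry. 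Feeding these into the lower bounds (\ref{e3.3})/(\ref{e3.3-1}) produced by the $\ell_q$-RUB of $\tilde{\mathcal{A}}$, together with the tail estimate (\ref{e3.8}), reproduces (\ref{e3.6})/(\ref{e3.16}) with every occurrence of $L$ replaced by $\tilde L$. Taking the better of the two resulting bounds then yields the stated minimum over the $\eta_1$ and $\eta_2$ terms.

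The only points that require a word of justification, rather than raw copying, are structural. First, I must confirm that $\tilde{\mathcal{A}}^*$ maps $\mathbb{R}^{\tilde L}$ into $\mathbb{S}^m$, so that the Dantzig-selector constraint $\|\tilde{\mathcal{A}}^*(\tilde b-\tilde{\mathcal{A}}(X))\|_{S_\infty}\le\eta_2$ is well posed on symmetric matrices; this is immediate because each $\tilde A_j=A_{2j-1}-A_{2j}$ is a difference of symmetric rank-one matrices. Second, every Schatten-norm inequality used in the propositions depends only on singular values, which are defined identically for symmetric and general matrices, so restricting the ambient space to $\mathbb{S}^m$ neither strengthens nor weakens any step. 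I do not anticipate a genuine obstacle here; the principal care is purely bookkeeping, namely tracking the factor $\tilde L$ consistently through both the $\|\cdot\|_q$ and the Dantzig branches and through the final minimum, exactly as Theorem \ref{ROP-lp-lq-DS} combines Propositions \ref{ROP-lp-lq} and \ref{ROP-lp-DS}.
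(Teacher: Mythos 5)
Your proposal is correct and matches the paper's (implicit) argument: the paper presents Corollary \ref{SROP-lp-lq-DS} as a direct consequence of Propositions \ref{ROP-lp-lq} and \ref{ROP-lp-DS} applied to $\tilde{\mathcal{A}}:\mathbb{S}^m\to\mathbb{R}^{\tilde L}$, with exactly the substitutions $\mathcal{A}\mapsto\tilde{\mathcal{A}}$, $L\mapsto\tilde L$, $b\mapsto\tilde b$ that you describe. Your added checks that $\tilde{\mathcal{A}}^*$ lands in $\mathbb{S}^m$ and that all Schatten-norm estimates are insensitive to the restriction to symmetric matrices are the right (and only) points needing verification.
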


\section{Robust Null Space Property and $\ell_q$-RUB\label{s3}}
\hskip\parindent

In this section, we will investigate the relationship between the RUB and robust rank null space property.

The robust null space property with $\ell_2$ bound $\|Ax\|_2$ was first introduced by Sun in \cite{S2011}, which is called sparse approximation property. And this name was first used by Foucart and Rauhut in \cite{FR2013}. And they also introduced the robust null space property with Dantzig selector bound $\|A^*Ax\|_{\infty}$. In \cite{FR2013}, Foucart and Rauhut also introduced the robust rank null space property.

Inspired by \cite{S2011}, we introduce $(\ell_t,\ell_p)$-robust rank null space property as follows.

\begin{definition}\label{MatrixRNSP}
Let $0<p,t\leq\infty$ and $0<q<\infty$. A linear map $\mathcal{A}$ satisfies $(\ell_t,\ell_p)$-robust rank null space property of order $r$ for $\ell_q$ bound with constants $D$ and $\beta$ if
\begin{align}\label{MatrixlpRNSP}
\|X_{\max(r)}\|_{S_t}^p \leq D\|\mathcal{A}(X)\|_q^p + \beta\frac{\|X_{-\max(r)}\|_{S_p}^p}{r^{(1/p-1/t)p}}
\end{align}
holds for all $X\in\mathbb{R}^{m\times n}$.

And a linear map $\mathcal{A}$ satisfies the $(\ell_t,\ell_p)$ robust rank null space property of order $r$ for Dantzig selector bound  with constants $D$ and $\beta$ if
\begin{align}\label{MatrixDSRNSP}
\|X_{\max(r)}\|_{S_t}^p \leq D\|\mathcal{A}^{*}\mathcal{A}(X)\|_{S_\infty}^p +\beta\frac{\|X_{-\max(r)}\|_{S_p}^p}{r^{(1/p-1/t)p}}
\end{align}
holds for all $X\in\mathbb{R}^{m\times n}$.
\end{definition}

By similar proofs as that of Theorems \ref{ROP-lp-lq} and \ref{ROP-lp-DS}, we have the following theorem, which implies that  $(\ell_2,\ell_p)$-robust rank null space property of order $r$ can be induced from $\ell_q$-RUB of order $(k+1)r$ for any fixed $k>1$ with $kr\in\mathbb{Z}_+$. We omit the details here.
\begin{theorem}\label{RUB-RNSP}
Let $r\in\mathbb{Z}_+$, and $k>1$ with $kr\in\mathbb{Z}_+$.
\begin{itemize}
\item [(1)]
Suppose that $\mathcal{A}$ satisfies $\ell_q$-RUB of order $(k+1)r$ with $C_2/{C_1}<k^{(1/p-1/2)q}$ for any $0<p\leq q\leq 1$, then
\begin{align*}
\|X_{\max(r)}\|_{S_2}^p&\leq \frac{1}{(C_1L)^{p/q}}\|\mathcal{A}(X)\|_q^p + \bigg(\frac{C_2}{C_1k^{(1/p-1/2)q}}\bigg)^{p/q}\frac{\|X_{-\max(r)}\|_{S_p}^p}{r^{(1/p-1/2)p}}\\
&=:D_1\|\mathcal{A}(X)\|_q^p+\beta_1\frac{\|X_{-\max(r)}\|_{S_p}^p}{r^{(1/p-1/2)p}},
\end{align*}
i.e., $\mathcal{A}$ satisfies $(\ell_2,\ell_p)$ robust rank null space property of order $r$ for $\ell_q$ bound with constant pair $(D_1,\beta_1)$.
\item[(2)]
Suppose that $\mathcal{A}$ satisfies $\ell_q$-RUB of order $(k+1)r$ with $C_2/{C_1}<k^{(1/p-1/2)q}/4$ for any $0<p\leq q\leq 1$, then
\begin{align*}
\|X_{\max(r)}\|_{S_2}^p&\leq \bigg(\frac{2^{q/p+1}}{C_1^{2p/q}L^p}\bigg)^{p/q}
r^{(1/p-1/2)p}\|\mathcal{A}^*\mathcal{A}(X)\|_{S_\infty}^p
+\bigg(\frac{2C_2}{C_1k^{(1/p-1/2)q}}+\frac{1}{2}\bigg)^{p/q}\frac{\|X_{-\max(r)}\|_{S_p}^p}{r^{(1/p-1/2)p}}\\
&:=D_2\|\mathcal{A}^{*}\mathcal{A}(X)\|_{S_\infty}^p +\beta_2 \frac{\|X_{-\max(r)}\|_{S_p}^p}{r^{(1/p-1/2)p}}
\end{align*}
holds for all $X\in\mathbb{R}^{m\times n}$, i.e., $\mathcal{A}$ satisfies $(\ell_2,\ell_p)$ robust rank null space property of order $r$ for Dantzig selector bound with constant pair $(D_2,\beta_2)$.
\end{itemize}
\end{theorem}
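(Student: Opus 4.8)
The plan is to mirror the proofs of Propositions \ref{ROP-lp-lq} and \ref{ROP-lp-DS}, but with one essential difference: the robust rank null space property must hold for \emph{every} $X\in\mathbb{R}^{m\times n}$, not merely for the error $R=\hat X-X$ of a minimizer. Consequently I would discard the two ingredients of those proofs that are specific to the optimization problem, namely the cone constraint (Lemma \ref{MatrixConeconstraintinequality}) and the tube/feasibility constraint, and instead run the singular-value partitioning argument directly on an arbitrary $X$.

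Concretely, for part (1) I would take the SVD $X=U\text{diag}(\sigma)V^T$, set $T_0=\text{supp}(\sigma_{\max(r)})=\{1,\dots,r\}$, split $T_0^c$ into blocks $T_1,T_2,\dots$ of size $s=kr$ in decreasing order of singular values, and write $T_{01}=T_0\cup T_1$. Exactly as in \eqref{e2.4}--\eqref{e2.6}, the $\ell_q$-RUB of order $(k+1)r$ yields the lower bound
\[
\|\mathcal{A}(X)\|_q^q\ge C_1L\|X_{T_{01}}\|_{S_2}^q-\frac{C_2L}{k^{(1/p-1/2)q}}\frac{\|X_{-\max(r)}\|_{S_p}^q}{r^{(1/p-1/2)q}}.
\]
Since $T_0\subset T_{01}$ gives $\|X_{T_{01}}\|_{S_2}\ge\|X_{\max(r)}\|_{S_2}$, rearranging produces
\[
\|X_{\max(r)}\|_{S_2}^q\le\frac{1}{C_1L}\|\mathcal{A}(X)\|_q^q+\frac{C_2}{C_1k^{(1/p-1/2)q}}\frac{\|X_{-\max(r)}\|_{S_p}^q}{r^{(1/p-1/2)q}}.
\]
Raising to the power $p/q\le1$ and using the subadditivity $(a+b)^{p/q}\le a^{p/q}+b^{p/q}$ then gives precisely the asserted inequality with $D_1=(C_1L)^{-p/q}$ and $\beta_1=(C_2/(C_1k^{(1/p-1/2)q}))^{p/q}$. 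There is no quadratic and no case analysis here, so this half is routine once the lower bound is in place.

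For part (2) the same lower bound is used, but the term $\|\mathcal{A}(X)\|_q^q$ is now controlled from above by the Dantzig quantity via the chain used in \eqref{e3.12}:
\[
\|\mathcal{A}(X)\|_q^q\le L^{1-q/2}\|\mathcal{A}(X)\|_2^q=L^{1-q/2}\langle\mathcal{A}^*\mathcal{A}(X),X\rangle^{q/2}\le L^{1-q/2}\big(\|\mathcal{A}^*\mathcal{A}(X)\|_{S_\infty}\|X\|_{S_p}\big)^{q/2},
\]
where $\|X\|_{S_1}\le\|X\|_{S_p}$ for $p\le1$ is used. Bounding $\|X\|_{S_p}^p=\|X_{\max(r)}\|_{S_p}^p+\|X_{-\max(r)}\|_{S_p}^p\le r^{(1/p-1/2)p}(\|X_{\max(r)}\|_{S_2}^p+w^p)$ with $w=\|X_{-\max(r)}\|_{S_p}/r^{1/p-1/2}$, and combining with the lower bound, I would obtain a quadratic inequality in $\xi=\|X_{\max(r)}\|_{S_2}^q$ of the form $C_1^2\xi^2-b\xi-c\le0$. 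Solving it by the elementary estimate $\xi\le b/a+\sqrt{c/a}$ together with one application of AM--GM (as in the passage around \eqref{e3.15}) bounds $\xi$ by a multiple of $w^q$ plus a multiple of $L^{-q}\|\mathcal{A}^*\mathcal{A}(X)\|_{S_\infty}^q r^{(1/p-1/2)q}$; raising again to the power $p/q$ and using subadditivity delivers the stated $(D_2,\beta_2)$ bound. As in the propositions, I would also need the elementary case split according to whether the RUB lower bound $C_1L\xi-\cdots$ is nonnegative: when it is negative, $\xi$ is already dominated directly by the $w^q$ term.

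I expect the main obstacle to be purely the bookkeeping of constants in part (2): the convexity step $(a^p+w^p)^{q/p}\le 2^{q/p-1}(a^q+w^q)$, the squaring of the two-sided bound, and the quadratic-formula estimate each contribute powers of $2$ and factors of $C_1,C_2,k^{(1/p-1/2)q}$, and it is the interplay of these that forces the sharper smallness requirement $C_2/C_1<k^{(1/p-1/2)q}/4$ rather than the bare $C_2/C_1<k^{(1/p-1/2)q}$ of part (1). The conceptual steps are identical to Propositions \ref{ROP-lp-lq} and \ref{ROP-lp-DS}; the only genuinely new point is recognizing that, after dropping the cone and tube constraints, the $\ell_q$-RUB lower bound is by itself strong enough to yield a null space property valid for all $X$.
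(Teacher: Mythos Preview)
Your proposal is correct and follows exactly the route the paper intends: the authors themselves say the theorem is obtained ``by similar proofs as that of Theorems \ref{ROP-lp-lq} and \ref{ROP-lp-DS}'' and omit the details. Your key observation---that one drops the cone constraint (Lemma \ref{MatrixConeconstraintinequality}) and the tube/feasibility constraint and runs the partitioning argument on an arbitrary $X$, then raises the resulting $q$-power inequality to the $p/q$ power using subadditivity---is precisely the modification required, and your sketch of the quadratic estimate and case split in part (2) matches the structure of \eqref{e3.12}--\eqref{e3.16}.
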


\begin{remark}
Theorem \ref{RUB-RNSP} also holds for vector case. We only need to let $X$ be any diagonal matrix, and let each element $A_j$ of linear map $\mathcal{A}$ be diagonal matrix.
\end{remark}

\section{Recovery via Least $q$ Minimization \label{s4}}
\hskip\parindent

In this section, we will consider the matrix's recovery via least $q$ minimization (\ref{DensityLeastq}). We show that
the $\ell_q$-RUB of order $(k+1)r$ with $C_2/{C_1}<k^{(1/p-1/2)q}$ for $0<p\leq q\leq 1$ is also sufficient to recover any matrix $X\in\mathbb{R}^{m\times n}$. We also extend our result to Gaussian design distribution and assert that $L\geq Cr(m+n)$ measurements is sufficient to guarantee all rank-$r$ matrices' stable recovery with high probability.

\subsection{Stable Recovery via Least $q$ Minimization \label{s4.1}}
\quad

\begin{theorem}\label{StableDensityLeastq}
Let $r\in\mathbb{Z}_+$, and $k>1$ with $kr\in\mathbb{Z}_+$.
Suppose that $\mathcal{A}$ satisfies $\ell_q$-RUB of order $(k+1)r$ with $C_2/{C_1}<k^{(1/p-1/2)q}$ for any $0<p\leq q\leq 1$. Let $\hat{X}$ be the solution of
\begin{align*}
\min_{Y\in\mathbb{R}^{m\times n}}~\|\mathcal{A}(Y)-b\|_q^q~~\text{subject~ to}~~\Big(\text{tr}\big((Y^{T}Y)^{p/2}\big)\Big)^{1/p}=1
\end{align*}
where $b=\mathcal{A}(X)+z$, then
\begin{align*}
\|\hat{X}-X\|_{S_2}^p
&\leq\frac{2(1+\beta_1)^2}{1-\beta_1} \frac{\|X_{-\max(r)}\|_p^p}{r^{(1/p-1/2)p}}
+\frac{2^{p/q}(3+\beta_1)D_1}{1-\beta_1}\|z\|_q^p,
\end{align*}
where $D_1= (C_1L)^{-p/q}>0$ and $0<\beta_1=\big(C_2k^{-(1/p-1/2)q}/{C_1}\big)^{p/q}<1$.
\end{theorem}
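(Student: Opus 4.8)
The plan is to run the standard ``tube $+$ cone $+$ null space property'' argument, but organized around the robust rank null space property already established in Theorem \ref{RUB-RNSP}(1) rather than around $\ell_q$-RUB directly, so that the constants $D_1$ and $\beta_1$ appear. Write $R=\hat{X}-X$. Since $X$ has unit Schatten-$p$ norm it is feasible for the minimization, and optimality of $\hat{X}$ gives $\|\mathcal{A}(\hat{X})-b\|_q^q\le\|\mathcal{A}(X)-b\|_q^q=\|z\|_q^q$. Applying the quasi-triangle inequality $\|a+b\|_q^q\le\|a\|_q^q+\|b\|_q^q$ (valid for $0<q\le1$) to $\mathcal{A}(R)=(\mathcal{A}(\hat{X})-b)+z$ then yields the tube bound $\|\mathcal{A}(R)\|_q^q\le2\|z\|_q^q$, i.e. $\|\mathcal{A}(R)\|_q^p\le2^{p/q}\|z\|_q^p$. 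At the same time the constraint forces $\|\hat{X}\|_{S_p}^p=\|X\|_{S_p}^p$, so Lemma \ref{MatrixConeconstraintinequality} applies and gives the cone bound $\|R_{-\max(r)}\|_{S_p}^p\le2\|X_{-\max(r)}\|_{S_p}^p+\|R_{\max(r)}\|_{S_p}^p$. Observe that the hypothesis $C_2/C_1<k^{(1/p-1/2)q}$ is exactly what makes $0<\beta_1<1$, which the absorption step below needs.

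Next I would bound the ``head'' $\|R_{\max(r)}\|_{S_2}^p$. Applying the $(\ell_2,\ell_p)$ robust rank null space property from Theorem \ref{RUB-RNSP}(1) to $R$, substituting the cone bound for $\|R_{-\max(r)}\|_{S_p}^p$, and using the rank inequality $\|R_{\max(r)}\|_{S_p}^p\le r^{(1/p-1/2)p}\|R_{\max(r)}\|_{S_2}^p$, one obtains
\begin{align*}
\|R_{\max(r)}\|_{S_2}^p\le D_1\|\mathcal{A}(R)\|_q^p+\beta_1\frac{2\|X_{-\max(r)}\|_{S_p}^p}{r^{(1/p-1/2)p}}+\beta_1\|R_{\max(r)}\|_{S_2}^p.
\end{align*}
Since $\beta_1<1$, the last term absorbs into the left-hand side, giving $\|R_{\max(r)}\|_{S_2}^p\le\frac{1}{1-\beta_1}\big(D_1\|\mathcal{A}(R)\|_q^p+2\beta_1 r^{-(1/p-1/2)p}\|X_{-\max(r)}\|_{S_p}^p\big)$.

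Finally I would pass from the head to the full error via $\|R\|_{S_2}^2=\|R_{\max(r)}\|_{S_2}^2+\|R_{-\max(r)}\|_{S_2}^2$ and $(a+b)^{p/2}\le a^{p/2}+b^{p/2}$, so that $\|R\|_{S_2}^p\le\|R_{\max(r)}\|_{S_2}^p+\|R_{-\max(r)}\|_{S_2}^p$. The tail term is the crux: the naive estimate $\|R_{-\max(r)}\|_{S_2}\le\|R_{-\max(r)}\|_{S_p}$ loses the crucial factor $r^{-(1/p-1/2)}$, so instead one must exploit the monotonicity of the singular value sequence through a Stechkin-type inequality $\|R_{-\max(r)}\|_{S_2}\le r^{-(1/p-1/2)}\|R\|_{S_p}$ (which holds with constant $1$ for $0<p\le1$). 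Splitting $\|R\|_{S_p}^p=\|R_{\max(r)}\|_{S_p}^p+\|R_{-\max(r)}\|_{S_p}^p$ and again invoking the cone bound together with the rank inequality then expresses $\|R_{-\max(r)}\|_{S_2}^p$ in terms of $\|R_{\max(r)}\|_{S_2}^p$ and $r^{-(1/p-1/2)p}\|X_{-\max(r)}\|_{S_p}^p$. Substituting the head bound and the tube bound $\|\mathcal{A}(R)\|_q^p\le2^{p/q}\|z\|_q^p$ and collecting constants produces the claimed inequality.

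The main obstacle is the tail estimate just described: securing the sharp $r^{-(1/p-1/2)}$ scaling on the best rank-$r$ approximation error, which is precisely what preserves the normalization $\|X_{-\max(r)}\|_{S_p}^p/r^{(1/p-1/2)p}$ in the statement and separates a genuine recovery bound from a vacuous one. Everything else (tube, cone, absorption) is bookkeeping once Theorem \ref{RUB-RNSP} is granted; the only further point requiring care is that the equality $\|\hat{X}\|_{S_p}=\|X\|_{S_p}$, forced by the density constraint rather than by an inequality $\|\hat{X}\|_{S_p}\le\|X\|_{S_p}$, is exactly what legitimizes the application of Lemma \ref{MatrixConeconstraintinequality}.
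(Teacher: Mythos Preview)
Your approach is correct and essentially the same as the paper's. The paper also reduces to the $(\ell_2,\ell_p)$ robust rank null space property via Theorem~\ref{RUB-RNSP}(1), then combines a cone-type bound, the Stechkin tail estimate (your ``main obstacle''), the density constraint $\|\hat X\|_{S_p}=\|X\|_{S_p}$, and the optimality tube bound $\|\mathcal{A}(R)\|_q^p\le 2^{p/q}\|z\|_q^p$. The only organizational difference is that the paper packages the ``RNSP $\Rightarrow$ recovery'' step into a separate general statement (Theorem~\ref{RNSP-stabelrecovery}, proved through Lemma~\ref{RNSPlemma} and the perturbation inequality of Lemma~\ref{perturbationinequality}), whereas you inline the argument and obtain the cone bound from Lemma~\ref{MatrixConeconstraintinequality} directly; under the density constraint these two cone bounds coincide, and your route yields constants no worse than the stated ones (in fact slightly smaller), so the claimed inequality follows.
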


Theorem \ref{StableDensityLeastq} can be deduced from the following stronger result and Theorem \ref{RUB-RNSP}.

\begin{theorem}\label{RNSP-stabelrecovery}
Let $0<p\leq t\leq\infty$.
\begin{itemize}
\item[(1)]
If $\mathcal{A}: \mathbb{R}^{m\times n}\rightarrow\mathbb{R}^L$ satisfies the $(\ell_t,\ell_p)$-robust rank null space property of order $r$ for $\ell_q$ bound with constants $D_1>0$ and $0<\beta_1<1$, then
\begin{align*}
\|Y-X\|_{S_t}^p\leq \frac{(1+\beta_1)^2}{1-\beta_1} \frac{1}{r^{(1/p-1/t)p}}\big(\|Y\|_{S_p}^p-\|X\|_{S_p}^p+2\|X_{-\max(r)}\|_p^p\big)
+\frac{(3+\beta_1)D_1}{1-\beta_1}\|\mathcal{A}(Y-X)\|_q^p.
\end{align*}
\item[(2)]
If $\mathcal{A}: \mathbb{R}^{m\times n}\rightarrow\mathbb{R}^L$ satisfies the $(\ell_t,\ell_p)$-robust rank null space property of order $r$ for Dantzig selector bound with constants $D_2>0$ and $0<\beta_2<1$, then
\begin{align*}
\|Y-X\|_{S_t}^p\leq \frac{(1+\beta_2)^2}{1-\beta_2} \frac{1}{r^{(1/p-1/t)p}}\big(\|Y\|_{S_p}^p-\|X\|_{S_p}^p+2\|X_{-\max(r)}\|_p^p\big)
+\frac{(3+\beta_2)D_2}{1-\beta_2}\|\mathcal{A}^*\mathcal{A}(Y-X)\|_{S_\infty}^p.
\end{align*}
\end{itemize}
\end{theorem}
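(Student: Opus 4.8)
The plan is to prove part~(1) in full and obtain part~(2) verbatim by replacing $\|\mathcal{A}(\cdot)\|_q$ with $\|\mathcal{A}^{*}\mathcal{A}(\cdot)\|_{S_\infty}$ and $(D_1,\beta_1)$ with $(D_2,\beta_2)$ throughout, since the defining inequalities (\ref{MatrixlpRNSP}) and (\ref{MatrixDSRNSP}) have identical shape. Set $R=Y-X$ and write $\kappa:=\|Y\|_{S_p}^p-\|X\|_{S_p}^p+2\|X_{-\max(r)}\|_{S_p}^p$ for the quantity that will appear on the right-hand side. The first step is a cone inequality valid without any optimality hypothesis on $Y$:
\[
\|R_{-\max(r)}\|_{S_p}^p\le \kappa+\|R_{\max(r)}\|_{S_p}^p .
\]
This is exactly Lemma~\ref{MatrixConeconstraintinequality} with the term $\|Y\|_{S_p}^p-\|X\|_{S_p}^p$ retained rather than discarded (the lemma drops it under the assumption $\|\hat X\|_{S_p}^p\le\|X\|_{S_p}^p$); its proof carries over unchanged because it only uses that the truncated SVD is the best rank-$r$ approximation in $S_p$ together with the Schatten-$p$ additivity of Lemma~\ref{orthogonaldecomposition} on orthogonally supported pieces.

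Next I would combine this with the null space property. Applying (\ref{MatrixlpRNSP}) to $R$ gives $\|R_{\max(r)}\|_{S_t}^p\le D_1\|\mathcal{A}(R)\|_q^p+\beta_1 r^{-(1/p-1/t)p}\|R_{-\max(r)}\|_{S_p}^p$, while the power--mean (H\"older) inequality on the at most $r$ nonzero singular values of $R_{\max(r)}$ yields $\|R_{\max(r)}\|_{S_p}^p\le r^{(1/p-1/t)p}\|R_{\max(r)}\|_{S_t}^p$. Feeding the latter into the cone inequality and then the former makes the term $\beta_1\|R_{-\max(r)}\|_{S_p}^p$ reappear on the right; since $\beta_1<1$ it is absorbed, leaving
\[
\|R_{-\max(r)}\|_{S_p}^p\le \frac{1}{1-\beta_1}\Big(\kappa+r^{(1/p-1/t)p}D_1\|\mathcal{A}(R)\|_q^p\Big),
\]
and back-substitution gives the companion bound $\|R_{\max(r)}\|_{S_t}^p\le (1-\beta_1)^{-1}\big(D_1\|\mathcal{A}(R)\|_q^p+\beta_1 r^{-(1/p-1/t)p}\kappa\big)$.

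To reach $\|Y-X\|_{S_t}$ I would control the tail in the $S_t$ quasinorm by a Stechkin-type estimate: from $\sigma_{r+1}(R)^p\le r^{-1}\|R\|_{S_p}^p$ and $\sum_{i>r}\sigma_i(R)^t\le\sigma_{r+1}(R)^{t-p}\sum_{i>r}\sigma_i(R)^p$ one gets $\|R_{-\max(r)}\|_{S_t}^p\le r^{-(1/p-1/t)p}\|R\|_{S_p}^p$ (with the obvious modification $\|R_{-\max(r)}\|_{S_\infty}=\sigma_{r+1}(R)\le r^{-1/p}\|R\|_{S_p}$ when $t=\infty$). Using the additivity $\|R\|_{S_p}^p=\|R_{\max(r)}\|_{S_p}^p+\|R_{-\max(r)}\|_{S_p}^p$ (Lemma~\ref{orthogonaldecomposition}) and the rank-$r$ relation again, this bounds $\|R_{-\max(r)}\|_{S_t}^p$ by the two quantities already estimated. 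Finally the subadditivity of $x\mapsto x^{p/t}$ on the orthogonal head/tail pieces gives $\|R\|_{S_t}^p\le\|R_{\max(r)}\|_{S_t}^p+\|R_{-\max(r)}\|_{S_t}^p$, and collecting the $\kappa$- and noise-coefficients produces a bound at least as strong as the claim (one checks $1+2\beta_1\le(1+\beta_1)^2$ and $3\le3+\beta_1$, so the stated constants $(1+\beta_1)^2/(1-\beta_1)$ and $(3+\beta_1)D_1/(1-\beta_1)$ certainly hold).

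The main obstacle is the first step: establishing the cone inequality for the Schatten-$p$ quasinorm with $p<1$ and no minimality assumption, where only the $p$-th-power triangle inequality is available and the singular subspaces of $X$ and of $R$ need not align, so one must pass through best-rank-$r$ optimality and Lemma~\ref{orthogonaldecomposition} rather than a naive splitting. Everything afterward is bookkeeping in which the hypothesis $\beta_1<1$ does the essential work of closing the recursion; the only other place demanding care is the borderline $t=\infty$ case of the Stechkin estimate noted above.
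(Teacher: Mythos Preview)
Your overall architecture is exactly the paper's: establish the generalized cone inequality
\[
\|R_{-\max(r)}\|_{S_p}^p\le \big(\|Y\|_{S_p}^p-\|X\|_{S_p}^p+2\|X_{-\max(r)}\|_{S_p}^p\big)+\|R_{\max(r)}\|_{S_p}^p,
\]
feed in the null space property to close the recursion (this is the paper's Lemma~\ref{RNSPlemma}), then pass to the $S_t$ quasinorm via Stechkin (Lemma~\ref{Stechkinbound}) and the head/tail split. Your bookkeeping even produces the slightly sharper constants $\frac{1+2\beta_1}{1-\beta_1}$ and $\frac{3}{1-\beta_1}$, which you correctly dominate by the stated ones.

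The one point that needs correction is your justification of the cone inequality. You attribute it to the proof of Lemma~\ref{MatrixConeconstraintinequality} and claim that proof uses only best rank-$r$ approximation together with the orthogonal additivity of Lemma~\ref{orthogonaldecomposition}. For $p<1$ that toolkit, applied through the usual Recht--Fazel--Parrilo splitting relative to the singular spaces of $X_{\max(r)}$, only yields a rank-$2r$ statement: one obtains $\|R_c\|_{S_p}^p\le\kappa+\|R_0\|_{S_p}^p$ with $\mathrm{rank}(R_0)\le 2r$, not a bound on $\|R_{-\max(r)}\|_{S_p}^p$ against $\|R_{\max(r)}\|_{S_p}^p$. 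The singular spaces of $X$ and of $R$ need not align, and Lemma~\ref{orthogonaldecomposition} gives you nothing across that mismatch. The paper closes exactly this gap with Lemma~\ref{perturbationinequality} (the Audenaert/Yue--So singular value perturbation inequality), applied with $f(t)=t^p$, which compares the ordered singular values of $X$, $Y$, and $R=Y-X$ directly and delivers the rank-$r$ cone inequality you wrote down. So your plan is right, but the hinge lemma you need is Lemma~\ref{perturbationinequality}, not merely Lemma~\ref{orthogonaldecomposition}.
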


The proof requires some auxiliary lemmas. We start with a matrix version of Stechkin's bound. It follows immediately from \cite[Proposition 2.3 of Chapter 2]{FR2013}.

\begin{lemma}\label{Stechkinbound}
Let $X\in\mathbb{R}^{m\times n}$ and $0<r\leq \min\{m,n\}$. Then for any $0<p\leq t\leq\infty$,
$$
\|X_{-\max(r)}\|_{S_t}\leq\frac{\|X\|_{S_p}}{r^{1/p-1/t}}.
$$
\end{lemma}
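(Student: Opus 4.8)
The plan is to reduce the matrix inequality to the familiar scalar Stechkin bound for the nonincreasing sequence of singular values, and then either cite or reprove that scalar estimate in one line. First I would fix the singular value decomposition $X=U\,\mathrm{diag}(\sigma)\,V^{T}$ with $\sigma_1\geq\sigma_2\geq\cdots\geq\sigma_l\geq 0$, where $l=\min\{m,n\}$. The crucial structural fact is that the best rank-$r$ approximation $X_{\max(r)}$ is obtained by truncating the SVD to its $r$ leading singular values, so that $X_{-\max(r)}=U\,\mathrm{diag}(0,\ldots,0,\sigma_{r+1},\ldots,\sigma_l)\,V^{T}$ has exactly the tail $(\sigma_{r+1},\ldots,\sigma_l)$ as its singular values. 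Since each Schatten norm is by definition the corresponding $\ell$-norm of the singular value vector, both sides of the claimed inequality become functions of $\sigma$ alone, and the matrix statement collapses to the scalar bound
$$
\Big(\sum_{j>r}\sigma_j^{t}\Big)^{1/t}\leq r^{-(1/p-1/t)}\Big(\sum_{j\geq 1}\sigma_j^{p}\Big)^{1/p}.
$$

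The second step is this scalar estimate, which is precisely \cite[Proposition 2.3 of Chapter 2]{FR2013}; for completeness I would reprove it directly. The key observation is that for any index $j>r$ monotonicity gives $\sigma_j^{p}\leq\frac{1}{r}\sum_{i=1}^{r}\sigma_i^{p}\leq\frac{1}{r}\|\sigma\|_p^{p}$, hence $\sigma_j\leq r^{-1/p}\|\sigma\|_p$. Splitting $\sigma_j^{t}=\sigma_j^{t-p}\,\sigma_j^{p}$ and bounding the first factor by this uniform estimate yields
$$
\sum_{j>r}\sigma_j^{t}\leq\big(r^{-1/p}\|\sigma\|_p\big)^{t-p}\sum_{j>r}\sigma_j^{p}\leq r^{-(t-p)/p}\|\sigma\|_p^{t},
$$
and raising to the power $1/t$ gives the desired inequality, since $(t-p)/(pt)=1/p-1/t$. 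The case $t=\infty$ is handled separately and more simply, by noting $\|\sigma_{-\max(r)}\|_{\infty}=\sigma_{r+1}\leq r^{-1/p}\|\sigma\|_p$.

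There is essentially no hard obstacle here; the only point deserving care is the reduction itself, namely the justification that the singular values of $X_{-\max(r)}$ are exactly the discarded tail of $\sigma$. This is immediate from the Eckart--Young characterization of the best rank-$r$ approximation, but it is worth recording explicitly so that the passage from matrices to the scalar sequence is airtight. Once that identification is in place, the result is a verbatim application of the vector Stechkin bound, which is why it \emph{follows immediately} from the cited proposition.
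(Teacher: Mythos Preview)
Your proposal is correct and matches the paper's approach: the paper simply states that the lemma follows immediately from \cite[Proposition 2.3 of Chapter 2]{FR2013}, i.e., the scalar Stechkin bound applied to the singular value vector, which is exactly the reduction you carry out. Your additional self-contained proof of the scalar inequality is fine and slightly more than the paper provides, but the route is the same.
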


In order to get a similar cone constraint for matrix's  Schatten-$p$ norm, we also need the following lemma which was given by Yue and So in \cite{YS2016} and Audenaert \cite{A2014}.

\begin{lemma}\label{perturbationinequality}
Let $X, Y\in\mathbb{R}^{m\times n}$ be given matrices. Suppose that $f: \mathbb{R}_{+}\rightarrow \mathbb{R}_{+}$is a concave function satisfying $f(0)=0$. Then, for any $k\in\{1,\ldots,\min\{m,n\}\}$, we have
$$\sum_{j=1}^k|f(\sigma_j(X))-f(\sigma_j(Y))|\leq\sum_{j=1}^kf(\sigma_j(X-Y)).$$
\end{lemma}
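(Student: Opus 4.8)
The plan is to pass through an integral representation of $f$, reduce the inequality to two model cases, and isolate the genuine difficulty in a single truncation inequality. First I would record the elementary consequences of the hypotheses. A concave $f$ on $[0,\infty)$ with $f(0)=0$ that stays nonnegative is automatically nondecreasing (if the slope ever turned negative, concavity would force $f\to-\infty$, contradicting $f\ge 0$) and subadditive, $f(a+b)\le f(a)+f(b)$ (compare $f(a)$ and $f(b)$ against the chord from $0$ to $a+b$). Since $f'$ may be taken nonincreasing and nonnegative, this yields the representation
\[
f(x)=\alpha x+\int_{(0,\infty)}\min(x,s)\,d\mu(s),\qquad \alpha\ge 0,\ \mu\ge 0,
\]
which is just $\int_0^x\mathbf{1}[t<s]\,dt=\min(x,s)$ integrated against the distribution of $-f'$.

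Set $Z=X-Y$ and abbreviate $a_j=\sigma_j(X)$, $b_j=\sigma_j(Y)$, $c_j=\sigma_j(Z)$. Substituting the representation into $f(a_j)-f(b_j)$, applying the triangle inequality \emph{inside} the sum over $j$, and interchanging the finite sum with the integral against $d\mu$ (everything nonnegative), the lemma follows once I prove the two model inequalities, for every $s>0$,
\[
\sum_{j=1}^k|a_j-b_j|\le\sum_{j=1}^k c_j,\qquad \sum_{j=1}^k\bigl|\min(a_j,s)-\min(b_j,s)\bigr|\le\sum_{j=1}^k\min(c_j,s);
\]
the two right-hand sides then recombine, by applying the representation to each $c_j$, exactly into $\sum_{j=1}^k f(c_j)$. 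The first (identity) case is the first-$k$ instance of Mirsky's theorem: the partial sum over the first $k$ indices is at most the Ky Fan $k$-norm of $\mathrm{diag}(a)-\mathrm{diag}(b)$, which Mirsky bounds by the Ky Fan $k$-norm of $X-Y$, i.e.\ $\sum_{j=1}^k c_j$. I would cite this as classical.

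The truncation case is the crux, and I expect it to be the main obstacle. It cannot be reached from subadditivity of the functional $A\mapsto\sum_{j=1}^k\min(\sigma_j(A),s)$, since subadditivity only controls the \emph{signed} sum $\sum_j[\min(a_j,s)-\min(b_j,s)]$, whereas the absolute values sit inside the sum; it is also strictly stronger than the corresponding scalar statement, which is already false for $\min(\cdot,s)$ on arbitrary weakly majorized vectors (e.g.\ $(1,1)\prec_w(2,0)$ but $\sum\min((1,1),1)=2>1=\sum\min((2,0),1)$), so real matrix structure must enter. One concrete avenue is the layer-cake identity $\min(\sigma,s)=\int_0^s\mathbf{1}[\sigma>u]\,du$, which turns the truncation case into a comparison of the singular-value counting functions $N_W(u)=\#\{j:\sigma_j(W)>u\}$ of $X$, $Y$, $Z$; the difficulty is that the pointwise counting bound is too weak (Weyl's inequalities $\sigma_{i+j-1}(X)\le\sigma_i(Y)+\sigma_j(Z)$ lose a factor), so what is genuinely needed is the integrated (front-loaded) domination.

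The cleaner route, which I would actually carry out, is the Jordan--Wielandt dilation $\widehat W=\left(\begin{smallmatrix}0&W\\ W^{T}&0\end{smallmatrix}\right)$, whose eigenvalues are $\pm\sigma_j(W)$ together with zeros. With the odd extension $g(x)=\mathrm{sgn}(x)\,f(|x|)$, functional calculus gives $\lambda_j\!\bigl(g(\widehat X)\bigr)=f(\sigma_j(X))$ for the top eigenvalues, and $\widehat X-\widehat Y=\widehat Z$ by linearity, so the whole lemma becomes the eigenvalue Mirsky-type inequality
\[
\sum_{j=1}^k\bigl|\lambda_j(g(\widehat X))-\lambda_j(g(\widehat Y))\bigr|\le\sum_{j=1}^k\lambda_j\bigl(g(\widehat X-\widehat Y)\bigr).
\]
Establishing this weak-majorization for the odd, concave-on-$[0,\infty)$ clamp function $g$ is the single nontrivial step; it is exactly the eigenvalue form of the generalized Mirsky inequality of \cite{A2014}, provable by Lidskii--Wielandt/Ky Fan majorization arguments applied to the dilated symmetric matrices. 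Everything else is the bookkeeping above, and I note that only the first-$k$ partial sums are required here, a special case of the full symmetric-norm statement.
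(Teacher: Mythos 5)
The paper does not prove this lemma at all: it imports it verbatim from Yue--So \cite{YS2016} and Audenaert \cite{A2014}, and your proposal, despite its machinery, ultimately does the same thing, since the ``single nontrivial step'' you defer to \cite{A2014} is the entire content of the statement. In fact your two reductions are superfluous: the lemma as stated \emph{is} the Ky Fan $k$-norm specialization of Audenaert's generalized Mirsky inequality for singular values of arbitrary rectangular matrices, so there is no need to pass through the integral representation $f(x)=\alpha x+\int\min(x,s)\,d\mu(s)$ (which merely reduces the lemma for general $f$ to the lemma for $f=\min(\cdot,s)$, i.e.\ to a special case of itself), nor through the Jordan--Wielandt dilation and the odd extension $g$, which converts the statement into an equivalent eigenvalue formulation rather than reducing it to anything simpler --- and one should note the dilation doubles each singular value, so the $k$-partial-sum bookkeeping needs the $2k$-th Ky Fan norm anyway. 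Two things in your writeup are genuinely valuable and correct: the scalar counterexample $(1,1)\prec_w(2,0)$ showing that $\sum_j\min(\sigma_j,s)$ is not controlled by weak majorization alone, which accurately diagnoses why no elementary subadditivity or Weyl-type argument can close the truncation case; and the observation that only partial sums (not general unitarily invariant norms) are needed here. But your closing claim that the dilated eigenvalue inequality is ``provable by Lidskii--Wielandt/Ky Fan majorization arguments'' understates the difficulty --- your own counterexample shows standard majorization tools are insufficient, and Audenaert's actual proof requires a finer decomposition --- so, stripped of the detours, your proposal is a citation of \cite{A2014}, which is precisely the paper's own treatment.
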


\begin{lemma}\label{RNSPlemma}
Let $0<p\leq t\leq\infty$.
\begin{itemize}
\item[(1)]
If $\mathcal{A}: \mathbb{R}^{m\times n}\rightarrow\mathbb{R}^L$ satisfies the $(\ell_t,\ell_p)$-robust rank null space property of order $r$ for $l_q$ bound with constants $D_1>0$ and $0<\beta_1<1$, then
\begin{align*}
\|Y-X\|_{S_p}^p\leq \frac{2D_1}{1-\beta_1}r^{(1/p-1/t)p}\|\mathcal{A}(Y-X)\|_q^p
+\frac{1+\beta_1}{1-\beta_1}\big(\|Y\|_{S_p}^p-\|X\|_{S_p}^p+2\|X_{-\max(r)}\|_p^p\big).
\end{align*}
\item[(2)]
If $\mathcal{A}: \mathbb{R}^{m\times n}\rightarrow\mathbb{R}^L$ satisfies the $(\ell_t,\ell_p)$-robust rank null space property of order $r$ for Dantzig selector bound with constants $D_2>0$ and $0<\beta_2<1$, then
\begin{align*}
\|Y-X\|_{S_p}^p\leq \frac{2D_2}{1-\beta_2}r^{(1/p-1/t)p}\|\mathcal{A}^*\mathcal{A}(Y-X)\|_{S_\infty}^p
+\frac{1+\beta_2}{1-\beta_2}\big(\|Y\|_{S_p}^p-\|X\|_{S_p}^p+2\|X_{-\max(r)}\|_p^p\big).
\end{align*}
\end{itemize}
\end{lemma}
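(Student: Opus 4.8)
My plan is to prove both parts by a single two-step scheme, writing out part~(1) and indicating the verbatim change for part~(2). Set $R=Y-X$ and split it through its SVD as $R=R_{\max(r)}+R_{-\max(r)}$; since these two pieces come from disjoint singular triples, Lemma~\ref{orthogonaldecomposition}(1) gives the additivity $\|R\|_{S_p}^p=\|R_{\max(r)}\|_{S_p}^p+\|R_{-\max(r)}\|_{S_p}^p$. Abbreviating $a:=\|R_{\max(r)}\|_{S_p}^p$ and $b:=\|R_{-\max(r)}\|_{S_p}^p$, the target is just a bound on $a+b$. The first step produces a relation $a\le E+\beta_1 b$ coming from the null space property, the second step a cone-type relation $b\le a+G$; these two then decouple algebraically.

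For the first (robust-NSP) step, I apply the $(\ell_t,\ell_p)$-robust rank null space property of Definition~\ref{MatrixRNSP} to $R$, obtaining $\|R_{\max(r)}\|_{S_t}^p\le D_1\|\mathcal{A}(R)\|_q^p+\beta_1 r^{-(1/p-1/t)p}\|R_{-\max(r)}\|_{S_p}^p$. Because $R_{\max(r)}$ has rank at most $r$, the elementary finite-dimensional $\ell_p$--$\ell_t$ comparison applied to its (at most $r$) singular values gives $\|R_{\max(r)}\|_{S_p}^p\le r^{(1/p-1/t)p}\|R_{\max(r)}\|_{S_t}^p$; substituting this in yields
$$a\le E+\beta_1 b,\qquad E:=D_1\, r^{(1/p-1/t)p}\|\mathcal{A}(R)\|_q^p .$$
This is where the exponent $r^{(1/p-1/t)p}$ in the final bound is generated.

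The second step is the cone constraint, which I state as the generalization of Lemma~\ref{MatrixConeconstraintinequality} that retains the objective gap: namely $b\le a+G$ with $G:=\|Y\|_{S_p}^p-\|X\|_{S_p}^p+2\|X_{-\max(r)}\|_{S_p}^p$. To prove it I use $Y=X+R$ together with the perturbation inequality Lemma~\ref{perturbationinequality} for the concave map $f(s)=s^p$ (legitimate since $0<p\le 1$, $f(0)=0$): at $k=r$ it gives $\sum_{j=1}^r\sigma_j(X)^p-\sum_{j=1}^r\sigma_j(Y)^p\le\sum_{j=1}^r|\sigma_j(X)^p-\sigma_j(Y)^p|\le\sum_{j=1}^r\sigma_j(R)^p=a$, so $\sum_{j=1}^r\sigma_j(Y)^p\ge\|X_{\max(r)}\|_{S_p}^p-a$. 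Combining this head estimate with the Schatten-$p$ subadditivity $\|A+B\|_{S_p}^p\le\|A\|_{S_p}^p+\|B\|_{S_p}^p$ (and its reverse form) and the orthogonal additivity of Lemma~\ref{orthogonaldecomposition} on the tail, then writing $\|X_{\max(r)}\|_{S_p}^p=\|X\|_{S_p}^p-\|X_{-\max(r)}\|_{S_p}^p$ and rearranging, produces $b\le a+G$. This is the technical heart of the argument: the point is to keep the gap $\|Y\|_{S_p}^p-\|X\|_{S_p}^p$ (the cited lemma discards it under the optimality hypothesis $\|Y\|_{S_p}^p\le\|X\|_{S_p}^p$), and to obtain the sharp rank-$r$ form. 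For the latter the majorization-type Lemma~\ref{perturbationinequality} is essential, since a cruder orthogonal (Recht--Fazel--Parrilo) splitting of $R$ into a part aligned with the top-$r$ singular subspaces and its complement only yields a rank-$2r$ version.

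Finally I combine the two relations. From $b\le a+G$ and $a\le E+\beta_1 b$ I get $b(1-\beta_1)\le E+G$, hence $b\le (E+G)/(1-\beta_1)$ using $0<\beta_1<1$; then
$$\|Y-X\|_{S_p}^p=a+b\le E+(1+\beta_1)b\le \frac{2E}{1-\beta_1}+\frac{1+\beta_1}{1-\beta_1}\,G,$$
which is exactly the claimed bound with constants $2D_1/(1-\beta_1)$ in front of $\|\mathcal{A}(Y-X)\|_q^p$ (after unwinding $E$) and $(1+\beta_1)/(1-\beta_1)$ in front of $G$. For part~(2) the only changes are to invoke the Dantzig-selector robust rank null space property instead, replacing $\|\mathcal{A}(R)\|_q$ by $\|\mathcal{A}^*\mathcal{A}(R)\|_{S_\infty}$ throughout the first step; the cone constraint $b\le a+G$ and the concluding algebra are identical, giving the stated bound with $D_2,\beta_2$. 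I expect the cone constraint of the second step to be the main obstacle, while the first step and the final combination are routine.
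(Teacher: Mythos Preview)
Your proposal is correct and follows essentially the same route as the paper: the additive splitting via Lemma~\ref{orthogonaldecomposition}, the robust null space property bound on the head (your $a\le E+\beta_1 b$, the paper's (\ref{e5.1})--(\ref{e5.2})), the cone constraint $b\le a+G$ derived through the perturbation inequality of Lemma~\ref{perturbationinequality} with $f(s)=s^p$, and the same concluding algebra; your packaging of the two inequalities as a linear system in $a,b$ is a slightly cleaner reformulation of exactly what the paper does. The only place to be careful is the tail part of the cone constraint, where the paper also appeals to Lemma~\ref{perturbationinequality} (applied to the pair $(R,Y)$ as well as $(X,Y)$) rather than to plain subadditivity---your sketch is a bit vague there, but it is the same step and the same tool.
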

\begin{proof}
Our proof follows the idea of \cite[Section 4.3]{FR2013}. Let $R=Y-X$. Lemma \ref{orthogonaldecomposition} states that
\begin{align*}
\|R\|_{S_p}^p=\|R_{\max(r)}\|_{S_p}^p+\|R_{-\max(r)}\|_{S_p}^p\leq r^{(1/p-1/t)p}\|R_{\max(r)}\|_{S_t}^p+\|R_{-\max(r)}\|_{S_p}^p.
\end{align*}

Applying the $(l_t,l_p)$-robust rank null space property of $\mathcal{A}$, we obtain that
\begin{align}\label{e5.1}
\|R_{\max(r)}\|_{S_t}^p\leq D_1\|\mathcal{A}(R)\|_q^p + \beta_1\frac{\|R_{-\max(r)}\|_{S_p}^p}{r^{(1/p-1/t)p}}.
\end{align}
Therefore, we have
\begin{align}\label{e5.2}
\|R\|_{S_p}^p\leq D_1r^{(1/p-1/t)p}\|\mathcal{A}(R)\|_q^p+(1+\beta_1)\|R_{-\max(r)}\|_{S_p}^p.
\end{align}

Next, we estimate the upper bound of $\|R_{-\max(r)}\|_{S_p}^p$. Since $x\rightarrow|x|^p$ is concave on $\mathbb{R}_+$ for any $p\in(0, 1]$, by taking $f(\cdot)=(\cdot)^p$ in Lemma \ref{perturbationinequality}, we immediately obtain
\begin{align*}
\|Y\|_{S_p}^p&=\|\sigma(Y)_{T}\|_p^p+\|\sigma(Y)_{T^c}\|_p^p\\
&\geq\|\sigma(X)_{T}\|_p^p-\|\sigma(R)_{T}\|_p^p+\|\sigma(R)_{T^c}\|_p^p-\|\sigma(X)_{T^c}\|_p^p,
\end{align*}
where $T=\text{supp}(\sigma(X)_{\max(r)})$.
By rearranging the terms in the above inequality, we get
\begin{align*}
\|R_{-\max(r)}\|_{S_p}^p&\leq\|\sigma(R)_{T^c}\|_p^p
\leq\|Y\|_{S_p}^p-\|\sigma(X)_{T}\|_p^p+\|\sigma(X)_{T^c}\|_p^p+\|\sigma(R)_{T}\|_p^p\\
&\leq\big(\|Y\|_{S_p}^p-\|X\|_{S_p}^p+2\|X_{-\max(r)}\|_p^p\big)+\|R_{\max(r)}\|_{S_p}^p\\
&\leq\big(\|Y\|_{S_p}^p-\|X\|_{S_p}^p+2\|X_{-\max(r)}\|_p^p\big)+r^{(1/p-1/t)p}\|R_{\max(r)}\|_{S_t}^p\\
&\leq\big(\|Y\|_{S_p}^p-\|X\|_{S_p}^p+2\|X_{-\max(r)}\|_p^p\big)
+D_1r^{(1/p-1/t)p}\|\mathcal{A}(R)\|_q^p+\beta_1\|R_{-\max(r)}\|_{S_p}^p,
\end{align*}
where the last line follows from (\ref{e5.1}). Note that $0<\beta_1<1$. Therefore,
\begin{align}\label{e5.3}
\|R_{-\max(r)}\|_{S_p}^p\leq\frac{1}{1-\beta_1}\big(\|Y\|_{S_p}^p-\|X\|_{S_p}^p+2\|X_{-\max(r)}\|_p^p\big)
+\frac{D_1}{1-\beta_1}r^{(1/p-1/t)p}\|\mathcal{A}(R)\|_q^p.
\end{align}

And substituting (\ref{e5.3}) into (\ref{e5.2}), we obtain
\begin{align*}
\|R\|_{S_p}^p&\leq (1+\beta_1)\bigg(\frac{1}{1-\beta_1}\big(\|Y\|_{S_p}^p-\|X\|_{S_p}^p+2\|X_{-\max(r)}\|_p^p\big)
+\frac{D_1}{1-\beta_1}r^{(1/p-1/t)p}\|\mathcal{A}(R)\|_q^p\bigg)\\
&\hspace*{12pt}+D_1r^{(1/p-1/t)p}\|\mathcal{A}(R)\|_q^p\\
&=\frac{2D_1}{1-\beta_1}r^{(1/p-1/t)p}\|\mathcal{A}(R)\|_q^p
+\frac{1+\beta_1}{1-\beta_1}\big(\|Y\|_{S_p}^p-\|X\|_{S_p}^p+2\|X_{-\max(r)}\|_p^p\big),
\end{align*}
which finishes the proof of item (1).

If $\mathcal{A}$ satisfies the $(\ell_t,\ell_p)$-robust rank null space property for Dantzig selector bound, we replace (\ref{e5.1}) with
\begin{align}\label{e5.4}
\|R_{\max(r)}\|_{S_t}^p\leq D_2\|\mathcal{A}^*\mathcal{A}(R)\|_{S_\infty}^p + \beta_2\frac{\|R_{-\max(r)}\|_{S_p}^p}{r^{(1/p-1/t)p}}.
\end{align}
Then by a similar proof as item (1), we get
\begin{align*}
\|R\|_{S_p}^p
&=\frac{2D_2}{1-\beta_2}r^{(1/p-1/t)p}\|\mathcal{A}^*\mathcal{A}(R)\|_{S_\infty}^p
+\frac{1+\beta_2}{1-\beta_2}\big(\|Y\|_{S_p}^p-\|X\|_{S_p}^p+2\|X_{-\max(r)}\|_p^p\big),
\end{align*}
which finishes the proof of item (2).
\end{proof}

\begin{proof}[Proof of Theorem \ref{RNSP-stabelrecovery}]
Let $R=Y-X$, then we have
\begin{align}\label{e5.5}
\|Y-X\|_{S_t}^p=\big(\|R_{\max(r)}\|_{S_t}^t+\|R_{-\max(r)}\|_{S_t}^t\big)^{p/t}
\leq\|R_{\max(r)}\|_{S_t}^p+\|R_{-\max(r)}\|_{S_t}^p,
\end{align}
where the inequality follows from $0<p\leq t$. By the $(\ell_t,\ell_p)$ robust rank null space property of $\mathcal{A}$ with respect to $l_p$,
\begin{align}\label{e5.6}
\|R_{\max(r)}\|_{S_t}^p\leq D_1\|\mathcal{A}(R)\|_q^p+\beta_1\frac{\|R_{-\max(r)}\|_{S_p}^p}{r^{(1/p-1/t)p}}
\leq D_1\|\mathcal{A}(R)\|_q^p+\beta_1\frac{\|R\|_{S_p}^p}{r^{(1/p-1/t)p}}.
\end{align}
Next we estimate $\|R_{-\max(r)}\|_{S_t}^p$. Lemma \ref{Stechkinbound} implies that
\begin{align}\label{e5.7}
\|R_{-\max(r)}\|_{S_t}^p\leq\bigg(\frac{\|R\|_{S_p}}{r^{1/p-1/t}}\bigg)^p.
\end{align}
Then substituting (\ref{e5.6}) and (\ref{e5.7}) into (\ref{e5.5}) yields
\begin{align}\label{e5.8}
\|Y-X\|_{S_t}^p\leq D_1\|\mathcal{A}(R)\|_q^p+
(1+\beta_1)\frac{\|R\|_{S_p}^p}{r^{(1/p-1/t)p}}.
\end{align}
An application of Lemma \ref{RNSPlemma}, we obtain
\begin{align*}
\|Y-X\|_{S_t}^p&\leq
\frac{(1+\beta_1)}{r^{(1/p-1/t)p}}\bigg(\frac{2D_1}{1-\beta_1}r^{(1/p-1/t)p}\|\mathcal{A}(R)\|_q^p
+\frac{1+\beta_1}{1-\beta_1}\big(\|Y\|_{S_p}^p-\|X\|_{S_p}^p+2\|X_{-\max(r)}\|_p^p\big)\bigg)\\
&\hspace*{12pt}+D_1\|\mathcal{A}(R)\|_q^p\\
&=\frac{(1+\beta_1)^2}{1-\beta_1} \frac{1}{r^{(1/p-1/t)p}}\big(\|Y\|_{S_p}^p-\|X\|_{S_p}^p+2\|X_{-\max(r)}\|_p^p\big)
+\frac{(3+\beta_1)D_1}{1-\beta_1}\|\mathcal{A}(R)\|_q^p.
\end{align*}

And if $\mathcal{A}$ satisfies the $(\ell_t,\ell_p)$-robust rank null space property for Dantzig selector bound, then by a similar proof, we have
\begin{align*}
\|Y-X\|_{S_t}^p
&\leq\frac{(1+\beta_2)^2}{1-\beta_2} \frac{1}{r^{(1/p-1/t)p}}\big(\|Y\|_{S_p}^p-\|X\|_{S_p}^p+2\|X_{-\max(r)}\|_p^p\big)
+\frac{(3+\beta_2)D_2}{1-\beta_2}\|\mathcal{A}^*\mathcal{A}(R)\|_{S_\infty}^p.
\end{align*}
\end{proof}

Now, we begin to prove Theorem \ref{StableDensityLeastq}.
\begin{proof}[Proof of Theorem \ref{StableDensityLeastq}]
Note that $\mathcal{A}$ satisfies $\ell_q$-RUB of order $(k+1)r$ with $C_2/{C_1}<k^{(1/p-1/2)q}$ for any $0<p\leq q\leq 1$. Therefore Theorem \ref{RUB-RNSP} implies the validity of the
$(\ell_2,\ell_q)$ robust rank null space property for $\ell_p$ bound with parameters $D_1= (C_1L)^{-p/q}>0$ and $0<\beta_1=\big(C_2k^{-(1/p-1/2)q}/{C_1}\big)^{p/q}<1$.
Then Theorem \ref{RNSP-stabelrecovery} leads to us that
\begin{align}\label{e5.9}
\|\hat{X}-X\|_{S_2}^p
&\leq\frac{(1+\beta_1)^2}{1-\beta_1} \frac{1}{r^{(1/p-1/2)p}}\big(\|\hat{X}\|_{S_p}^p-\|X\|_{S_p}^p+2\|X_{-\max(r)}\|_p^p\big)\nonumber\\
&\hspace*{12pt}+\frac{(3+\beta_1)D_1}{1-\beta_1}\|\mathcal{A}(\hat{X}-X)\|_q^p.
\end{align}

We consider recovering density matrix. This property assures
\begin{align}\label{e5.10}
\|\hat{X}\|_{S_p}^p-\|X\|_{S_p}^p
=\text{tr}\big((\hat{X}^{T}\hat{X})^{p/2}\big)-\text{tr}\big((X^{T}X)^{p/2}\big)=0.
\end{align}
And
\begin{align}\label{e5.11}
\|\mathcal{A}(\hat{X}-X)\|_q^p\leq\big(\|\mathcal{A}(\hat{X})-b\|_q^q +\|b-\mathcal{A}(X)\|_q^q\big)^{p/q}
\leq\big(\|z\|_q^q+\|z\|_q^q\big)^{p/q}=2^{p/q}\|z\|_q^p.
\end{align}
Combining (\ref{e5.10}), (\ref{e5.11}) with (\ref{e5.9}), we get
\begin{align*}
\|\hat{X}-X\|_{S_2}^p
&\leq\frac{2(1+\beta_1)^2}{1-\beta_1} \frac{\|X_{-\max(r)}\|_p^p}{r^{(1/p-1/2)p}}
+\frac{2^{p/q}(3+\beta_1)D_1}{1-\beta_1}\|z\|_q^p.
\end{align*}
\end{proof}

\begin{remark}
Kabanava, Kueng, Rauhut et.al. \cite{KKRT2016} considered the least-$q$ minimization for $q\geq1$. Here, we extend it to nonconvex case,i.e., least-$q$ minimization for $0<q<1$.
\end{remark}

\subsection{Matrix Recovery for ROP from Gauss Distribution via LAD \label{s4.2}}
\quad

This subsection aims to show that recovering low-rank matrices through ROP $\mathcal{A}$ from Gaussian distribution $\mathcal{P}$ is still possible. First, we recall Gaussian random variable $\mathcal{A}$.

A linear map $\mathcal{A}: \mathbb{R}^{m\times n}\rightarrow\mathbb{R}^{L}$ is called ROP from distribution $\mathcal{P}$ if $\mathcal{A}$ is defined as in (\ref{ROP2}) with all the entries of $\beta^j$ and $\gamma^j$ independently drawn from the distribution $\mathcal{P}$. The following Theorem \ref{GaussStableGeneralLAD} shows that recovering low-rank matrices through ROP $\mathcal{A}$ from the standard normal distribution $\mathcal{P}$ via least 1 minimization (\ref{DensityLeastq}) is possible.

\begin{theorem}\label{GaussStableGeneralLAD}
Let $r\in\mathbb{Z}_+$.
Suppose that $\mathcal{A}$ is a ROP from the standard normal distribution. Let $\hat{X}$ be the solution of \begin{align*}
\min_{Y\in\mathbb{R}^{m\times n}}~\|\mathcal{A}(Y)-b\|_1~~\text{subject~ to}~~\Big(\text{tr}\big((Y^{T}Y)^{p/2}\big)\Big)^{1/p}=1
\end{align*}
where $b=\mathcal{A}(X)+z$. Then there exist uniform constants $c_1$, $c_2$, $C_3$ and $C_4$ such that,  whenever $L\geq c_1r(m+n)$, $\hat{X}$ obeys
\begin{align*}
\|\hat{X}-X\|_{S_2}^p\leq
C_3 \frac{\|X_{-\max(r)}\|_p^p}{r^{(1/p-1/2)p}}
+\frac{C_4}{L^p}\|z\|_1^p
\end{align*}
with probability at least $1-e^{-c_2 L}$.
\end{theorem}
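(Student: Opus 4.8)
The plan is to reduce this probabilistic statement to the deterministic guarantee of Theorem \ref{StableDensityLeastq} by showing that a Gaussian ROP satisfies the $\ell_1$-RUB condition of Definition \ref{lqRUB} with high probability. Since $q=1$ here, $\ell_1$-RUB coincides exactly with the RUB condition of Cai and Zhang \cite{CZ2015}, so the essential task is to establish, at order $(k+1)r$, the existence of constants $C_1,C_2$ with controlled ratio for which
\[
C_1\|X\|_{S_2}\le \frac{1}{L}\|\mathcal{A}(X)\|_1\le C_2\|X\|_{S_2}
\]
holds uniformly over all rank-$(k+1)r$ matrices $X$, with probability at least $1-e^{-c_2 L}$ once $L\ge c_1(k+1)r(m+n)$. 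On the event that this holds, the result follows by invoking Theorem \ref{StableDensityLeastq} with $q=1$.

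First I would fix a single matrix $X$ of rank at most $(k+1)r$ with $\|X\|_{S_2}=1$ and study the i.i.d.\ summands $|[\mathcal{A}(X)]_j|=|(\beta^j)^{T}X\gamma^j|$. Writing the singular value decomposition $X=U\Sigma V^{T}$ and using rotation invariance of the standard normal law, each summand has the form $|\sum_i\sigma_i g_i h_i|$ with $g_i,h_i$ independent standard normals; its second moment equals $\sum_i\sigma_i^2=\|X\|_{S_2}^2$, and a two-sided bound $\mu_1\|X\|_{S_2}\le\mathbb{E}|[\mathcal{A}(X)]_j|\le\mu_2\|X\|_{S_2}$ with absolute constants $0<\mu_1\le\mu_2\le1$ follows (the upper estimate from Cauchy--Schwarz, the lower from a Paley--Zygmund/hypercontractivity argument). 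Because each summand is sub-exponential, a Bernstein inequality shows that $\frac{1}{L}\|\mathcal{A}(X)\|_1$ concentrates around its mean within any prescribed relative error $\varepsilon$, with failure probability at most $2e^{-c\varepsilon^2 L}$.

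Next I would upgrade this pointwise concentration to a uniform statement by a standard $\varepsilon$-net argument over the rank-$(k+1)r$ matrices on the Schatten-$2$ unit sphere, whose covering number in the Frobenius metric is bounded by $\exp(C(k+1)r(m+n))$ (cf.\ \cite{CZ2015,RFP2010}). Combining the pointwise bound with a union bound over the net, and transferring from net points to arbitrary matrices via the already-established upper bound to control the $\ell_1$ perturbation, yields the two-sided inequality uniformly; the hypothesis $L\ge c_1(k+1)r(m+n)$ is precisely what makes the exponent $c\varepsilon^2 L$ dominate the net entropy, leaving failure probability $e^{-c_2 L}$. This produces $C_1=\mu_1-\varepsilon$ and $C_2=\mu_2+\varepsilon$. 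Since $k\mapsto k^{(1/p-1/2)}$ is unbounded for $0<p\le1$, I can fix $k$ large enough (with $kr\in\mathbb{Z}_+$) that $k^{(1/p-1/2)}>\mu_2/\mu_1$, then take $\varepsilon$ small so that $C_2/C_1<k^{(1/p-1/2)}$; this fixes $k$ as a constant depending only on $p$, so that the requirement $L\ge c_1(k+1)r(m+n)$ absorbs into $L\ge c_1 r(m+n)$.

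On this event the hypotheses of Theorem \ref{StableDensityLeastq} with $q=1$ are met, giving directly
\begin{align*}
\|\hat{X}-X\|_{S_2}^p\le \frac{2(1+\beta_1)^2}{1-\beta_1}\frac{\|X_{-\max(r)}\|_p^p}{r^{(1/p-1/2)p}}+\frac{2^{p}(3+\beta_1)D_1}{1-\beta_1}\|z\|_1^p,
\end{align*}
with $D_1=(C_1L)^{-p}$, so the noise term takes the form $C_4L^{-p}\|z\|_1^p$; setting $C_3=2(1+\beta_1)^2/(1-\beta_1)$ and $C_4=2^{p}(3+\beta_1)C_1^{-p}/(1-\beta_1)$ completes the argument. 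The main obstacle is the uniform lower bound in the RUB: unlike the $\ell_2$-based RIP, the functional $X\mapsto\frac{1}{L}\|\mathcal{A}(X)\|_1$ is only Lipschitz and individual measurements can be atypically small, so the delicate point is ensuring the lower bound survives the passage from the net to all rank-$(k+1)r$ matrices, which is exactly where the upper RUB bound and the sub-exponential concentration must be combined carefully, as in the sub-Gaussian analysis of \cite{CZ2015}.
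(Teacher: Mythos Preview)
Your proposal is correct and follows essentially the same route as the paper: establish that a Gaussian ROP satisfies the $\ell_1$-RUB (i.e., Cai--Zhang's RUB) of order $(k+1)r$ with high probability when $L\gtrsim r(m+n)$, and then invoke Theorem~\ref{StableDensityLeastq} with $q=1$. The only difference is that you sketch a self-contained proof of the RUB via pointwise sub-exponential concentration plus an $\varepsilon$-net over low-rank matrices, whereas the paper simply cites \cite[Theorem~2.2]{CZ2015} as a black box, taking the explicit constants $C_1=0.32$, $C_2=1.01$ at $k=10$ (so that $C_2/C_1<10^{1/2}\le 10^{1/p-1/2}$ for all $0<p\le 1$, making $k$ independent of $p$). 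Your derivation is exactly how that cited theorem is proved, so the two arguments coincide; the paper's shortcut just avoids repeating the covering-number computation and yields a $p$-independent choice of $k$ and hence of $c_1,c_2$.
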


\begin{proof}
Suppose $k=10$, by \cite[Theorem 2.2]{CZ2015}, we can find a uniform
constant $c_0$ and $c_2$ such that if $L\geq c_0(k+1)r(m+n)$, $\mathcal{A}$ satisfies RUB of order $11r$
and constants $C_1=0.32,~C_2 = 1.01$ with probability at least $1-e^{-c_2 L}$. Hence, we
have $c_1=11c_0$ and $c_2$ such that if $L\geq c_1r(m+n)$, $\mathcal{A}$ satisfies RUB of order $11r$
and constants satisfying $C_2/C_1<10^{1/p-1/2}$ with probability at least $1-e^{-c_2 L}$.
Then it follows from Theorem \ref{StableDensityLeastq} that
\begin{align*}
\|\hat{X}-X\|_{S_2}^p
&\leq\frac{2(1+\beta_1)^2}{1-\beta_1} \frac{\|X_{-\max(r)}\|_p^p}{r^{(1/p-1/2)p}}
+\frac{2^{p/q}(3+\beta_1)D_1}{1-\beta_1}\|z\|_q^p\\
&=C_3 \frac{\|X_{-\max(r)}\|_p^p}{r^{(1/p-1/2)p}}
+\frac{C_4}{L^p}\|z\|_1^p
\end{align*}
holds with probability at least $1-e^{-c_2 L}$, where $C_3,C_4$ are constants depending only on $p$.
\end{proof}

\begin{corollary}
For the PhaseLift introduced in \cite{CESV2013-2015,CSV2013}, we consider
\begin{align}\label{PhaseLift}
\min_{X\in\mathbb{S}^{m}}~\|\tilde{\mathcal{A}}(X)-\tilde{b}\|_1~~\text{subject~ to}~~X\succeq O~\text{and}~\text{tr}(X)=1.
\end{align}
Then Theorem \ref{GaussStableGeneralLAD} implies that ROP with $L\geq Cm$ random projections from Gaussian distribution is sufficient to ensure the stable recovery of all symmetric rank-1 matrix $X=xx^{T}$ with high probability.
\end{corollary}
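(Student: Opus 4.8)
The plan is to read the corollary off as the symmetric, rank-one specialization of Theorem \ref{GaussStableGeneralLAD} at the parameters $r=1$ and $p=q=1$, where the Gaussian design is a symmetric rank-one projection so that one works throughout with the debiased map $\tilde{\mathcal{A}}$ and data $\tilde b=\tilde{\mathcal{A}}(X)+\tilde z$. First I would observe that the PhaseLift program (\ref{PhaseLift}) is precisely the SROP least $1$ minimization (\ref{SROP-DensityLeastq}) at $p=1$ once attention is restricted to the positive semidefinite cone. Indeed, for any $Y\succeq O$ the singular values coincide with the eigenvalues, so $\text{tr}(Y)=\sum_i\lambda_i(Y)=\|Y\|_{S_1}=\text{tr}\big((Y^{T}Y)^{1/2}\big)$; hence on the cone the PhaseLift constraint $\text{tr}(Y)=1$ is exactly the density constraint at $p=1$. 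The additional requirement $Y\succeq O$ only shrinks the feasible set, and the target $X=xx^{T}$, normalized so that $\text{tr}(X)=\|x\|_2^2=1$, is itself positive semidefinite with unit trace and hence feasible.

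Next I would verify that the restriction to the PSD cone does not disturb the proof of Theorem \ref{StableDensityLeastq}. That argument uses feasibility only through two scalar facts: the identity (\ref{e5.10}), namely $\|\hat X\|_{S_p}^p-\|X\|_{S_p}^p=0$, and the optimality inequality (\ref{e5.11}), namely $\|\tilde{\mathcal{A}}(\hat X)-\tilde b\|_1\le\|\tilde{\mathcal{A}}(X)-\tilde b\|_1=\|\tilde z\|_1$. With $p=1$ the first holds because both $X$ and the PhaseLift minimizer $\hat X$ lie on the cone with unit trace, so their nuclear norms both equal $1$; the second holds because $\hat X$ minimizes $\|\tilde{\mathcal{A}}(\cdot)-\tilde b\|_1$ over a set containing $X$. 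Consequently the robust rank null space machinery (Lemma \ref{RNSPlemma} and Theorem \ref{RNSP-stabelrecovery}) applies verbatim. It is exactly here that $p=1$ is essential: for $p<1$ one would only get $\|\hat X\|_{S_p}^p\ge\|X\|_{S_p}^p$ (since the eigenvalues of $\hat X$ are nonnegative and sum to $1$), a term that enters the estimate with the wrong sign and cannot be controlled by a constant.

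Then I would supply the Gaussian SROP $\ell_1$-RUB exactly as in the proof of Theorem \ref{GaussStableGeneralLAD}: taking $k=10$, \cite[Theorem 2.2]{CZ2015} furnishes constants $c_0,c_2$ so that once $\lfloor L/2\rfloor\ge c_0\cdot 11\,r\,m$ the debiased map $\tilde{\mathcal{A}}$ obeys $\ell_1$-RUB of order $11r$ with $C_2/C_1<10^{1/p-1/2}$ with probability at least $1-e^{-c_2L}$; for $r=1$ and symmetric $m\times m$ matrices this reads $L\ge Cm$. Feeding this into the SROP counterpart of Theorem \ref{StableDensityLeastq} (the same proof with $\tilde{\mathcal{A}},\tilde b,\tilde z$ in place of $\mathcal{A},b,z$) and using that $X=xx^{T}$ has rank $1$, so that choosing $r=1$ gives $\|X_{-\max(1)}\|_1=0$ and annihilates the approximation term, leaves only the noise contribution. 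Since $\|\tilde z\|_1\le\|z\|_1$, this yields $\|\hat X-X\|_{S_2}\le (C_4/L)\|z\|_1$ with probability at least $1-e^{-c_2L}$, which is the asserted stable recovery from $L\ge Cm$ Gaussian rank-one projections.

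The only genuine obstacle is the constraint-set mismatch between PhaseLift and Theorem \ref{GaussStableGeneralLAD}: PhaseLift minimizes over the PSD cone with unit trace, whereas the theorem minimizes over the full density surface $\|Y\|_{S_1}=1$. The main work, sketched above, is to confirm that the stability argument reaches through the smaller feasible set, relying only on the feasibility of the ground truth $X$ and on the coincidence of trace and nuclear norm on the cone when $p=1$. Everything else is a direct substitution of the parameters $r=1$ and $p=q=1$.
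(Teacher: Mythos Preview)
Your proposal is correct and, in fact, more careful than what the paper provides: the paper offers no proof of this corollary at all, treating it as an immediate specialization of Theorem~\ref{GaussStableGeneralLAD} to $r=1$ and $n=m$ (whence $L\ge c_1 r(m+n)=2c_1 m$). You go further by identifying and resolving the constraint-set mismatch between the PhaseLift feasible set $\{Y\succeq O,\ \mathrm{tr}(Y)=1\}$ and the density surface $\{\|Y\|_{S_p}=1\}$ used in Theorem~\ref{GaussStableGeneralLAD}. Your observation that the proof of Theorem~\ref{StableDensityLeastq} uses feasibility only through (\ref{e5.10}) and (\ref{e5.11}), and that both survive under the smaller PSD-constrained set when $p=1$ because $\mathrm{tr}(Y)=\|Y\|_{S_1}$ on the cone, is exactly the right justification; the paper simply does not spell this out. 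Your remark that $p=1$ is essential for (\ref{e5.10}) to hold on the PSD trace-one set is also a valid point that the paper does not make explicit.
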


\begin{remark}\label{Openproblem-lqRUB}
Note that the condition of Theorem \ref{StableDensityLeastq} is $\ell_q$-RUB for $0<p\leq q\leq 1$. Therefore, if we can show that ROP $\mathcal{A}$ from some distribution satisfies the $\ell_q$-RUB condition for $0<q<1$ with high probability, then we can get a better conclusion.
\end{remark}

\section{Conclusions and Discussion \label{s5}}
\hskip\parindent

In this paper, we consider the matrix recovery from rank-one projection
measurements via nonconvex minimization.

First in Section \ref{s2}, after introducing the $\ell_q$-RUB (Definition \ref{lqRUB}), we consider exact and stable recovery of rank-$r$ matrices from rank-one projection measurements via Schatten-$p$ minimization (\ref{MatrixSp}). And we show that $\ell_q$-RUB condition of order $(k+1)r$ with $C_2/{C_1}<k^{(1/p-1/2)q}$ for some $k>1$ with $kr\in\mathbb{Z}_+$ and $0<p\leq q\leq 1$ is sufficient for exact recovery of all rank-$r$ matrices (Theorem \ref{ROP-lp-Exact}) in Subsection \ref{s2.2}. Subsection \ref{s2.3} considers extensions to the noisy case. We get stable recovery via Schatten-$p$ model (\ref{MatrixSp}) with $\mathcal{B}=\mathcal{B}^{\ell_q}(\eta_1)\cap\mathcal{B}^{DS}(\eta_2)$ (Theorem \ref{ROP-lp-lq-DS}), by combining $\mathcal{B}=\mathcal{B}^{\ell_q}(\eta_1)$ (Proposition \ref{ROP-lp-lq}) with $\mathcal{B}=\mathcal{B}^{DS}(\eta_2)$ (Proposition \ref{ROP-lp-DS}). And our condition is also sufficient for symmetric rank-one projections (Corollary \ref{SROP-lp-Exact} and Corollary \ref{SROP-lp-lq-DS}).

By the proofs of Theorems \ref{ROP-lp-lq} and \ref{ROP-lp-DS}, we also obtain that the robust rank null sapce property of order $r$ can be deduced from the $\ell_q$-RUB of order $(k+1)r$ for some $k>1$ (Theorem \ref{RUB-RNSP}) in Section \ref{s3}.

And in Section \ref{s4}, we consider the stable recovery via the least $q$ minimization (\ref{DensityLeastq}) for $0<q\leq 1$ under $\ell_q$-RUB condition. We show that our condition in Theorem \ref{ROP-lp-Exact} is still sufficient (Theorem \ref{StableDensityLeastq}). And we also consider recovering matrix for ROP $\mathcal{A}$ from Gaussian distribution via least $1$ minimization, and we show that with high probability, ROP with $L\geq Cr(m+n)$ random projections from Gaussian distribution is sufficient to ensure stable recovery of all rank-$r$ matrices (Theorem \ref{GaussStableGeneralLAD}).

However, our $(k+1)r$~$(k>1)$ order RUB condition for $q=1$ (Theorem \ref{ROP-lp-Exact}) is a litter stronger than the $kr$ order RUB condition in \cite{CZ2015}. Therefore, this condition may be improved further (Remark \ref{Openproblem1}). And note that Theorem \ref{StableDensityLeastq} show that $\ell_q$-RUB condition can guarantee the stable recovery via least $q$ minimization (\ref{DensityLeastq}). Therefore, finding a ROP $\mathcal{A}$ from some distribution satisfing the $\ell_q$-RUB condition for $0<q<1$ is one direction of our future research (Remark \ref{Openproblem-lqRUB}).


\textbf{Acknowledgement}: Wengu Chen is supported by National Natural Science Foundation of China (No. 11371183).




\end{document}